\documentclass[11pt]{article}
\usepackage[a4paper, margin=1in]{geometry}

\usepackage{algorithm}
\usepackage{algpseudocode}
\usepackage{hyperref}
\usepackage{booktabs}
\usepackage{multirow}
\usepackage{enumitem}

\usepackage{tabularx}

\usepackage{pgfplots}
\pgfplotsset{compat=1.18}

\definecolor{acmDarkBlue}{RGB}{0,114,178}
\definecolor{acmGreen}{RGB}{0,158,115}
\definecolor{acmPink}{RGB}{204,121,167}
\definecolor{acmOrange}{RGB}{213,94,0}
\definecolor{acmYellow}{RGB}{240,228,66}
\definecolor{acmLightBlue}{RGB}{86,180,233}

\pgfplotsset{
    cycle list={
        {acmDarkBlue, mark=*},
        {acmGreen, mark=square*},
        {acmPink, mark=triangle*},
        {acmOrange, mark=diamond*},
        {acmYellow, mark=o},
        {acmLightBlue, mark=star}
    }
}

\newcommand*\circled[1]{\tikz[baseline=(char.base)]{
            \node[shape=circle,draw,inner sep=1.5pt] (char) {#1};}}
\newcommand*\dottedcircled[1]{\tikz[baseline=(char.base)]{
            \node[shape=circle,draw,dotted,inner sep=1.5pt] (char) {#1};}}

\newcommand\doubleplus{+\kern-2ex+\kern0.8ex}

\usetikzlibrary{arrows,fit,positioning,shapes,chains,fit,shapes,calc,backgrounds}

\usepackage{graphicx}
\graphicspath{ {./images/} }

\usepackage{tcolorbox}

\usepackage{amssymb,amsmath,amsthm}

\usepackage{subcaption}

\usepackage{setspace}

\usepackage{booktabs}
\usepackage{arydshln} 
\usepackage{makecell}

\usepackage{thmtools,thm-restate}
\usepackage{url}

\usepackage[numbers]{natbib}

\newtcolorbox{problemBox}{
  colback=white,
  colframe=black,
  boxrule=0.5pt,
  arc=2pt,
  left=4pt,
  right=4pt,
  top=4pt,
  bottom=4pt
}

\usepackage{xcolor, soul}

\usepackage{caption}

\newtheorem{definition}{Definition}[section]
\newtheorem{theorem}[definition]{Theorem}
\newtheorem{corollary}[definition]{Corollary}

\newtheorem{example}[definition]{Example}

\newtheorem{prop}[definition]{Proposition}

\usepackage{authblk}
\usepackage{orcidlink}



\begin{document}

\title{Near-Feasible Stable Matchings: Incentives and Optimality\thanks{An extended abstract of this paper appeared in the Proceedings of AAMAS 2026 \cite{aamasfixtures}.}}
\author[ ]{Frederik Glitzner \orcidlink{0009-0002-2815-6368}}
\affil[ ]{School of Computing Science, University of Glasgow, Glasgow G12 8QQ, UK}
\affil[ ]{\normalfont \texttt{f.glitzner.1@research.gla.ac.uk}}
\date{}

\maketitle

\begin{abstract}
    Stable matching is a fundamental area with many practical applications, such as centralised clearinghouses for school choice or job markets. Recent work has introduced the paradigm of near-feasibility in capacitated matching settings, where agent capacities are slightly modified to ensure the existence of desirable outcomes. While useful when no stable matching exists, or some agents are left unmatched, it has not previously been investigated whether near-feasible stable matchings satisfy desirable properties with regard to their stability in the original instance. Furthermore, prior works leave open deviation incentive issues that arise when the centralised authority modifies agents' capacities.
    
    We consider these issues in the {\sc Stable Fixtures} problem model, which generalises many classical models through non-bipartite preferences and capacitated agents. We develop a formal framework to analyse and quantify agent incentives to adhere to computed matchings. Then, we embed near-feasible stable matchings in this framework and study the trade-offs between instability, capacity modifications, and computational complexity. We prove that capacity modifications can be simultaneously optimal at individual and aggregate levels, and provide efficient algorithms to compute them. We show that different modification strategies significantly affect stability, and establish that minimal modifications and minimal deviation incentives are compatible and efficiently computable under general conditions. Finally, we provide exact algorithms and experimental results for tractable and intractable versions of these problems.
\end{abstract}

\paragraph{Keywords} Stable Matching, Near-Feasibility, Almost-Stability, Stable Fixtures

\section{Introduction}
\label{sec:intro}

Stable matching theory lies at the heart of centralised matching and allocation mechanisms for multi-agent systems \cite{matchup}. The classical {\sc Stable Marriage} problem deals with two equally-sized and non-overlapping sets of agents (which we will refer to as a \emph{bipartition}), which are traditionally referred to as men and women (but could also be students and projects, for example), and their strict ordinal preference relations over agents from the set that they are not a part of (i.e., their potential partners). A solution to the problem is a matching (a pairing among the agents) such that there does not exist a \emph{blocking pair}, i.e., a pair of agents that strictly prefer each other to their assigned partners (either or both of which might be none). This is important, as a blocking pair gives agents an incentive for decentralised coordination and deviation, which might cause unravelling. We refer to a matching free of blocking pairs as \emph{stable}, and as \emph{unstable} otherwise. Gale and Shapley showed that, in this model, a stable matching always exists and can be found efficiently using the celebrated Deferred Acceptance mechanism \cite{gale_shapley}.

Two important extensions of the {\sc Stable Marriage} problem involve the existence of \emph{capacitated agents}, i.e., agents with capacities higher than 1 (where the capacity indicates an upper limit on the number of matches that the agent can be a part of), and the possibility of \emph{non-bipartite preferences}, i.e., the absence of a bipartition (for example, men and women or students and projects). One problem model that permits both of these extensions simultaneously is given by the {\sc Stable Fixtures} problem ({\sc sf}). This general many-to-many matching model captures a wide range of real-world multi-agent interactions: from assigning students to projects and residents to hospitals, to pairing competitors in tournaments and participants in kidney exchanges \cite{matchup}. However, unlike bipartite matching settings, these environments often lack desirable structural guarantees: stable outcomes may not exist \cite{gusfield89}. This motivates the search for alternative solution concepts that maintain minimal deviation incentives and desirable stability properties in practice.

Two prevalent directions for alternative solutions are \emph{almost-stability} and \emph{near-feasibility}, where the aim with the former is to find a matching with a minimum number of blocking pairs (or a minimum number of blocking agents or a minimax number of blocking pairs per agent) \cite{abraham06,biro_sm_10,chen17,glitznermanloveasaamas26}, to minimise the likelihood of the matching being undermined \cite{Eriksson2008,matchup}. The aim with near-feasibility is to find a new instance in which few agents' capacities are modified in such a way that the resulting new instance guarantees the existence of one or more stable matchings \cite{nguyen18}. From a multi-agent perspective, these notions reflect two distinct approaches to managing instability: either tolerate limited incentives to deviate (almost-stability), or minimally adjust system parameters (capacities) to restore collective stability (near-feasibility).

Unfortunately, most problems involving the computation of almost-stable matchings do not admit efficient algorithms (under standard complexity assumptions), even in very restricted settings \cite{abraham06,biro10,biro12,chen17,chen2019computational}. On the other hand, there is a very active line of research that investigates the tractability frontier and the design of efficient algorithms for problems involving notions of near-feasibility  \cite{chencsaji23,couples24,gokhale24,ranjan25,controlsurvey25,cembrano25,gergely25,glitzner2025unsolvabilitymanytomanynonbipartitestable,nguyen19,nguyen21,glitzner2024structuralteac} and the related problem of \emph{capacity planning} \cite{bobbio23,bazotte25,afacan24}. However, one major shortcoming of near-feasible stable matchings is that they do not necessarily have desirable stability properties with respect to the original instance (without modified capacities). In particular, in a near-feasible stable matching, agents might end up in strictly fewer or strictly more pairs than desired, leading to an inherent instability. Furthermore, many current algorithms do not differentiate between the agents whose capacity they modify and agents whose capacity they do not modify, whereas it would seem more natural to change the capacity of an agent with high capacity rather than one with low capacity due to the marginally less significant impact on the difference in their number of matches, for example.

\subsection{Contributions and Structure}

In this paper, we study the incentive and complexity landscape underlying near-feasible stable matchings in multi-agent systems where agents can form multiple partnerships. Within the {\sc Stable Fixtures} model, we investigate the trade-off between system-level feasibility and individual-level stability, and develop new methods to quantify and manage agent incentives under capacity modifications. Our main contributions are fourfold:
\begin{enumerate}
    \item We introduce new concepts and problem models that connect near-feasibility with explicit agent-level incentive measures.
    \item We develop efficient algorithms for finding matchings and capacity modifications that balance feasibility and stability.
    \item We derive tight bounds and structural characterisations for these settings, highlighting the interplay between incentives and modifications at the individual and aggregate levels.
    \item We show through an empirical investigation that the minimum number of capacity modifications required to arrive at solvability and the minimum instability are often small.
\end{enumerate}
Beyond theoretical interest, our results contribute to the broader goal of designing fair, desirable, and scalable coordination mechanisms for multi-agent environments.

The remainder of the paper develops these ideas in more detail. Section \ref{sec:model} introduces the model and highlights related work. In Section \ref{sec:mod}, we systematically study different kinds of capacity modifications that lead to solvable instances. We introduce natural optimality notions that aim to minimise the effect of the modifications either on the individual or on the aggregate level, and study the structure and computation of these optimal modifications when allowing just downwards, just upwards, or either kind of changes. Crucially, for each of these six settings, we present simple algorithms that compute such a modification efficiently, and we give tight bounds on the individual and total number of changes required. Furthermore, we show that a well-balanced capacity modification (leading to a solvable instance) always exists. We also show that there is always common ground between optimal capacity modifications on the individual and aggregate level, and highlight which efficient algorithms satisfy these criteria simultaneously.

Next, in Section \ref{sec:instability}, we consider the incentives that agents may experience to deviate from these previously studied modification-optimal near-feasible stable matchings. We extend the classical concept of \emph{blocking pairs} to quantify instability to a more applicable notion of \emph{blocking entries}, which represent non-symmetric agent incentives to deviate, and define optimality criteria for matchings on the individual and aggregate levels. We study the computation of optimal matchings in a number of settings that impose different restrictions on the allowed capacity changes, establishing the tractability frontier and presenting bounds on the instability of different kinds of matchings along the way. While the computation of matchings that minimise deviation incentives without increasing capacities turns out to be {\sf NP-hard}, we find that, surprisingly, matchings that minimise the deviation incentives on the individual and aggregate levels simultaneously, while never requiring capacity changes stronger than an optimal modification, can be computed efficiently. 

Section \ref{sec:exact} contains exact algorithms and establishes {\sf XP} results for {\sf NP-hard} versions of these problems. In Section \ref{sec:exp}, we implement some of these algorithms and study optimal capacity modifications and instability-minimising matchings in practice using synthetic datasets. Finally, Section \ref{sec:conclusion} concludes the paper and presents some open problems.

\section{Background}
\label{sec:model}

To formally reason about self-interested agents with limited capacities, we adopt the {\sc Stable Fixtures} model (which we previously introduced informally in Section \ref{sec:intro}). This framework generalises classical stable matching models to settings where each agent may form multiple simultaneous partnerships. A formal definition follows.

\begin{definition}[{\sc sf} Instance]
    Let $I=(A,\succ, c)$ be an {\sc sf} \emph{instance} where $A=\{ a_1, a_2, \dots, a_n \}$, also denoted by $A(I)$, is a set of $n\in \mathbb{N}$ agents, $\succ$ is a tuple of $n$ strict ordinal \emph{preference rankings} $\succ_i$ for each agent $a_i$ over all other agents $A\setminus\{a_i\}$, and $c:A\rightarrow \{1,2,\dots,n-1\}$ is a capacity function indicating the maximum number $c_i$ of agents that each agent $a_i\in A$ can be matched to.
\end{definition}

Intuitively, each agent $a_i$ can form up to $c_i$ bilateral relationships with other agents, reflecting real-world multi-agent interactions such as shared projects or cooperative tasks. We can equivalently treat an instance $I=(A,\succ,c)$ as a graph $G=(A,E)$: Let every agent $a_i\in A$ be a vertex in our graph and let all pairs of agents $\{a_i,a_j\}$ in $\succ$ form an edge $e$ in the edge set $E$. Then any {\sc sf} instance $I$ induces an \emph{acceptability graph} isomorphic to the complete graph on $n$ vertices. Throughout this paper, we generally assume \emph{complete} and \emph{truthful} preferences, i.e., that every pair of agents is considered acceptable (it has been shown that results generally carry over to the setting with \emph{incomplete preference lists} \cite{gusfield89}) and that the problem instance contains preference rankings and capacities that represent the true preferences and desired capacities of the agents.\footnote{This is a strong but widely adopted assumption in this area, as it is well-known that no mechanism that computes a stable solution is \emph{strategyproof} even in the restricted {\sc Stable Marriage} model \cite{roth82}. We note, though, that Bazotte, Carvalho and Vidal \cite{bazotte25} recently studied the effects of strategic agents on capacity planning in matching schemes.} We also implicitly assume that every agent would rather be matched to any other agent than to be left with free capacity. The stability notion that we adopt throughout this paper was introduced by \citet{Irving2007}, and is a natural extension of classical stability in the {\sc Stable Marriage} model. It is defined as follows.

\begin{definition}[Stable Matching]
    Let $I=(A,\succ, c)$ be an {\sc sf} instance. A \emph{matching} $M$ of $I$ is a set of unordered pairs of agents such that no agent $a_i$ is contained in more than $c_i$ pairs.\footnote{Sometimes a matching with such capacity constraints is referred to as a \emph{b-matching} to indicate a capacity function $b$ ($c$ in our notation).} Let $M(a_i)$ denote the set of agents assigned to $a_i$ and let worst$_i(M(a_i))$ denote the worst agent assigned to $a_i$ in $M$ (according to $\succ_i$). Then a \emph{blocking pair} of $M$ is a pair of distinct agents $a_i, a_j\in A$ such that 
    \begin{enumerate}[label=(\roman*),leftmargin=2.1em]
        \item $\{a_i,a_j\}\notin M$, and
        \item $\vert M(a_i)\vert<c_i$ or $a_j \succ_i$ worst$_i(M(a_i))$, and
        \item $\vert M(a_j)\vert< c_j$ or $a_i\succ_j$ worst$_j(M(a_j))$. 
    \end{enumerate}
    Let $bp(M)$ denote the set of blocking pairs admitted by $M$. If $bp(M)=\varnothing$, then $M$ is referred to as \emph{stable}.
\end{definition}

In multi-agent terms, stability corresponds to a state in which no two agents can jointly improve their outcomes by changing their current associations. We refer to an {\sc sf} instance $I=(A,\succ,c)$ that admits at least one stable matching as \emph{solvable}; otherwise, we refer to $I$ as \emph{unsolvable}. We refer to a matching $M$ as a \emph{near-feasible matching} of $I$ if there exists a capacity function $c'$ such that $M$ is a matching in $I'=(A,\succ,c')$. Furthermore, $M$ is a \emph{near-feasible stable matching} of $I$ if $M$ is stable in $I'$. Finally, we refer to $I'$ as a \emph{near-feasible instance}.

\subsection{Motivating Example}

Beyond other important problems, such as the assignment of doctors to hospitals, pupils to schools, students to colleges, etc., that appear as special cases of our model, we provide the following, arguably less pressing, motivating example to illustrate the full potential of the {\sc sf} model. 

Consider a group of academics who are looking for new partners to start a (pairwise) research collaboration with. Suppose, furthermore and for the sake of example, that this is facilitated through a central match-making service in which academics sign up, state their capacity on the maximum number of new research partners, and rank these potential partners in decreasing order of preference (e.g., based on preference factors such as geographic distance, research expertise, similarity of profile). It is then the task of the match-making service to find a stable matching among the academics: a matching in which every participant's capacity is respected and no two academics that are not yet matched to each other have free capacity or would like to drop someone else that they are matched to in order to enter this new collaboration instead. If a stable matching does not exist, then we can try to relax the capacity constraints of academics and look for a stable matching under this new set of constraints. This is what we would refer to as a near-feasible stable matching.

\begin{example}
In Table \ref{table:solvablesf}, academic $a_1$ has capacity 2 and ranks the academics $a_2$, $a_3$, $a_4$, and $a_5$ in decreasing order of preference (and similarly for the remaining academics). The problem instance admits the stable matching $M=\{\{a_1, a_2\}, \{a_1, a_3\}, \{a_2, a_3\}, \{a_4, a_5\}\}$, in which academics $a_1,a_2$ and $a_3$ are fully matched (i.e., their number of assigned matches equals their capacity) and academics $a_4$ and $a_5$ each remain with one unit of free capacity. On the other hand, Table \ref{table:unsolvablesf} shows an unsolvable {\sc sf} instance (with a generalised stable partition (GSP \cite{glitzner2025unsolvabilitymanytomanynonbipartitestable}) $\Pi=(a_1\;a_2\;a_3)(a_1\;a_4)(a_2\;a_4)(a_3\;a_5)$ indicated in dotted and unbroken circles). Notice that in Table \ref{table:unsolvablesf}, if one of the academics $a_1$, $a_2$ or $a_3$ were to increase or to decrease their capacity by 1, then the resulting instance would be solvable -- we would consider this to be a near-feasible instance $I'$ and a corresponding stable matching $M'$ of $I'$ to be a near-feasible stable matching in $I$.

\begin{table}[!htb]
\centering
\begin{minipage}{.45\linewidth}
\centering
\caption{A solvable {\sc sf} instance}
    \begin{tabular}{ c | c | c c c c }
    $a_i$ & $c_i$ & pref \\\hline
    $a_1$ & 2 & $\boxed{a_2}$ & $\boxed{a_3}$ & $a_4$ & $a_5$ \\
    $a_2$ & 2 & $\boxed{a_1}$ & $\boxed{a_3}$ & $a_5$ & $a_4$ \\
    $a_3$ & 2 & $\boxed{a_1}$ & $\boxed{a_2}$ & $a_4$ & $a_5$ \\
    $a_4$ & 2 & $a_1$ & $a_2$ & $a_3$ & $\boxed{a_5}$ \\ 
    $a_5$ & 2 & $a_2$ & $a_1$ & $a_3$ & $\boxed{a_4}$ 
    \end{tabular}
\label{table:solvablesf}
\end{minipage}%
\begin{minipage}{.45\linewidth}
\centering
\caption{An unsolvable {\sc sf} instance}
    \begin{tabular}{ c | c | c c c c }
    $a_i$ & $c_i$ & pref \\\hline
    $a_1$ & 2 & \circled{$a_2$} & $\boxed{a_4}$ & \dottedcircled{$a_3$} & $a_5$ \\
    $a_2$ & 2 & $\boxed{a_4}$ & \circled{$a_3$} & \dottedcircled{$a_1$} & $a_5$ \\
    $a_3$ & 2 & \circled{$a_1$} & $\boxed{a_5}$ & \dottedcircled{$a_2$} & $a_4$ \\
    $a_4$ & 2 & $a_3$ & $\boxed{a_1}$ & $\boxed{a_2}$ & $a_5$ \\
    $a_5$ & 1 & $\boxed{a_3}$ & $a_1$ & $a_2$ & $a_4$ 
    \end{tabular}
\label{table:unsolvablesf}
\end{minipage} 
\end{table}
\end{example}

We will provide results for this setting, but also develop ideas and techniques that are applicable more widely.

\subsection{Related Work}
\label{sec:relatedwork}

Research on stable matchings and their generalisations spans several communities, including multi-agent systems, computational social choice, and algorithmic game theory \cite{matchup}. Our work contributes to this literature by connecting two major directions -- capacity modification and instability minimisation -- within the general {\sc sf} framework.

{\sc sf} was introduced by Irving and Scott \cite{Irving2007} as a natural many-to-many extension of the one-to-one {\sc Stable Roommates} problem ({\sc sr}). {\sc sr} is, of course, a very well-studied problem.\footnote{See, for example, the monographs by Gusfield and Irving \cite{gusfield89} and Manlove \cite{matchup}, early papers such as \cite{irving_sr_structure,irving_sr,gusfield_sr_structure,ronn_srt}, or more recent work in a range of domains encompassing economic properties, experimental studies and algorithm design, for example \cite{Chen2025,csehshort,glitzner2025empirics,glitzner2024structuralteac,herings25,mertens15random}.} Irving and Scott gave an algorithm to find a stable matching for an {\sc sf} instance in linear time, if one exists \cite{Irving2007}. Interesting phenomena that only occur in {\sc sf} but not in its one-to-one counterpart were illustrated by \citet{glitzner2025unsolvabilitymanytomanynonbipartitestable} and \citet{matchup}. \citet{birocsaji24} proved that even under lexicographic preferences, the strong core of an {\sc sf} instance can be non-empty, and finding weak or strong core matchings (if they exist) is {\sf NP-hard}. Other many-to-many generalisations of {\sc sr} focus on additional edge capacities \cite{BiroPhD08, biro10,fleinerphd18} or parallel edges \cite{sma05,cechlarova05,bmatchingrotations}, for example.

Most related work on capacity modifications is for more restricted bipartite many-to-one matching problem models, such as versions of the {\sc Hospitals/Residents} problem ({\sc hr}) \cite{nguyen18,nguyen19,nguyen21,couples24,chencsaji23,gokhale24}. {\sc hr} appears as a special case of {\sc sf} when preferences are bipartite and incomplete, and one set of agents in the bipartition has capacity 1. This problem is also known as {\sc College Admission} or {\sc School Choice} to reflect other contexts in which this model is applicable \cite{afacan24}. Such previous works on the {\sc hr} problem that are similar to ours usually only consider hospital capacity increases and not decreases, as it is of practical importance that as many doctors as possible are assigned a position. 

Nguyen and Vohra \cite{nguyen18} initiated the study of algorithms to find near-feasible stable matchings and focused on an extension of {\sc hr} in which doctors can form couples and submit joint preferences. Contrary to classical {\sc hr}, in this problem variant, the existence of a stable matching is not guaranteed. However, the authors showed that hospital capacities can be perturbed such that the total aggregate capacity is never reduced and can increase by at most four, the capacity of each hospital never changes by more than two, and the resulting instance admits a stable matching. On the technical side, they use Scarf's Lemma \cite{scarf} to find a fractional stable matching and then iteratively round the matching and change the capacities correspondingly. Follow-up work considered more complex extensions of the problem \cite{nguyen19,nguyen21}. 

The work on {\sc hr} with couples was extended recently by \citet{couples24}. They highlight that the problem of computing a stable fractional matching is {\sf PPAD-hard} but needs to be solved in the method by \citet{nguyen18}. Therefore, instead, \citet{couples24} focus on a family of special cases of the problem and provide a polynomial-time algorithm that adjusts hospital capacities by at most 1, which is both a conceptual and an algorithmic improvement compared to \citet{nguyen18} in these special cases. They also provide hardness results for various problems, even under strong assumptions on the instances, and mention that an important future direction is to continue to expand our knowledge of the frontier between tractable and intractable variants of near-feasible stable matching, which we continue in this paper. A different extension of the results by \citet{nguyen18} was recently presented by \citet{cembrano25}. They establish a similar new rounding theorem and apply it to a range of problems, including {\sc hr} with couples. 

Chen and Cs{\'a}ji \cite{chencsaji23} also considered the {\sc hr} problem and studied how to optimally increase the capacities of hospitals to obtain desirable stable matchings. Afacan, Dur and Van der Linden \cite{afacan24} considered similar problems, but under fundamentally different assumptions. Gokhale et al. \cite{gokhale24} also studied {\sc hr} and investigated, for example, how the set of stable matchings changes when capacities are increased. They pose an open problem of exploring algorithms for capacity modification when both increase and decrease operations are allowed, which we pursue in this paper.

Ranjan, Nasre and Nimbhorkar \cite{ranjan25} investigated {\sc hr} with ties under strong-stability\footnote{See \citet{matchup} for a discussion of stability definitions in the presence of ties.} and presented special cases where finding increased capacities that yield a solvable instance such that the total sum of capacity increases is minimal is polynomial-time solvable. Minimising the maximum capacity increase, however, turned out to be {\sf NP-hard} even under strong assumptions on the preferences. Bobbio et al. \cite{bobbio23} and Bazotte, Carvalho and Vidal \cite{bazotte25} considered optimisation problems related to capacity planning in {\sc hr}, and presented an operations research perspective for fairness considerations and strategic issues.

A different but related line of work studies control and manipulation problems, initiated by Boehmer et al. \cite{boehmer21}. Chen and Schlotter \cite{Chen2025} have since extended these results to {\sc sr}. For our model, B{\'e}rczi, Cs{\'a}ji and Kir{\'a}ly \cite{manipulation24} established that when removing agents rather than modifying capacities, finding a subset of agents to remove of smallest size to arrive at a solvable instance is {\sf NP-hard}. Chen et al. \cite{controlsurvey25} recently surveyed relevant results for control problems in the area of computational social choice more broadly.

In a recent preprint, Cs{\'a}ji \cite{gergely25} studied capacity changes in very general hypergraph matching problems. For {\sc sf}, he presented a polynomial-time algorithm that changes capacities by at most one and such that the total sum of capacity changes is either 0 or +1. Minimality of the amount of changes was not established. In a similar line of work, Glitzner and Manlove \cite{glitzner2025unsolvabilitymanytomanynonbipartitestable} provided structural and algorithmic results for {\sc sf}. They also gave a new polynomial-time algorithm with the same structural guarantees that exploits inherent structures of the preference system, which is fundamentally different from the algorithm by Cs{\'a}ji \cite{gergely25} (which uses Scarf's Lemma \cite{scarf}). Again, no claims about the minimality of changes were made, which is something we will resolve here.

With regard to minimising the instability of matchings, related work mainly focuses on one-to-one matching models. As a natural way to deal with unsolvable {\sc sr} instances, Abraham, Bir{\'o} and Manlove \cite{abraham06} introduced the problem of finding \emph{almost-stable} matchings, which are matchings with the minimum number of blocking pairs. Finding such a matching is known to be {\sf NP-hard} even for preference lists of length at most 3 \cite{biro12}, {\sf W[1]-hard} with respect to the optimal value \cite{chen17}, and {\sf NP-hard} to approximate within any constant factor \cite{abraham06}. Chen et al. \cite{chen17} showed that the problem of minimising the number of unique agents in blocking pairs is also {\sf W[1]-hard} (with respect to the optimal value). Glitzner and Manlove recently showed that even deciding whether there exists a matching in which no agent is in more than one blocking pair is {\sf NP-complete} \cite{glitznermanloveminimax}, and deciding the existence of a matching in which a given subset of agents is in no blocking pairs is also {\sf NP-complete} \cite{glitznermanlovepref}. Almost-stable matchings have also been investigated for other problem models (e.g., see \cite{biro_sm_10,chen2019computational,couples24,gupta2020parameterized,hamada09,minbp_hrc_17}). Alternative solution concepts for unsolvable instances that are fundamentally different from almost-stability have also been explored (e.g., see \cite{tan91_2,biro16,herings25,vandomme2025locally,ATAY2021102465}).

\section{Minimal Capacity Modifications}
\label{sec:mod}

In this section, we first propose and formalise notions of optimality for modified capacity functions and then show how to compute such optimal functions. We are interested in how minimal interventions in agent capacities can restore stability in otherwise unsolvable instances.

\subsection{Preliminary and Known Results}

Recall from the introduction that we are interested in changing the capacity function of unsolvable {\sc sf} instances by small amounts to make these new resulting instances solvable. To specify more formally what is meant by ``small amounts'', we begin by defining an optimality concept that imposes restrictions on the capacity modifications of \emph{individual} agents.

\begin{definition}
    A \emph{minimal individual modification} (denoted \emph{MIM}$_{\pm}$) is an alternative capacity function $c'$ of an unsolvable {\sc sf} instance $I=(A,\succ,c)$ such that $I'=(A,\succ,c')$ is solvable and there does not exist an alternative capacity function $c''$ such that $I''=(A,\succ,c'')$ is solvable and $\max_{a_i\in A}\vert c_i''-c_i\vert<\max_{a_i\in A} \vert c_i'-c_i\vert$.
    
    Furthermore, $c'$ is \emph{MIM}$_\text{+}$ if, for all agents $a_i\in A$, $c_i'\geq c_i$, $I'=(A,\succ,c')$ is solvable, and there does not exist an alternative capacity function $c''$ such that $I''=(A,\succ,c'')$ is solvable, $c_i''\geq c_i$ and $\max_{a_i\in A}(c_i''-c_i)<\max_{a_i\in A} (c_i'-c_i)$.

    Similarly, $c'$ is \emph{MIM}$_{-}$ if, for all agents $a_i\in A$, $c_i'\leq c_i$, $I'=(A,\succ,c')$ is solvable, and there does not exist an alternative capacity function $c''$ such that $I''=(A,\succ,c'')$ is solvable, $c_i''\leq c_i$ and $\max_{a_i\in A}(c_i-c_i'')<\max_{a_i\in A} (c_i-c_i')$.
\end{definition}

Informally, MIM focuses on minimising the largest adjustment of any agent. An alternative capacity function $c'$ is MIM$_{\pm}$ if it makes the new instance solvable and requires the least extreme individual capacity changes among all alternative capacity functions that make the instance solvable. MIM$_\text{+}$ (MIM$_{-}$) requires, additionally, that capacities are only ever modified upwards (downwards) and that the new capacity function requires the least extreme individual upwards (downwards) changes among all such alternative capacity functions. 

Alternatively, we will consider optimality at the \emph{aggregate} level, i.e., with respect to the total number of capacity modifications among all agents, again separating the case in which any capacity modifications are allowed and cases where only upwards or only downwards modifications are allowed. Formal definitions of the optimality concepts follow.

\begin{definition}
    A \emph{minimal aggregate modification} (denoted \emph{MAM}$_{\pm}$) is an alternative capacity function $c'$ of an unsolvable {\sc sf} instance $I=(A,\succ,c)$ such that $I'=(A,\succ,c')$ is solvable and there does not exist an alternative capacity function $c''$ such that $I''=(A,\succ,c'')$ is solvable and $\sum_{a_i\in A}\vert c_i''-c_i\vert<\sum_{a_i\in A} \vert c_i'-c_i\vert$.
    
    Furthermore, $c'$ is \emph{MAM}$_\text{+}$ if, for all agents $a_i\in A$, $c_i'\geq c_i$, $I'=(A,\succ,c')$ is solvable, and there does not exist an alternative capacity function $c''$ such that $I''=(A,\succ,c'')$ is solvable, $c_i''\geq c_i$ and $\sum_{a_i\in A}(c_i''-c_i)<\sum_{a_i\in A} (c_i'-c_i)$.

    Similarly, $c'$ is \emph{MAM}$_{-}$ if, for all agents $a_i\in A$, $c_i'\leq c_i$, $I'=(A,\succ,c')$ is solvable, and there does not exist an alternative capacity function $c''$ such that $I''=(A,\succ,c'')$ is solvable, $c_i''\leq c_i$ and $\sum_{a_i\in A}(c_i-c_i'')<\sum_{a_i\in A} (c_i-c_i')$.
\end{definition}

Informally, MAM aims to minimise the total adjustments across all agents.

Now, to investigate the structure and computation of these different kinds of optimal capacity modifications, it will be useful to consider the \emph{generalised stable partition} (GSP) structure\footnote{Two equivalent definitions were introduced; we adopt that of a GSP1.} introduced by Glitzner and Manlove \cite{glitzner2025unsolvabilitymanytomanynonbipartitestable}, which builds on the well-established stable partition structure first introduced by Tan \citet{tan91_2,tan91_1}, as it succinctly captures relevant structures of an {\sc sf} instance in the form of a collection of agent permutations.\footnote{For the following definition of a GSP, a \emph{cyclic permutation} (also referred to as a \emph{$k$-perm} for a cyclic permutation with cycle length $k$) is a permutation consisting of a single cycle. For example, for a set $S=\{x,y,z\}$, we would consider the permutations $(x\;y\;z)$ and $(x\;z\;y)$ to be cyclic permutations of $S$ because they consist of a single cycle. We consider two permutations to be \emph{distinct} when there exists an element that is mapped to two different elements in the respective permutations. As pointed out in the definition, fixed points need not be distinct, e.g., $(a_1)(a_1)$ represents two different occurrences of $a_1$ in a fixed point.}

\begin{definition}[GSP]
\label{def:gsp1}
    Let $I=(A,\succ, c)$ be an {\sc sf} instance. Then a \emph{GSP} $\Pi$ is a collection of $k$ distinct (apart from fixed points) cyclic permutations $\Pi_i$ (for $1\leq i\leq k$) of sets $A_i\subseteq A$ such that  
    \begin{enumerate}[leftmargin=2.6em]
        \item[(F1)] $\forall \; \Pi_i\in \Pi$ and $\forall \; a_j\in A_i$ we have $\Pi_i(a_j)\succeq_j \Pi_i^{-1}(a_j)$, 
        \item[(F2)] $\nexists \; a_i, a_j \in A$ where $a_i\neq a_j$ and $(a_i\; a_j)\notin \Pi$ such that $a_j \succ_i \Pi_r^{-1}(a_i)$ and $a_i \succ_j \Pi_s^{-1}(a_j)$ for some $\Pi_r, \Pi_s\in \Pi$,
        \item[(F3)] $\forall a_i\in A$ we have $\vert \{r \; \vert \; a_i\in A_r\}\vert = c_i$, and
        \item[(F4)] $\forall a_i, a_j \in A$ with $a_i\neq a_j$ we have that $\vert \{s\in \{1,2,\dots,k\}\; \vert \; \Pi_s(a_i) = a_j\} \vert + \vert \{s \; \vert \; \Pi_s(a_j) = a_i\} \vert \leq 2$.
    \end{enumerate}
    We call a GSP \emph{reduced} if it does not contain any cycles of even length longer than 2.
\end{definition}

The following are examples of GSPs, written as collections of permutations in cyclic notation, of the instances shown in Tables \ref{table:solvablesf}-\ref{table:unsolvablesf}, respectively: $\Pi=(a_1\;a_2)(a_1\;a_3)(a_2\;a_3)(a_4\;a_5)$ and $\Pi=(a_1\;a_2\;a_3)(a_1\;a_4)(a_2\;a_4)(a_3\;a_5)$. \citet{glitzner2025unsolvabilitymanytomanynonbipartitestable} established the following crucial properties of GSPs.

\begin{theorem}[\cite{glitzner2025unsolvabilitymanytomanynonbipartitestable}]
\label{thm:gsp}
    Let $I$ be an {\sc sf} instance with $n$ agents, then 
    \begin{itemize}[leftmargin=1.5em]
        \item $I$ admits at least one and possibly exponentially many GSPs;
        \item A reduced GSP of $I$ can be computed in $O(n^3)$ time;
        \item Any two GSPs of $I$ admit the same cycles of odd length;
        \item The collection of transpositions of a GSP that does not admit cycles of length longer than 2 corresponds to a stable matching in $I$;
    \end{itemize}
\end{theorem}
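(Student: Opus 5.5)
The four claims are best attacked by one device: an auxiliary bipartite instance that linearises a GSP. I would take them in the order last bullet, reduction, existence and computation, odd-cycle invariance, exponential multiplicity, since the later ones rest on the earlier ones.

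\textbf{Auxiliary instance.} Build a many-to-many bipartite instance $I^*$. Each agent $a_i$ gets a ``proposer'' copy $a_i^p$ and a ``receiver'' copy $a_i^r$, each with capacity $c_i$. Each copy ranks the opposite-side copies of the other agents according to $\succ_i$. A fixed point $(a_i)$ is read, as in Tan's theory, as $a_i$ being matched to itself, with $a_i$ ranked last in its own list. A GSP $\Pi$ then induces a set of pairs $\{a_i^p,\Pi_s(a_i)^r\}$ in $I^*$, so that $\Pi_s(a_i)$ is a ``successor'' partner and $\Pi_s^{-1}(a_i)$ a ``predecessor'' partner. Under this reading the conditions become:
\begin{itemize}
\item (F3) says every copy of $a_i$ is fully used;
\item (F1) says each agent prefers its successor to its predecessor in every cycle;
\item (F2) is precisely the absence of a blocking pair among predecessors.
\end{itemize}

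\textbf{Last bullet.} Suppose every cycle has length at most $2$. Then successor equals predecessor in each cycle, so the transpositions form a set $M$ of pairs. By (F3), each $a_i$ lies in at most $c_i$ pairs, where a fixed point accounts for a unit of free capacity. By (F4), no pair is used twice. Under this correspondence, $\text{worst}_i(M(a_i))$ is the worst predecessor of $a_i$, with the convention that a free slot counts as a predecessor ranked last. Hence any blocking pair of $M$ would violate (F2), and $M$ is stable.

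\textbf{Reduction step.} Take an even cycle $(b_1\;b_2\;\dots\;b_{2m})$ with $m\ge 2$. Replace it by the transpositions $(b_1\;b_2)(b_3\;b_4)\cdots$. Each affected agent's new partner is its former successor or predecessor, and by (F1) that partner is at least as good as its old predecessor. So predecessors only weakly improve, (F2) is preserved, and (F3) and (F4) remain true. Applying this to every long even cycle gives a reduced GSP.

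\textbf{Existence and $O(n^3)$ computation.} I would generalise Tan's incremental algorithm for {\sc sr}. Maintain a GSP for the subinstance induced by the agents processed so far, and insert agents one at a time. Each insertion triggers a proposal sequence in the style of Irving and Scott. That sequence either ends by extending existing transpositions, or closes a cycle. A closed even cycle is split into transpositions as in the reduction step; a closed odd cycle is frozen into $\Pi$.

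Each insertion should cost $O(n^2)$, since every preference list is scanned monotonically. Over $n$ agents this gives $O(n^3)$, followed by a final reduction pass. The delicate point is that capacities can reach $n-1$. I would bound the work per agent by the total length of its preference list rather than by its capacity, so that the bound stays $O(n^2)$ per insertion.

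\textbf{Odd cycles are invariant (main obstacle).} I would adapt Tan's argument through $I^*$. Since $I^*$ is bipartite, its stable matchings form a distributive lattice, and a GSP corresponds to a stable matching of $I^*$ that is ``half-symmetric''. Odd cycles correspond to pairs on which the proposer side and the receiver side disagree, and this asymmetry is not destroyed by lattice operations.

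Concretely, suppose $\Pi$ and $\Pi'$ are two GSPs and an odd cycle of $\Pi$ is absent from $\Pi'$. Compare, for each agent, its sets of successors and predecessors in the two GSPs. By the rural-hospitals theorem in $I^*$, the number of filled slots and the agents involved coincide. A parity and counting argument around the odd cycle should then yield an agent pair violating (F2) in one of the two GSPs. The care needed here is in handling multiplicities: repeated fixed points, and the bound of $2$ in (F4).

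\textbf{Exponential multiplicity.} Take a disjoint union of $k$ gadgets. Each gadget is a $4$-agent, capacity-$1$ {\sc sr} block with two stable matchings, with preferences across gadgets ranked low so that no inter-gadget pair blocks. Choosing independently in each gadget gives $2^k$ distinct GSPs. A single-gadget instance shows that ``at least one'' is sometimes all there is.
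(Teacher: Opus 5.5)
\paragraph{Verdict.} The paper does not prove this theorem; it quotes it from \cite{glitzner2025unsolvabilitymanytomanynonbipartitestable}, so your proposal has to stand on its own. As it stands, it proves the fourth bullet and the even-cycle reduction. It does not prove the first three bullets.

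\paragraph{What works, with one missing lemma.} Your argument for the fourth bullet is correct: a fixed point is a free slot whose ``predecessor'' $a_i$ is ranked last, so a blocking pair of $M$ gives an F2 violation. The exponential family of disjoint four-agent gadgets with two stable matchings each is also fine. The even-cycle reduction is right, but ``(F3) and (F4) remain true'' skips a step. F4 allows a count of $2$ coming from two cycles of length at least $3$, and splitting one of them into transpositions would raise that count to $3$ or duplicate a transposition. You need the fact, used in this paper's proof of Theorem~\ref{thm:mam}, that two agents are adjacent in at most one cycle of a GSP. It follows quickly:
\begin{itemize}
\item Suppose $a_i\to a_j$ in two distinct long cycles, where $a_i$'s predecessors in them are $u\succ_i v$. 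Then $\{a_i,u\}$ violates F2 unless $(a_i\;u)\in\Pi$, and that would violate F4.
\item So the two predecessors agree, and tracing backwards, the two cycles coincide.
\item Opposite orientations are excluded by F2 and F4 directly.
\end{itemize}
Your closing remark is wrong but harmless: a single gadget has two stable matchings, hence at least two GSPs.

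\paragraph{Existence and the $O(n^3)$ bound.} You outline an algorithm but give no argument that it terminates with F2 holding globally. One step is also false: a closed odd cycle cannot be frozen, because a later agent may dissolve it. Take all capacities equal to $1$ and preferences $a_1: a_4\,a_2\,a_3$; $a_2: a_3\,a_1\,a_4$; $a_3: a_1\,a_2\,a_4$; $a_4: a_1\,a_2\,a_3$.
\begin{itemize}
\item After inserting $a_1,a_2,a_3$, the unique GSP is $(a_1\;a_2\;a_3)$.
\item Keeping it forces the fixed point $(a_4)$, and then $\{a_1,a_4\}$ violates F2.
\item By contrast, $(a_1\;a_4)(a_2\;a_3)$ is a GSP of the full instance.
\end{itemize}
So any incremental scheme (Tan and Hsueh's for {\sc sr} included) needs a rule for breaking earlier odd cycles. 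Both the correctness argument and the $O(n^2)$-per-insertion accounting must cover that rule.

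\paragraph{Odd-cycle invariance.} This is the main missing idea. Your argument ends at ``should then yield'', and the heuristic behind it is false.
\begin{itemize}
\item Every cycle of length at least $3$, even or odd, produces proposer/receiver disagreement in $I^*$. Your own reduction step turns even cycles into symmetric transpositions. So asymmetry is not what is invariant.
\item Parity is not recorded by the image of a GSP in $I^*$ at all, since that image forgets how each agent's predecessors are paired with its successors.
\item The rural-hospitals theorem adds nothing, because with your convention F3 already fixes every degree at $c_i$.
\end{itemize}
What is needed is a genuine structural argument that every odd cycle of one GSP reappears in any other. One route is to adapt Tan's invariance proof for stable partitions, working directly with F1--F4 and with facts such as uniqueness of adjacencies and that no agent lies in two odd cycles of length at least $3$. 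Another is to reduce to stable partitions of an {\sc sr} instance, where Tan's theorem applies.
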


Using these properties and some further technical results, they presented an algorithm for {\sc sf} instances $I=(A,\succ,c)$ to find a capacity modification $c'$ and a stable matching $M'$ of $I'=(A,\succ,c')$ such that each agent's capacity is changed by at most 1 and the total sum of capacity changes is either 0 or +1. It works as follows: start with a reduced GSP $\Pi$ of $I$ and let $\mathcal{O}_I^{\geq 3}$ denote the cycles of $\Pi$ of odd length at least 3. Then pick exactly one agent $a_r$ from every cycle in $\mathcal{O}_I^{\geq 3}$ (it was shown that no agent can be contained in more than one such cycle), and alternately increase and decrease these $a_r$ agents' capacities by 1. Due to the alternating increases and decreases, we will denote this algorithm by \texttt{NearFeasible$_\pm$}. The algorithm was shown to run in $O(n^2)$ time (where $n$ is the number of agents) given a GSP. Note that the algorithm assumes that a GSP is given, so the stated complexity increases from $O(n^2)$ to $O(n^3)$ if a GSP has not yet been computed (when using the method from \cite{glitzner2025unsolvabilitymanytomanynonbipartitestable} to compute a GSP). Note, though, that an {\sc sf} instance has size $\Theta(n^2)$ (by our assumption of complete preferences), so a runtime of $O(n^3)$ is still sub-quadratic in the input size. The same remark applies to our stated complexities throughout.

By Lemma 3 of \cite{glitzner25sagt}, though, it is immediate that it suffices to pick any one agent from each cycle of odd length at least 3, and that the algorithm works regardless of whether these agents' capacities are increased or decreased by 1, as long as each such agent's capacity is modified by exactly 1. Let \texttt{NearFeasible$_\text{+}$} and \texttt{NearFeasible$_-$} denote versions of the algorithm that only ever increase or only ever decrease capacities by 1, respectively. In this paper, we will put this resulting suite of algorithms to the test and develop the necessary frameworks and combinatorial results to prove optimality from a variety of angles. For completeness, we give the full pseudocode of these algorithms in Appendix \ref{sec:appendixAlgos}.

\subsection{Computing Optimal Modifications}

By changing everyone's capacity by at most 1, we can immediately observe the following.

\begin{prop}
\label{prop:mim}
    Let $I=(A,\succ,c)$ be an unsolvable {\sc sf} instance with $n$ agents. Then
    \begin{itemize}[leftmargin=1.5em]
        \item a \emph{MIM$_\text{+}$} capacity function $c'$ always has $\max_{a_i\in A} (c_i'-c_i)=1$ and \emph{\texttt{NearFeasible$_\text{+}$}} computes such a solution efficiently;
        \item a \emph{MIM$_{-}$} capacity function $c'$ always has $\max_{a_i\in A} (c_i-c_i')=1$ and \emph{\texttt{NearFeasible$_-$}} computes such a solution efficiently;
        \item any \emph{MIM$_\text{+}$} and any \emph{MIM$_{-}$} capacity function $c'$ is a \emph{MIM$_{\pm}$} capacity function with $\max_{a_i\in A} \vert c_i'-c_i\vert=1$ and thus a \emph{MIM$_{\pm}$} capacity function can be computed efficiently.
    \end{itemize}
\end{prop}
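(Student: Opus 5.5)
The argument is a sandwich: a lower bound of $1$ coming from unsolvability of $I$, and a matching upper bound of $1$ from the correctness guarantees of the \texttt{NearFeasible} family stated before the proposition.

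\textbf{Lower bound.} Since $I$ is unsolvable, any capacity function $c'$ that makes $(A,\succ,c')$ solvable must differ from $c$. Such a $c'$ therefore has $\max_{a_i\in A}\vert c_i'-c_i\vert\geq 1$. The same bound holds for the one-signed quantities $\max_i(c_i'-c_i)$ when $c'\geq c$ pointwise, and $\max_i(c_i-c_i')$ when $c'\leq c$ pointwise.

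\textbf{Upper bound for MIM$_\text{+}$ and MIM$_-$.} By the discussion preceding the proposition (Lemma 3 of \cite{glitzner25sagt} together with Theorem \ref{thm:gsp}), the following works. Take a reduced GSP $\Pi$ of $I$, computable in $O(n^3)$ time. Pick one agent from each cycle in $\mathcal{O}_I^{\geq 3}$; these agents are distinct, since no agent lies in two such cycles. Changing each picked agent's capacity by exactly $+1$ (respectively $-1$) yields a solvable instance.

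We should check that the output is a legal capacity function. For decreases, this needs $c_r\geq 2$ for each chosen agent $a_r$; I expect it to follow from $I$ being unsolvable. The reason is that an agent $a_r$ in an odd cycle of length at least $3$ together with $c_r=1$ would be the one-to-one situation. There the standard argument, via Lemma 3 of \cite{glitzner25sagt}, removes the agent's single slot. If instead we need to allow capacity $0$, we accept it. Either way, this is the step I would verify most carefully, along with $c_r+1\leq n-1$ for increases.

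Given this, \texttt{NearFeasible$_\text{+}$} outputs a $c'\geq c$ with $\max_i(c_i'-c_i)=1$. Combined with the lower bound, no admissible upward $c''$ can do strictly better, so $c'$ is MIM$_\text{+}$. Consequently every MIM$_\text{+}$ function has value exactly $1$. The symmetric argument handles \texttt{NearFeasible$_-$} and MIM$_-$. Efficiency is the stated $O(n^2)$ time given a GSP, or $O(n^3)$ overall.

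\textbf{MIM$_\pm$.} Let $c'$ be any MIM$_\text{+}$ (or MIM$_-$) function. By the first two items it is one-signed with $\max_i\vert c_i'-c_i\vert=1$, and $I'$ is solvable. Any solvable $c''$ with $\max_i\vert c_i''-c_i\vert<1$ would equal $c$, contradicting unsolvability of $I$. Hence $c'$ is MIM$_\pm$ with value $1$, and running either \texttt{NearFeasible$_\text{+}$} or \texttt{NearFeasible$_-$} computes one efficiently.

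The main obstacle is not the optimality logic, which is a two-line sandwich. It is confirming that the one-signed variants from \cite{glitzner25sagt} are valid in all cases, in particular the feasibility of the capacity bounds noted above.
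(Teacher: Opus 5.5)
Your sandwich argument is correct and matches the paper's reasoning: the paper gives no separate proof and treats the proposition as immediate, since unsolvability forces some capacity to change while the \texttt{NearFeasible} variants change each capacity by at most one. One small correction is that your expectation that every chosen agent has $c_r\geq 2$ is false (in an unsolvable {\sc sr} instance every capacity is $1$), but the paper does allow capacity $0$ in modified instances (see the $0\leq c_i\leq n-1$ footnote and the $O(n^2)$ blocking-pair example), so your fallback of accepting capacity $0$ is the intended reading.
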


In fact, conceptually, this result shows that we can always find well-balanced capacity modifications in the sense that, in absolute numbers, no single agent's capacity modification is significantly higher than any other's.

Optimality of the algorithms on the aggregate level (MAM), on the other hand, is not immediate. However, surprisingly, it turns out that here, too, these algorithms perform optimally.

\begin{theorem}
\label{thm:mam}
    Let $I=(A,\succ,c)$ be an unsolvable {\sc sf} instance with $n$ agents and let $\Pi$ be a GSP of $I$. Furthermore, let $\mathcal{O}_I^{\geq3}$ be the cycles of odd length at least 3 of $\Pi$. Then
    \begin{itemize}[leftmargin=1.5em]
        \item a \emph{MAM$_\text{+}$} capacity function $c'$ always has $\sum_{a_i\in A} (c_i'-c_i)=\vert\mathcal{O}_I^{\geq3}\vert$ and such a $c'$ can be computed efficiently using \emph{\texttt{NearFeasible$_\text{+}$}};
        \item a \emph{MAM$_{-}$} capacity function $c'$ always has $\sum_{a_i\in A} (c_i-c_i')=\vert\mathcal{O}_I^{\geq3}\vert$ and such a $c'$ can be computed efficiently using \emph{\texttt{NearFeasible$_-$}};
        \item any \emph{MAM$_\text{+}$} and any \emph{MAM$_{-}$} capacity function $c'$ is a \emph{MAM$_{\pm}$} capacity function with $\sum_{a_i\in A} \vert c_i'-c_i\vert=\vert\mathcal{O}_I^{\geq3}\vert$ and thus a \emph{MAM$_{\pm}$} capacity function can be computed efficiently.
    \end{itemize}
\end{theorem}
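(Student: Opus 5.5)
Upper bounds come from the algorithms and lower bounds from an invariance argument on odd cycles. For the upper bounds, \texttt{NearFeasible$_\text{+}$} picks exactly one agent from each cycle in $\mathcal{O}_I^{\geq3}$ and raises its capacity by $1$. By the discussion after Theorem \ref{thm:gsp} and Lemma 3 of \cite{glitzner25sagt}, the resulting instance is solvable. No agent lies in two such cycles, so the total change is exactly $\vert\mathcal{O}_I^{\geq3}\vert$. The same holds for \texttt{NearFeasible$_-$} with decreases. Both run in polynomial time given a reduced GSP, which takes $O(n^3)$ time by Theorem \ref{thm:gsp}. Since any two GSPs have the same odd cycles (Theorem \ref{thm:gsp}), $\vert\mathcal{O}_I^{\geq3}\vert$ does not depend on which GSP $\Pi$ is used.

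The core step is the lower bound: every capacity function $c''$ making $I''=(A,\succ,c'')$ solvable satisfies $\sum_i\vert c_i''-c_i\vert\geq\vert\mathcal{O}_I^{\geq3}\vert$, with no sign restriction. I would reduce this to showing that a unit change in one agent's capacity changes the number of odd cycles of length at least $3$ by at most one. More precisely, for a single agent $a_r$, let $I^{\pm}$ be $I$ with $c_r$ changed by $\pm1$. I claim $\vert\mathcal{O}_{I^{\pm}}^{\geq3}\vert\geq\vert\mathcal{O}_I^{\geq3}\vert-1$. Any $c''$ can be reached by $\sum_i\vert c_i''-c_i\vert$ unit steps, keeping capacities in $\{1,\dots,n-1\}$ (in-range steps exist since both endpoints are in range). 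A solvable instance has a GSP with no odd cycles of length at least $3$, by the last two items of Theorem \ref{thm:gsp}. Iterating the claim over the unit steps then gives the bound.

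To prove the claim, I would build a GSP-like structure for $I^{\pm}$ from a GSP $\Pi$ of $I$ and compare odd-cycle counts. For a decrease at $a_r$, remove one occurrence of $a_r$: drop a fixed point $(a_r)$ if one exists; otherwise break the cycle containing $a_r$'s worst partner. A tool here is the phase/proposal characterisation of GSPs from \cite{glitzner2025unsolvabilitymanytomanynonbipartitestable}. That characterisation lets one run the proposal algorithm for $I^{\pm}$ starting from the state that yields $\Pi$. One then argues that only the cycles through $a_r$, or those reached by a single rejection chain starting at $a_r$, can change. Since $a_r$ lies in at most one odd cycle of length $\geq3$, at most one such cycle is destroyed without being replaced by another odd cycle. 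Cycles not involving $a_r$ keep (F1)--(F4) with respect to $I^{\pm}$.

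Honestly, this step is the main obstacle. A single capacity change can cascade through the reduced structure, and the counting must be careful. A fallback is a parity invariant, for example the parity of $\sum_i c_i$ combined with the parity of the number of odd cycles. In the paper's Table \ref{table:unsolvablesf}, one odd cycle forces a modification of total size at least one, which is consistent with such an invariant.

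With the lower bound in hand, the remaining conclusions follow. The lower bound applies a fortiori to the restricted classes in MAM$_\text{+}$ and MAM$_-$, and the algorithms attain it. So MAM$_\text{+}$ and MAM$_-$ both have value exactly $\vert\mathcal{O}_I^{\geq3}\vert$. Since MAM$_\pm$ also has optimum at least this value, any MAM$_\text{+}$ or MAM$_-$ function is MAM$_\pm$.
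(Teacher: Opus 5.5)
You have the same architecture as the paper. The upper bound comes from \texttt{NearFeasible$_\text{+}$} and \texttt{NearFeasible$_-$}. The lower bound comes from the claim that one unit change to one capacity lowers $\vert\mathcal{O}^{\geq3}\vert$ by at most one, iterated over unit steps; this is the paper's induction on $P(x)$. The gap is that this single-step claim carries the whole lower bound, and you do not establish it. The reason you give is also not the operative one. That $a_r$ lies in at most one odd cycle does not help when $a_r$ lies in none. For an increase, $a_r$'s new slot goes to its best blocking partner $a_j$, who may belong to an odd cycle elsewhere, and that cycle is then destroyed. For a decrease, a full $a_r$ outside odd cycles drops its worst partner $a_w$. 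This starts a rejection chain that can reach odd cycles far from $a_r$, and nothing in your sketch bounds how many odd cycles such a chain destroys.

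The missing idea is the one the paper's case analysis (cases 1, 2(a)--(c) and 3) supplies. Exactly one unit of imbalance travels along the chain: a free slot, or an agent that has just lost a partner. It is absorbed, and the chain ends, the first time it reaches an agent of an odd cycle $C$. That cycle is then decomposed into transpositions exactly as in Lemma 3 of \cite{glitzner25sagt}. The decrease version applies when the imbalance arrives as a proposal to an agent of $C$ (case 2(b)); the increase version applies when an agent of $C$ has just been rejected (first subcase of 2(c)). This is natural because, by (F2) and (F4), an agent's predecessor in a cycle of length at least 3 is its worst predecessor overall, so that is the slot it gives up. Then (F1)--(F4) are checked directly for the resulting $\Pi'$. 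Otherwise the chain runs through transpositions and ends in a fixed point or a newly formed cycle, neither of which lowers the count. If $a_r$ is itself in an odd cycle, no chain arises: that cycle is split directly, in either direction.

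There are also some smaller issues. Resuming a proposal algorithm from the state that yields $\Pi$ is not obviously legitimate for an increase. Rejections that $a_r$ issued under its old capacity need not be justified any more. The paper avoids this by writing $\Pi'$ down explicitly and verifying (F1)--(F4).

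Your parity fallback cannot replace the lemma. The value of $\sum_i c_i$ only fixes the parity of the number of odd cycles counted together with fixed points. Since fixed points can be created or absorbed, it gives no lower bound on $\vert\mathcal{O}^{\geq3}\vert$ at all.

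Finally, the step ``solvable implies no odd cycle of length at least 3'' needs the converse of the fourth item of Theorem \ref{thm:gsp}: a stable matching, plus one fixed point per unit of free capacity, is a GSP. This is an easy check of (F1)--(F4), combined with the invariance in the third item.
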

\begin{proof}
    The first observation to make is that the odd-length cycles are invariant among all GSPs by Theorem 3.4. Then it suffices to show that even when allowing capacity modification in any direction and of any intensity to arrive at an alternative capacity function $c'$ such that $I'=(A,\succ,c')$ is solvable, it must be the case that $\sum_{a_i\in A} \vert c_i'-c_i\vert\geq\vert\mathcal{O}_I^{\geq3}\vert$. Assuming this key property, and by the fact that all of \texttt{NearFeasible$_\pm$}, \texttt{NearFeasible$_+$} and \texttt{NearFeasible$_-$} require exactly $\vert\mathcal{O}_I^{\geq3}\vert$ changes, i.e., they all achieve $\sum_{a_i\in A} \vert c_i'-c_i\vert=\vert\mathcal{O}_I^{\geq3}\vert$ and $I'=(A,\succ,c')$ is solvable, the correctness of the algorithms for computing MAM$_{\pm}$, MAM$_{+}$ and MAM$_{-}$ capacity functions, respectively, becomes immediate.

    We will prove the key property by induction: that is, we will show that for any $c'$ and any $x\in \{1,2,\dots,\sum_{a_i\in A} c_i\}$ such that $\sum_{a_i\in A} \vert c_i'-c_i\vert=x$, it must be the case that $\vert\mathcal{O}_{I'}^{\geq 3}\vert\geq \vert\mathcal{O}_I^{\geq 3}\vert-x$ (we will refer to this statement as $P(x)$), i.e., that $x$ amount of changes to the capacity function decrease the number of odd cycles in the preference system by at most $x$. 
    
    Consider the base case where $x=1$, i.e., pick one agent $a_i$ whose capacity is changed either upwards or downwards by 1. Let $\Pi$ be a reduced GSP of $I$. 
    \begin{enumerate}
        \item First, suppose that $a_i$ is in a cycle $C=(a_i\;a_{r_1}\; a_{r_2} \dots a_{r_k})$ of odd length at least 3 in $\Pi$. Then if $a_i$'s capacity is increased, $\Pi'=(\Pi\setminus C)\cup (a_i\;a_{r_1})(a_{r_2}\;a_{r_3})\dots(a_{r_{k}}\;a_{i})$ is a GSP of $I'$. If $a_i$'s capacity is decreased instead, then $\Pi'=(\Pi\setminus C)\cup (a_{r_1}\; a_{r_2})(a_{r_3}\;a_{r_4})\dots(a_{r_{k-1}}\;a_{k})$ is a GSP of $I'$ (both of these statements follow directly from Lemma 3 of \cite{glitzner25sagt}). Thus, in both cases, $\vert\mathcal{O}_{I'}^{\geq 3}\vert= \vert\mathcal{O}_I^{\geq 3}\vert-1$. 
        
        \item Suppose instead that $a_i$ is not in a cycle of odd length at least 3 in $\Pi$ and that $c_i'=c_i+1$. Let $a_j$ be the first agent (in order of preference by $a_i$) such that
        \begin{itemize}
            \item worst$_i(\Pi^{-1}(a_i))\succ_ia_j$, and
            \item $a_i\succ_j$worst$_j(\Pi^{-1}(a_j))$,
        \end{itemize}
        where worst$_i(\Pi^{-1}(a_i))$ is the worst predecessor of $a_i$ in $\Pi$ (and similarly for $a_j$), i.e., $\{a_i,a_j\}$ is $a_i$'s ``best blocking pair'' with regards to $\Pi$ in $I'$. Now
        \begin{enumerate}
            \item if there does not exist any such agent $a_j$, then clearly $\Pi'=\Pi\cup(a_i)$ is a GSP of $I'$. The intuition is the following: we increased $a_i$'s capacity, but no agent has an incentive (either due to free capacity or due to a worse partner in $\Pi$) to pair up with $a_i$, so $a_i$ remains with an additional unit of free capacity in the new GSP which is indicated by a fixed point $(a_i)$, in which case $\vert\mathcal{O}_{I'}^{\geq 3}\vert=\vert\mathcal{O}_{I}^{\geq 3}\vert$.

            \item if such an agent $a_j$ exists and $a_j$ is contained in a cycle $C=(a_j\;a_{r_1}\; a_{r_2} \dots a_{r_k})$ of odd length at least 3 in $\Pi$, then it is easy to show that $a_j$ is in no more than one such cycle (this was formally shown in \cite{glitzner2025unsolvabilitymanytomanynonbipartitestable}). Furthermore, in this case and by the fact that odd-length cycles of an instance are invariant due to Theorem 3.4, $\Pi'=(\Pi\setminus C)\cup(a_i\;a_j)(a_{r_1}\; a_{r_2})\dots(a_{r_{k-1}}\; a_{r_k})$ is a GSP of $I'$ with $\vert\mathcal{O}_{I'}^{\geq 3}\vert=\vert\mathcal{O}_{I}^{\geq 3}\vert-1$ as required. This follows by a similar argument as in the case above (1.). We give the following intuition: F1 holds for $\Pi'$ because it held for $\Pi$ by the assumption of $\Pi$ being a GSP for $I$, and the only cycles that are different in $\Pi'$ compared to $\Pi$ are transpositions, for which F1 is trivially satisfied. If F2 were violated in $\Pi'$, then there would be two agents that are not in a transposition in $\Pi'$ that strictly prefer each other to their worst predecessor in $\Pi'$. However, by inspection, these two agents must also violate F2 in $\Pi$ in $I$, a contradiction of $\Pi$ being a GSP. F3 holds by construction: it held for $\Pi$ in $I$, and we increased $a_i$'s capacity by 1; therefore, $a_i$ must be in one additional cycle in $\Pi'$. This is true: by the assumption that $a_i$ is not in a cycle of odd length at least 3, $a_i$ cannot be in $C$, and we add the transposition $(a_i\;a_j)$ which of course involves $a_i$, so $a_i$ is in exactly one more cycle in $\Pi'$ compared to $\Pi$ as required. All other agents remain in exactly as many cycles in $\Pi'$ as in $\Pi$ as required. By a similar argument, we can show that F4 holds for $\Pi'$ in $I'$: we know that F4 holds for $\Pi$ in $I$ by the assumption, and it was shown in \cite{glitzner2025unsolvabilitymanytomanynonbipartitestable} that two agents can only be adjacent in at most one cycle of a GSP. Thus, F4 cannot be violated by agents $a_{r_j}$ for any $1\leq j\leq k$. Agents $a_i,a_j$ also cannot violate F4 because by our assumption that worst$_i(\Pi^{-1}(a_i))\succ_ia_j$, it must be the case that $a_i,a_j$ were not previously adjacent in any cycle in $\Pi$. Therefore, adding the transposition $(a_i\;a_j)$ to $\Pi$ to gain $\Pi'$ does not violate F4.

            \item if such an agent $a_j$ exists but $a_j$ is not in any cycle of length longer than 2 in $\Pi$ (which would necessarily be an odd-length cycle due to the assumption that $\Pi$ is reduced), then, intuitively, the agent would reject their worst predecessor $a_w$ (which must be in a transposition $(a_j\;a_w)$ because $\Pi$ is assumed to be reduced) in $\Pi$ and enter the transposition $(a_i\;a_j)$ instead. This can, of course, have trickle down effects on $a_w$: when $a_w$ is a cycle $C=(a_w \; a_{r_1} \; a_{r_2}\dots a_{r_k})$ of odd length at least 3, then, by a similar argument as above, $\Pi'=(\Pi\setminus (C\cup(a_j\;a_w))\cup (a_i\;a_j)(a_w\;a_{r_1})(a_{r_2}\;a_{r_3})\dots(a_{r_{k}}\;a_{w})$ is a GSP of $I'$ with $\vert\mathcal{O}_{I'}^{\geq 3}\vert=\vert\mathcal{O}_{I}^{\geq 3}\vert-1$. When $a_w$ is not in such a cycle $C$, then one of three things can happen: $a_w$ remains in an additional fixed point $(a_w)$ in $\Pi'$ of $\Pi$, in which case $\vert\mathcal{O}_{I'}^{\geq 3}\vert=\vert\mathcal{O}_{I}^{\geq 3}\vert$, or $a_w$ is matched to the agent $a_q$ in their ``best blocking pair'' and the same scenarios occur for $a_q$ as previously for $a_j$ ($a_q$ can be either be contained in an odd-length cycle or only in transpositions -- either way we argued why the number of odd-length cycles cannot decrease by more than 1), in which case the number of odd-length cycles can either remain the same or decrease by 1, i.e., $\vert\mathcal{O}_{I'}^{\geq 3}\vert\geq\vert\mathcal{O}_{I}^{\geq 3}\vert-1$, or finally a new odd-length cycle can be formed, in which case $\vert\mathcal{O}_{I'}^{\geq 3}\vert\geq\vert\mathcal{O}_{I}^{\geq 3}\vert$.
        \end{enumerate} 
        
        \item Finally, suppose instead that $a_i$ is not in a cycle of odd length at least 3 in $\Pi$ and that $c_i'=c_i-1$. Again, the same scenarios as described above apply, i.e., either the number of odd cycles remains the same or is increased or decreased by 1, and thus a similar argument shows that indeed $\vert\mathcal{O}_{I'}^{\geq 3}\vert\geq\vert\mathcal{O}_{I}^{\geq 3}\vert-1$. 
    \end{enumerate}
    Therefore, we have now established the base case ($x=1$) of $P(x)$.
    
    Now consider the inductive case, that is, suppose that $P(x)$ holds and consider $P(x+1)$. Suppose, for the sake of contradiction, that there exists some $c'$ such that $\sum_{a_i\in A} \vert c_i'-c_i\vert=x+1$ and  $\vert\mathcal{O}_{I'}^{\geq 3}\vert< \vert\mathcal{O}_I^{\geq 3}\vert-(x+1)$. By the assumption that $P(x)$ is true, for any $c''$ such that $c''_i=c_i'$ for all $a_i\in A$ except some $a_j$ where either $c''_j=c_j'+1$ or $c''_j=c_j'-1$, it is true that $\vert\mathcal{O}_{I''}^{\geq 3}\vert\geq \vert\mathcal{O}_I^{\geq 3}\vert-x$. However, then it must be the case that one capacity change decreases the number of odd cycles of length at least 3 by at least 2, contradicting our earlier observations made above. Therefore, indeed, $\vert\mathcal{O}_{I'}^{\geq 3}\vert\geq \vert\mathcal{O}_I^{\geq 3}\vert-(x+1)$ holds as required, and we established the key property by induction.
\end{proof}

The following corollary observes that, in the worst case, the total amount of capacity modifications required for any of the three objectives concerned with the aggregate level is linear in the number of agents.

\begin{corollary}
    For any alternative capacity function $c'$ such that $c'$ is any of \emph{MAM$_{\pm}$}, \emph{MAM$_\text{+}$} or \emph{MAM$_{-}$}, the bound $\sum_{a_i\in A} \vert c_i'-c_i\vert=O(n)$ holds, where $n$ is the number of agents of the instance. This bound is tight for some families of instances.
\end{corollary}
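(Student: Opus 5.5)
By Theorem~\ref{thm:mam}, every MAM$_{\pm}$, MAM$_\text{+}$ or MAM$_{-}$ capacity function $c'$ satisfies $\sum_{a_i\in A}\vert c_i'-c_i\vert=\vert\mathcal{O}_I^{\geq3}\vert$. The upper bound therefore reduces to showing $\vert\mathcal{O}_I^{\geq3}\vert=O(n)$. For this I will use the fact, cited from \cite{glitzner2025unsolvabilitymanytomanynonbipartitestable} in the discussion of \texttt{NearFeasible$_\pm$}, that no agent lies in more than one cycle of odd length at least 3 of a GSP. The cycles in $\mathcal{O}_I^{\geq3}$ are thus pairwise vertex-disjoint, and each contains at least three agents. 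Hence $\vert\mathcal{O}_I^{\geq3}\vert\leq\lfloor n/3\rfloor$, which gives $\sum_{a_i\in A}\vert c_i'-c_i\vert\leq n/3=O(n)$.

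For tightness, I will construct a family of instances on $n=3m$ agents (padding if necessary) with exactly $m$ disjoint odd cycles in a GSP. The natural building block is the classical unsolvable {\sc sr} triple. Take agents $x_j,y_j,z_j$ for $1\leq j\leq m$, all with capacity 1. Within block $j$, let the preferences be cyclic, $x_j: y_j\succ z_j$, $y_j: z_j\succ x_j$, $z_j: x_j\succ y_j$, and let all agents of other blocks be ranked below these. I then claim that $\Pi=\prod_{j=1}^m(x_j\;y_j\;z_j)$ is a GSP.

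The checks proceed as follows. (F1) holds because each agent prefers its successor to its predecessor. (F3) holds because each agent lies in exactly one cycle and has capacity 1. (F4) is immediate, since the cycles are disjoint. The main point to verify is (F2): no pair $a_i,a_j$ outside a transposition may strictly prefer each other to their predecessors. Within a block, each agent's predecessor is its second choice, and the only agent it prefers to that predecessor is its successor. For the pair to block, the successor would in turn have to prefer the agent to its own predecessor, which is false because the successor's predecessor is precisely that agent. Across blocks, every agent ranks its own predecessor above all outside agents, so no cross-block pair can violate (F2). Consequently $\vert\mathcal{O}_I^{\geq3}\vert=m=n/3$, and Theorem~\ref{thm:mam} yields a MAM value of exactly $n/3=\Theta(n)$.

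For the case $n\not\equiv 0\pmod 3$, I will add one or two extra agents ranked last by everyone, who in turn rank each other first. These extra agents form a transposition or a fixed point and do not affect (F2). The step requiring the most care is the (F2) verification across blocks, together with confirming that Theorem~\ref{thm:gsp} guarantees the odd cycles are invariant, so that the count from this single GSP is the correct one.
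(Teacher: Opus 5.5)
Your proof is correct, and its skeleton matches the paper's: both use Theorem~\ref{thm:mam} to reduce the question to bounding $\vert\mathcal{O}_I^{\geq3}\vert$. The difference is in how each half is justified.

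For the upper bound, the paper only says it ``follows directly'' from Theorem~\ref{thm:mam}, implicitly using $\vert\mathcal{O}_I^{\geq3}\vert\leq n/3$ as recorded for the \texttt{NearFeasible} algorithms. You spell out the vertex-disjointness argument instead. An even cheaper route is Proposition~\ref{prop:mim}: a MIM$_\text{+}$ (resp.\ MIM$_-$) function moves each capacity by at most $1$, so every MAM value is at most $n$ with no structural input.

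For tightness, the paper constructs nothing. It cites the tight bound $\vert\mathcal{O}_I\vert\leq\lfloor n/3\rfloor+((n \bmod 3)\bmod 2)$ for {\sc sr}, and the infinite family, from \cite{glitzner2025empirics}. You build the family yourself (disjoint directed $3$-cycles with unit capacities and outsiders ranked last) and check (F1)--(F4) directly. This is the same construction the paper later uses for the $O(n^2)$ blocking-pair bound.

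Your version is self-contained and counts cycles of length at least $3$ directly. The cited bound also counts fixed points, so the paper's step tacitly relies on complete lists allowing at most one fixed point. The paper's version is shorter but rests on an external result.

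Two minor points. The padding for $n\not\equiv 0\pmod 3$ is unnecessary, since only some infinite family is required. You should also state explicitly that your instances are unsolvable (they are, since $\vert\mathcal{O}_I^{\geq3}\vert=m\geq 1$), because MAM functions and Theorem~\ref{thm:mam} are only defined for unsolvable instances.
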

\begin{proof}
    The upper bound $\sum_{a_i\in A} \vert c_i'-c_i\vert= O(n)$ follows directly from Theorem \ref{thm:mam}. For the lower bound, it suffices to note that \citet{glitzner2025empirics} observed the tight upper bound $\vert\mathcal{O}_I\vert\leq\left\lfloor\frac{n}{3}\right\rfloor+((n$ mod $3)$ mod $2)$ (where $\mathcal{O}_{I}$ includes both $\mathcal{O}_{I}^{\geq 3}$ and cycles of length 1) for {\sc sr}, so $\sum_{a_i\in A} \vert c_i'-c_i\vert= \Omega(n)$ follows from the fact that {\sc sr} is a special case of {\sc sf} and the tightness of the bound was demonstrated through the construction of an infinite family of instances.
\end{proof}

Conceptually, the sets of capacity functions that are optimal at the individual level intersect with those optimal at the aggregate level. While MIM$_+$ and MIM$_-$ functions are incompatible (unless the original instance is solvable), each has a non-empty intersection with MIM$_{\pm}$ functions. Similarly, MAM$_+$ and MAM$_-$ functions are incompatible, but each intersects with MAM$_{\pm}$ functions. Interestingly, MIM$+$ and MAM$+$ functions also share a non-empty intersection because both can be computed by \texttt{NearFeasible$_+$}, and similarly for the minus variants. Note, though, that it is not necessarily true that MIM$_\text{+}$ and MAM$_\text{+}$ functions (similarly for minus) are identical, and neither one is a subset of the other.

\section{Instability and Incentives}
\label{sec:instability}

While minimal capacity modifications ensure solvability, they may introduce substantial incentives for agents to deviate from the computed matching when evaluated against the original instance. In this section, we introduce tools to quantify and minimise these incentives, ensuring that computed solutions are both feasible and likely to be adhered to by self-interested agents.

\subsection{Motivating Observations}

In stable matching theory, the instability of a matching is usually quantified in terms of the number of blocking pairs and sometimes in terms of the number of distinct blocking agents involved in these pairs \cite{matchup,chen17}. Eriksson and Häggström \cite{Eriksson2008} argued that ``the proportion of blocking pairs is usually the best measure of instability'' and Manlove \cite{matchup} noted that the number of blocking pairs of a matching is considered proportional to the likelihood of a matching being undermined by a pair of agents \cite{matchup}. Specifically, a higher number of blocking pairs is considered to indicate a larger amount of instability due to more pairs of agents having both an incentive and an opportunity to cooperate and deviate from the matching, reorganising among themselves instead, and potentially causing unravelling among a larger set of agents. This could be due to agents having free capacity or due to them preferring each other to their respective worst partners (or a mixture of these between the two agents).

Consider, for a start, the number of blocking pairs introduced by decreasing an agent's capacity. To illustrate this, consider an {\sc sf} instance $I$ and an agent $a_i$ with the following preference list: 
$$a_i: a_{r_1}\;a_{r_2}\;a_{r_3}\;a_{r_4}\dots a_{r_{n-1}}$$
Now suppose that $c_i=2$ and let $\Pi$ be a GSP of $I$. Suppose, furthermore, that $a_i$ is in a transposition $(a_i\;a_{r_1})$ and in an odd-length cycle $C=(a_i \;a_{r_2}\;a_{s_1}\dots a_{s_k}\; a_{r_3})$ (for any agents $a_{s_j}$ such that $k$ is even) in $\Pi$. Recall from Theorem \ref{thm:gsp} that $C$ is invariant among all GSPs of $I$. In the previous section, we explained that the algorithm \texttt{NearFeasible$_\pm$} picks exactly one agent from $C$ arbitrarily and either increases or decreases their capacity by 1. \texttt{NearFeasible$_-$} also picks an arbitrary agent contained in $C$ and decreases their capacity (and, of course, \texttt{NearFeasible$_\text{+}$} increases their capacity). Now suppose that the algorithm picks $a_i$ as the agent to modify from $C$ and decreases their capacity, i.e., $c_i'=1$ in the alternative capacity function $c'$. Then, in the worst case, the near-feasible stable matching $M$ returned by \texttt{NearFeasible$_\pm$} (or \texttt{NearFeasible$_-$}) may admit a set of blocking pairs including all of $\{a_i,a_{r_3}\},\{a_i,a_{r_4}\},\dots,\{a_i,a_{r_{n-1}}\}$, i.e., $a_i$ contributes $n-3$ blocking pairs with respect to the original {\sc sf} instance $I$, where $c_i=2$. This is clearly undesirable as it indicates a very high incentive for $a_i$ (and also for the large set of agents blocking with $a_i$) to deviate from the computed matching $M$. The motivation for thinking about the agent incentives with respect to $I$ rather than with respect to the new instance corresponding to the new capacity function $c'$ is practical in nature: we assume that agents report their preferences and capacities truthfully in $I$ and then compute a solution in $I'$ which might involve changed capacities, but we did not \emph{convince} agents to change their capacities. Instead, we made \emph{choices} about whose capacities to change. To compute a solution that agents will (ideally even \emph{want} to) comply with, we shall aim to decrease the agents' incentives to reorganise themselves after the release of the computed matching.

The statement below generalises our observation about the potential instability of near-feasible stable matchings asymptotically.

\begin{prop}
    Let $M$ be a near-feasible stable matching with respect to some \emph{MAM$_{\pm}$} or \emph{MAM$_{-}$} alternative capacity function. Then, with respect to the original instance without modified capacities, the bound $\vert bp(M)\vert=O(n^2)$ applies, where $n$ is the number of agents of the instance. This bound is tight for some families of instances.
\end{prop}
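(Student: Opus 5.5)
The upper bound is trivial, so the real content of the statement is tightness. There are at most $\binom{n}{2}$ unordered pairs of agents, so for any matching $M$ we have $\vert bp(M)\vert\leq \binom{n}{2}=O(n^2)$. This holds regardless of which capacity function $M$ was computed under. For the lower bound, I would construct an infinite family of unsolvable instances, together with a MAM$_{\pm}$ or MAM$_-$ capacity function $c'$ and a stable matching $M$ of $I'=(A,\succ,c')$, such that $M$ admits $\Omega(n^2)$ blocking pairs in the original instance $I$.

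The construction scales up the motivating observation given just before the statement. There, a single agent whose capacity is decreased can contribute $n-3$ blocking pairs. To reach $\Omega(n^2)$, I want $\Theta(n)$ such agents, and Theorem \ref{thm:mam} forces this to come from $\Theta(n)$ odd cycles of length at least 3. By Theorem \ref{thm:mam}, any MAM$_-$ function changes exactly $\vert\mathcal{O}_I^{\geq3}\vert$ units, and \texttt{NearFeasible$_-$} realises such a function by decreasing one agent per odd cycle. The cycles are disjoint in the relevant agent, so $\Theta(n)$ of them can be packed into $n$ agents. Concretely, I would take $k=\lfloor n/3\rfloor$ disjoint triples $T_\ell=\{x_\ell,y_\ell,z_\ell\}$, each arranged as a standard odd {\sc sr} 3-cycle: $x_\ell$ prefers $y_\ell$ then $z_\ell$, $y_\ell$ prefers $z_\ell$ then $x_\ell$, and $z_\ell$ prefers $x_\ell$ then $y_\ell$. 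Every agent has capacity 1 and ranks agents of its own triple at the top, with the rest of the list arranged as described below. The GSP is then $\Pi=\prod_\ell (x_\ell\;y_\ell\;z_\ell)$ with $\vert\mathcal{O}_I^{\geq3}\vert=k$. I would verify that F2 holds, since cross-triple pairs block nothing because every agent's predecessors are within its triple. I would also check that $n\equiv 0\pmod 3$ can be assumed, or handle the leftover agents with a small gadget.

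Next I apply \texttt{NearFeasible$_-$}, choosing agent $x_\ell$ in each triple and decreasing its capacity from 1 to 0. By the argument of case 1 in the proof of Theorem \ref{thm:mam}, the resulting stable matching is $M=\{\{y_\ell,z_\ell\}:\ell\}$, with every $x_\ell$ unmatched. In $I$, each pair $\{x_\ell,x_m\}$ with $\ell\neq m$ is blocking, because both endpoints are unmatched and have capacity 1. That gives $\binom{k}{2}=\Omega(n^2)$ blocking pairs. This requires no constraints on the cross-triple parts of the preference lists, which removes the need for any careful ordering. Since this $c'$ is both MAM$_-$ and, by Theorem \ref{thm:mam}, MAM$_{\pm}$, the example covers both cases of the statement.

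The main obstacle is making sure that $M$ really is stable in $I'$ and that $\Pi$ is a valid GSP of $I$. The worry is that pairs $\{y_\ell,y_m\}$, or pairs involving $x_\ell$ in $I'$, might block. In $I'$ the agent $x_\ell$ has capacity 0, so it cannot appear in any blocking pair. The agents $y_\ell$ and $z_\ell$ are matched to each other, and this is each one's top choice among feasible partners, provided $y_\ell$'s first choice is $z_\ell$ and $z_\ell$ ranks $y_\ell$ above every agent outside its triple. Since $z_\ell$ prefers $x_\ell$ first and $y_\ell$ second, I would place all cross-triple agents after the triple in every list, which ensures this. With that ordering the checks above go through, and the construction gives tightness.
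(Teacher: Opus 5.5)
Your proposal is correct and follows essentially the same route as the paper: the trivial $\binom{n}{2}$ upper bound, then $n/3$ disjoint directed 3-cycles with unit capacities and all other agents appended after the triple, with \texttt{NearFeasible$_-$} leaving one agent per triple unmatched so that these agents pairwise block in the original instance, and Theorem \ref{thm:mam} (every MAM$_-$ function is MAM$_{\pm}$) covering both cases of the statement. Your explicit checks that $\prod_\ell(x_\ell\;y_\ell\;z_\ell)$ is a valid GSP and that $\{\{y_\ell,z_\ell\}\}$ is stable in $I'$ fill in details that the paper leaves implicit.
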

\begin{proof}
    Clearly $bp(M)\leq \frac{n(n-1)}{2}$ by the number of possible unordered pairs of agents in the instance. Now to see that $bp(M)=\Omega(n^2)$ in the worst case, consider an infinite family of instances with $n\in 3\mathbb{N}$ agents, $c_i=1$ for all $a_i\in A$, and with preferences constructed as follows: partition the set of agents into sets of size 3 and construct the preferences of these agents within each set such that they form cycles of length 3 in any GSP $\Pi$ of the instance (i.e., directed preference cycles). Then append all remaining agents to the end of their preference lists in arbitrary order. Clearly $\vert\mathcal{O}_I^{\geq 3}\vert=\tfrac{n}{3}$, so in a MAM$_{-}$ capacity function $c'$ computed using {\tt NearFeasible$_-$} decreases the capacity of $\tfrac{n}{3}$ agents and leaves them unmatched in the resulting near-feasible stable matching $M$. However, by complete preferences, all these unmatched agents form blocking pairs with each other with respect to the original instance, i.e., $\vert bp(M)\vert \geq \tfrac{n/3(n/3-1)}{2}=\tfrac{1}{18}n^2-\tfrac{1}{6}n$ and hence the lower bound for MAM$_{-}$ follows. Notice that we showed that any MAM$_{-}$ capacity function is a MAM$_{\pm}$ capacity function in Theorem \ref{thm:mam}, so the result follows.
\end{proof}

Now, on the other hand, if $a_i$'s capacity were increased rather than decreased, then $a_i$ could be matched to all of $a_{r_1}, a_{r_2}$ and $a_{r_3}$ in the computed matching $M$, i.e., $a_i$'s true reported capacity would be exceeded by 1. Although the solution is technically not a valid matching in the original instance, from the perspective of blocking pairs, we detect no issues: $a_i$ does not contribute any blocking pairs to $bp(M)$ with respect to the original instance. Still, $a_i$ clearly has an \emph{incentive} not to engage with $a_{r_3}$ (e.g., due to limited time or other limited resources) because they reported that they can, or at least prefer to, have at most two matches, but they are now matched to three agents and $a_{r_3}$ is the worst among them. Arguably, agents $a_i,a_{r_3}$ may be blocking \emph{asymmetrically}, i.e., it could be the case that $a_i$ would like to drop $a_{r_3}$ from its set of matches, but $a_{r_3}$ does not want to drop $a_i$. This asymmetry is crucial in practice: an agent may technically be matched beyond their reported capacity, but the agent may resist cooperating with all assigned matches, leading to instability. This raises the following key question, which we will answer in the remainder of this section: what is a suitable trade-off between capacity increases and decreases, and what impact does it have on the instability of the resulting matching with regard to the original problem instance?

\subsection{Blocking Entries and Optimality}

We begin by extending the classical notion of a blocking pair to this asymmetrical phenomenon as a quantitative measure of instability and refer to it as a \emph{blocking entry}. A formal definition follows below.

\begin{definition}[Blocking Entry]
    Let $I=(A,\succ, c)$ be an {\sc sf} instance, let $c'$ be an alternative capacity function for $I$, and let $M$ be a matching in $I'=(A,\succ,c')$. Then, for two distinct agents $a_i, a_j\in A$, $a_j$ is a \emph{blocking entry} of $a_i$, denoted by an ordered tuple $(a_i,a_j)$, if 
    \begin{enumerate}[label=(\roman*),leftmargin=2em]
        \item $\{a_i,a_j\}\in M$ and $\vert \{a_r\in M(a_i) \;\vert\; a_r\succ_i a_j\}\vert \geq c_i$, or 
        \item $\{a_i,a_j\}\notin M$, $\vert\{a_r\in M(a_i)\;\vert\;a_r\succ_i a_j\}\vert < c_i$ and $\vert\{a_r\in M(a_j)\;\vert\;a_r\succ_j a_i\}\vert < c_j$.
    \end{enumerate}
    Let $be_i(M)$ denote the set of blocking entries of $a_i$ admitted by $M$ and let $be(M)=\bigcup_{a_i\in A}be_i(M)$.
\end{definition}

Intuitively, a blocking entry represents a single agent's incentive to either drop or add a match, extending the classical notion of blocking pairs to capture asymmetries due to capacity violations. Notice that while the blocking entries $(a_i,a_j)$ and $(a_j,a_i)$ need not be equivalent, if $(a_i,a_j)\in be(M)$ due to condition (ii), then also $(a_j,a_i)\in be(M)$. At this point, it is important to understand the differences between blocking pairs and blocking entries. Part (ii) of the definition mirrors precisely the condition of a blocking pair, and part (i) extends the definition to agents blocking because someone's capacity is exceeded. Conceptually, blocking entries quantify the amount of instability of a matching by counting the number of misaligned agent incentives between reported preferences and capacities and the computed matching. Fundamentally, blocking entries treat positive incentives (to engage with someone that they are not matched to) the same as negative incentives (to disengage with someone that they are matched to). In particular, if a matching is stable, then it does not admit any blocking pairs or any blocking entries (and vice versa). Thus, an instance is solvable if and only if there exists a matching with no blocking entries. 

We will now introduce the following objective that formally captures our goal of minimising agent incentives to deviate at the individual level.

\begin{definition}
    Let $I=(A,\succ,c)$ be an {\sc sf} instance, let $C$ be a set of alternative capacity functions, let $M$ be a matching in $I'=(A,\succ,c')$ for some $c'\in C$, and let $k=\max_{a_i\in A}\vert be_i(M)\vert$. Then $M$ is a \emph{minimal individual deviation incentive} (\emph{MIDI}) matching with respect to $C$ (which we can also denote by \emph{MIDI}$_C$) if for all $c''\in C$ and for all matchings $M'$ of $I''=(A,\succ,c'')$, it is the case that $k\leq \max_{a_i\in A}\vert be_i(M')\vert$.
\end{definition}

MIDI matchings minimise the maximum incentive of any individual agent to deviate among all feasible matchings, where feasibility is controlled through a collection of capacity functions $C$. This is critical in multi-agent settings where a single strongly dissatisfied agent can trigger unravelling. 

Similar to the previous section, we now introduce an analogous optimality notion aimed at the aggregate level, which we will refer to as MADI. MADI matchings minimise the total incentive to deviate, aiming to stabilise the system as a whole.

\begin{definition}
    Let $I=(A,\succ,c)$ be an {\sc sf} instance, let $C$ be a set of alternative capacity functions, let $M$ be a matching in $I'=(A,\succ,c')$ for some $c'\in C$, and let $k=\vert be(M)\vert$. Then $M$ is a \emph{minimal aggregate deviation incentive} (\emph{MADI}) matching with respect to $C$ (also denoted by \emph{MADI}$_C$) if, for all $c''\in C$ and for all matchings $M'$ of $I''=(A,\succ,c'')$, it is the case that $k\leq \vert be(M')\vert$.
\end{definition}

\subsection{Minimising Blocking Entries without Capacity Violations}
\label{sec:midimadic}

Consider the role of the set $C$ of alternative capacity functions for the definitions of MIDI and MADI. We start by analysing two extremes: respecting original capacities strictly, and allowing arbitrary capacity modifications. For now, notice that if $C=\{c\}$, i.e., the set consists only of the original capacity function provided as part of the input and no alternatives, then no MIDI$_{\{c\}}$ or MADI$_{\{c\}}$ matching may violate any of the original capacities. Hence, the notion of blocking entries simply breaks down to that of blocking pairs and, therefore, a MIDI$_{\{c\}}$ matching is a matching that minimises the maximum number of blocking pairs that any agent is involved in, and a MADI$_{\{c\}}$ matching is an \emph{almost-stable} matching (which has been studied in the {\sc Stable Roommates} setting, for example, in \cite{abraham06,chen2019computational,chen17}, as previously outlined in Section \ref{sec:relatedwork}). To establish the complexity of computing such optimal matchings, we first note the following property, which follows directly from the definitions of blocking pairs and blocking entries.

\begin{prop}
    \label{prop:madicclaim2}
     Let $I=(A,\succ, c)$ be an {\sc sf} instance and let $M$ be a matching in a near-feasible instance $I'=(A,\succ,c')$. If $\{a_i,a_j\}\in bp(M)$ then $\{(a_i,a_j),(a_j,a_i)\}\subseteq be(M)$. Thus $2\vert bp(M)\vert \leq \vert be(M)\vert$. If $c=c'$, then $\{a_i,a_j\}\in bp(M)$ if and only if $(a_i,a_j) \in be(M)$ and $(a_j,a_i) \in be(M)$. Thus, if $c=c'$, then $2\vert bp(M)\vert = \vert be(M)\vert$ and $2\vert bp_i(M)\vert = \vert be_i(M)\vert$.
\end{prop}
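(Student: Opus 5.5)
The plan is to unpack the definitions of blocking pairs and of blocking entries and compare them condition by condition. Throughout, capacities in blocking entries are measured with respect to the original $c$, while $M$ is a matching in $I'$. For this reason I read $bp(M)$ in the statement as blocking pairs with respect to the original instance $I$, as in the preceding discussion.

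For the first claim, take $\{a_i,a_j\}\in bp(M)$, so $\{a_i,a_j\}\notin M$. Condition (ii) of a blocking pair for $a_i$ says $\vert M(a_i)\vert<c_i$ or $a_j\succ_i \text{worst}_i(M(a_i))$. I will show this implies $\vert\{a_r\in M(a_i)\mid a_r\succ_i a_j\}\vert<c_i$. In the first case, that set is a subset of $M(a_i)$. In the second case, the worst partner of $a_i$ is not in the set. The subtlety is that $\vert M(a_i)\vert$ may exceed $c_i$ because $c_i'>c_i$, so I will handle this case through the strict inequality with the worst partner. Here I expect the main obstacle: if $\vert M(a_i)\vert\ge c_i+2$, excluding only the worst partner does not obviously bring the count below $c_i$. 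I will resolve this by noting that in $I$ the blocking-pair condition is meant in the sense of $a_i$ preferring $a_j$ to its $c_i$-th best partner. Equivalently, I read worst$_i$ restricted to the $c_i$ best partners, which is the natural reading when $c$ is exceeded. Under that reading, the set of partners better than $a_j$ has size less than $c_i$. The symmetric argument for $a_j$ gives condition (ii) of a blocking entry for both ordered tuples, so $(a_i,a_j),(a_j,a_i)\in be(M)$.

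Since distinct unordered pairs yield disjoint sets of ordered tuples, the map $\{a_i,a_j\}\mapsto\{(a_i,a_j),(a_j,a_i)\}$ is injective into pairs of elements of $be(M)$. This gives $2\vert bp(M)\vert\le\vert be(M)\vert$.

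For $c=c'$, condition (i) of a blocking entry can never hold. A matched $a_j\in M(a_i)$ has at most $\vert M(a_i)\vert-1\le c_i-1$ partners ranked above it, so only condition (ii) can apply. When $\vert M(a_i)\vert\le c_i$, condition (ii) for $a_i$ is exactly equivalent to blocking-pair condition (ii): either $a_i$ has free capacity, or it is full and fewer than $c_i$ partners beat $a_j$, meaning its worst partner is below $a_j$. The same holds for $a_j$. Moreover, condition (ii) of blocking entries is symmetric in the two agents. Hence $(a_i,a_j)\in be(M)$ if and only if $(a_j,a_i)\in be(M)$ if and only if $\{a_i,a_j\}\in bp(M)$. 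So $be(M)$ consists exactly of both orientations of each blocking pair, giving $\vert be(M)\vert=2\vert bp(M)\vert$.

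For the per-agent identity, I read $bp_i(M)$ as the blocking pairs containing $a_i$ and $be_i(M)$ as the blocking entries involving $a_i$ in either coordinate. Each blocking pair containing $a_i$ then contributes exactly the two tuples $(a_i,a_j)$ and $(a_j,a_i)$, and every blocking entry involving $a_i$ arises this way, so $\vert be_i(M)\vert=2\vert bp_i(M)\vert$.
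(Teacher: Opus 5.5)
Your unpacking of the definitions is the route the paper takes (it offers nothing beyond ``follows directly from the definitions''). Your treatment of $c=c'$ is correct: type (i) entries cannot occur, condition (ii) is symmetric in the two agents, and for $\vert M(a_i)\vert\le c_i$ it is equivalent to the blocking-pair condition.

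The problem is the per-agent identity. You get $2\vert bp_i(M)\vert=\vert be_i(M)\vert$ only by redefining $be_i(M)$ as the entries containing $a_i$ in either coordinate. The paper instead defines $be_i(M)$ as the blocking entries \emph{of} $a_i$, i.e.\ tuples $(a_i,a_j)$, and uses it that way: in the proof of Theorem~\ref{thm:madigeneral}, only the capacity-modified agents have blocking entries, not their worst partners. With the paper's definition, your own equivalence gives $be_i(M)=\{(a_i,a_j)\mid \{a_i,a_j\}\in bp(M)\}$, hence $\vert be_i(M)\vert=\vert bp_i(M)\vert$. A single blocking pair $\{a_1,a_2\}$ already gives $\vert be_1(M)\vert=1\neq 2=2\vert bp_1(M)\vert$. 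So the factor $2$ in the last identity is a slip in the statement, and you should flag and correct it rather than bend the definition to fit. The corrected version is what the paper actually uses later. The remark after Theorem~\ref{thm:midichard} identifies ``no agent in more than one blocking entry'' with ``no agent in more than one blocking pair''. Theorem~\ref{thm:midichard} itself only needs proportionality.

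In the first claim, your ``$c_i$-th best partner'' reading is not a way around an obstacle but a necessary convention. Your diagnosis of where the literal reading breaks is also off by one. With worst$_i$ taken over all of $M(a_i)$, the claim already fails when $\vert M(a_i)\vert=c_i+1$, which is exactly the overload a capacity increase of one permits. Let $c_i=1$, $c_i'=2$, $M(a_i)=\{a_1,a_3\}$ with $a_1\succ_i a_2\succ_i a_3$, and $a_2$ unmatched. Then $\{a_i,a_2\}$ blocks in the literal sense, since $a_2\succ_i a_3$. Yet $\vert\{a_r\in M(a_i)\mid a_r\succ_i a_2\}\vert=1=c_i$, so $(a_i,a_2)\notin be(M)$. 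Under your reading, the blocking-pair conditions are verbatim condition (ii) of a blocking entry. This is what the paper means by saying part (ii) ``mirrors precisely the condition of a blocking pair'', and the first claim becomes immediate. State explicitly that this is how $bp(M)$ is defined relative to $I$ when $M$ exceeds $c$, rather than presenting it as a resolution of the case $\vert M(a_i)\vert\ge c_i+2$.
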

    
Now, we can immediately establish the intractability of computing a MIDI$_{\{c\}}$ matching.

\begin{theorem}
\label{thm:midichard}
    Let $I=(A,\succ,c)$ be an {\sc sf} instance. The problem of finding a \emph{MIDI}$_{\{c\}}$ matching in $I$ is {\sf NP-hard} with respect to the optimum value.
\end{theorem}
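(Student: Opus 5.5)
The plan is to reduce from a known hard minimax almost-stability problem for {\sc Stable Roommates}, using that {\sc sr} is the special case of {\sc sf} with $c_i=1$ for all agents. With $C=\{c\}$, no matching under consideration may exceed the original capacities, so condition (i) in the definition of a blocking entry never applies. Proposition~\ref{prop:madicclaim2} then gives $\vert be_i(M)\vert = 2\vert bp_i(M)\vert$ for every agent $a_i$ and every matching $M$ of $I$, where $bp_i(M)$ is the set of blocking pairs of $M$ that contain $a_i$. It follows that $\max_{a_i}\vert be_i(M)\vert = 2\max_{a_i}\vert bp_i(M)\vert$. So a matching is MIDI$_{\{c\}}$ exactly when it minimises the maximum number of blocking pairs any single agent belongs to. Computing a MIDI$_{\{c\}}$ matching is therefore at least as hard as the minimax blocking-pairs problem on the embedded instance.

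The source problem is the result of Glitzner and Manlove \cite{glitznermanloveminimax} cited in Section~\ref{sec:relatedwork}: deciding whether an {\sc sr} instance admits a matching in which no agent is in more than one blocking pair is {\sf NP-complete}. Given such an {\sc sr} instance, I build the {\sc sf} instance with the same agents and preferences and all capacities equal to $1$. If preferences there are incomplete, I will either invoke the paper's remark that results carry over to incomplete lists \cite{gusfield89}, or complete the lists by appending the unacceptable agents at the end. In the latter case I must check that the padding does not alter the answer, for example by adding gadget agents that every original agent prefers to any padded entry.

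The decision version of finding a MIDI$_{\{c\}}$ matching asks whether some matching has $\max_i\vert be_i(M)\vert\le k$. By the correspondence above, taking $k=2$ captures exactly the question ``is some agent in more than one blocking pair?''. Hence the problem is {\sf NP-hard} even when the optimum value is a fixed constant. This is the sense of ``with respect to the optimum value'': it is para-{\sf NP}-hard, so there is no {\sf XP} algorithm in $k$ unless {\sf P}$=${\sf NP}. A polynomial-time algorithm for MIDI$_{\{c\}}$ matchings would decide the minimax problem by checking whether the returned optimum is at most $2$.

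I expect the main obstacle to be the mismatch of preference models. The cited hardness may hold only for incomplete lists, whereas this paper assumes complete lists. If the complete-list carry-over cannot simply be cited, I would design the padding gadget directly. One option is to add a stable block of dummy agents with mutual first choices, placed after all real agents in the real agents' lists, so that padded pairs never block in an optimal matching. Everything else follows immediately from Proposition~\ref{prop:madicclaim2}.
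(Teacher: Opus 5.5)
Your proposal is correct and is essentially the paper's own proof. With $C=\{c\}$ no type (i) entries occur, so per-agent blocking entries and per-agent blocking pairs differ by at most a fixed factor. Hence MIDI$_{\{c\}}$ matchings are exactly those minimising the maximum number of blocking pairs per agent, and hardness is inherited from the {\sc sr} result of \cite{glitznermanloveminimax}; the paper simply cites that result and does not discuss padding to complete lists. One small correction, inherited from the per-agent part of Proposition~\ref{prop:madicclaim2}: since $be_i(M)$ only contains tuples of the form $(a_i,a_j)$, when $c'=c$ one actually has $\vert be_i(M)\vert=\vert bp_i(M)\vert$, and the factor $2$ holds only for the aggregate count $\vert be(M)\vert=2\vert bp(M)\vert$. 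So the threshold encoding ``no agent in more than one blocking pair'' is $k=1$ rather than $k=2$, which matches how the paper phrases its remark after the theorem; the hardness conclusion is unaffected.
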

\begin{proof}
    It suffices to show the following claim: a matching $M$ of $I$ is \emph{MIDI}$_{\{c\}}$ if and only if, for all matchings $M'$ of $I$, it is true that $\max_{a_i\in A}\vert bp_i(M)\vert\leq \max_{a_i\in A}\vert bp_i(M')\vert$.

    Clearly $M$ is MIDI$_{\{c\}}$ if, for all matchings $M'$ of $I$, it is the case that $\max_{a_i\in A}\vert bp_i(M)\vert\leq \max_{a_i\in A}\vert bp_i(M')\vert$. Proposition \ref{prop:madicclaim2} established that if $M$ is a matching in $I$, then $2\vert bp_i(M)\vert=\vert be_i(M)\vert$. Hence, minimising the maximum number of blocking entries among all agents simultaneously minimises the maximum number of blocking pairs, and vice versa. More specifically, $M$ is MIDI$_{\{c\}}$ if and only if $2\max_{a_i\in A}\vert bp_i(M)\vert\leq 2\max_{a_i\in A}\vert bp_i(M')\vert$, and clearly $2\max_{a_i\in A}\vert bp_(M)\vert\leq 2\max_{a_i\in A}\vert bp_i(M')\vert$ if and only if $\max_{a_i\in A}\vert bp_i(M)\vert\leq \max_{a_i\in A}\vert bp_i(M')\vert$. The claim follows.

    This immediately implies that computing a MIDI$_{\{c\}}$ matching is {\sf NP-hard}, taking advantage of the {\sf NP-hardness} result for the problem of finding a matching that minimises the maximum number of blocking pairs that any agent is in, which was established by \citet{glitznermanloveminimax} even in the {\sc sr} setting.
\end{proof}

We note that our simple reduction based on the result in \cite{glitznermanloveminimax} implies that even deciding whether a MIDI$_{\{c\}}$ exists where no agent is in more than one blocking entry is {\sf NP-complete}, so the problem of finding a MIDI$_{\{c\}}$ matching cannot be in {\sf XP} with respect to the optimal parameter value, unless {\sf P}={\sf NP}. This will contrast with our result for MADI$_{\{c\}}$ matchings in Section \ref{sec:xp}.

We will now show formally that MADI$_{\{c\}}$ matchings are equivalent to almost-stable matchings and thus also intractable to compute.

\begin{theorem}
\label{thm:madichard}
    Let $I=(A,\succ,c)$ be an {\sc sf} instance. The problem of finding a \emph{MADI}$_{\{c\}}$ matching in $I$ is {\sf NP-hard}.
\end{theorem}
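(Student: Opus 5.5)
The plan mirrors the proof of Theorem~\ref{thm:midichard}, with the aggregate objective in place of the individual one. First I will show the claim: a matching $M$ of $I$ is MADI$_{\{c\}}$ if and only if $\vert bp(M)\vert\leq\vert bp(M')\vert$ for every matching $M'$ of $I$, i.e., if and only if $M$ is almost-stable. Then NP-hardness transfers from the known results for almost-stable matchings.

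For the claim, note that since $C=\{c\}$, every candidate matching $M'$ in the definition of MADI$_{\{c\}}$ is a matching of $I$ itself, so $c'=c$. Proposition~\ref{prop:madicclaim2} then applies to every such matching and gives $\vert be(M')\vert=2\vert bp(M')\vert$. The map $x\mapsto 2x$ is strictly increasing, so minimising $\vert be(\cdot)\vert$ over matchings of $I$ is the same as minimising $\vert bp(\cdot)\vert$. Hence the minimisers coincide, and so do the optimal values up to the factor $2$.

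For the hardness transfer, I will invoke the result of Abraham, Bir\'o and Manlove \cite{abraham06} (strengthened in \cite{biro12}) that finding a matching with the minimum number of blocking pairs in an {\sc sr} instance is NP-hard. The version in \cite{biro12} holds even for preference lists of length at most~3. {\sc sr} embeds into {\sc sf} by setting $c_i=1$ for all agents, and this embedding preserves matchings and blocking pairs. Any algorithm that returns a MADI$_{\{c\}}$ matching of an {\sc sf} instance would therefore return an almost-stable matching of the embedded {\sc sr} instance, which gives NP-hardness.

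The main obstacle I expect is the mismatch between incomplete and complete preferences. The paper assumes complete lists, but the sharpest hardness results use short lists. To handle this, I would cite the hardness of almost-stable {\sc sr} with complete lists from \cite{abraham06}, which I believe is established there. If that version is not available, I would pad the incomplete lists by appending unacceptable agents at the ends. This padding preserves the minimum number of blocking pairs only under an additional argument, for example by adding dummy agents so that unacceptable pairs cannot block without increasing the count, and I would have to check that argument carefully.
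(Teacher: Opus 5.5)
Your proposal is correct and follows the paper's proof essentially verbatim. The paper likewise uses Proposition~\ref{prop:madicclaim2} to get $\vert be(M)\vert = 2\vert bp(M)\vert$ when $C=\{c\}$, concludes that MADI$_{\{c\}}$ matchings are exactly the almost-stable matchings, and transfers NP-hardness from the {\sc sr} result of \citet{abraham06} via the embedding $c_i=1$. Your concern about complete lists is unnecessary: that hardness result is already stated for complete preference lists, and it is the one the paper relies on, so no padding argument is needed.
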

\begin{proof}
    It suffices to show the following claim: a matching $M$ of $I$ is \emph{MADI}$_{\{c\}}$ if and only if it is almost-stable, i.e., if for all matchings $M'$ of $I$, it is true that $\vert bp(M)\vert\leq \vert bp(M')\vert$.
    
    Clearly $M$ is MADI$_{\{c\}}$ if, for all $M'$ of $I$, it is the case that $\vert be(M)\vert\leq \vert be(M')\vert$. Proposition \ref{prop:madicclaim2} established that if $M$ is a matching in $I$, then $2\vert bp(M)\vert=\vert be(M)\vert$. Hence, minimising the number of blocking entries simultaneously minimises the number of blocking pairs and vice versa. More specifically, $M$ is MADI$_{\{c\}}$ if and only if $2\vert bp(M)\vert\leq 2\vert bp(M')\vert$, and clearly $2\vert bp(M)\vert\leq 2\vert bp(M')\vert$ if and only if $\vert bp(M)\vert\leq \vert bp(M')\vert$. The claim follows.

    This immediately implies that computing a MADI$_{\{c\}}$ matching is {\sf NP-hard}, taking advantage of the {\sf NP-hardness} result for the problem of finding an almost-stable matching in {\sc sr}, which is a special case of {\sc sf}, established by \citet{abraham06}.
\end{proof}

We will return to these intractability results in Section \ref{sec:exact}.

\subsection{Minimising Blocking Entries in General}
\label{sec:midimadiall}

We now consider the other extreme, where the set $C$ contains all possible alternative capacity functions and we aim to minimise the individual or aggregate incentives to deviate. This is a natural generalisation to study, as it presumes that we want to optimise with respect to the set of all possible pairings of agents, rather than restricting ourselves to matchings that satisfy original agent capacities. We will refer to such optimal matchings for this choice of $C$ as MIDI$_{\infty}$ and MADI$_{\infty}$.\footnote{Note that $C$ only contains functions $c$ that respect $0\leq c_i\leq n-1$ for all agents $a_i$, so $\infty$ indicates a generally large but finite set $C$.}

We first consider the properties and computation of MIDI$_{\infty}$ matchings and show that a well-balanced matching, in the sense that no agent needs to be involved in more than one blocking entry, always exists and can be computed efficiently. 

\begin{prop}
\label{prop:midigeneral}
    Let $I=(A,\succ,c)$ be an {\sc sf} instance. Then any \emph{MIDI}$_{\infty}$ matching $M$ satisfies $\max_{a_i\in A}\vert be_i(M)\vert=\min(\vert \mathcal{O}^{\geq 3}_I\vert,1)$ and one can be computed efficiently using {\tt NearFeasible$_\text{+}$}.
\end{prop}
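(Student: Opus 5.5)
We need to show two things: a lower bound of $\min(\vert\mathcal{O}^{\geq 3}_I\vert,1)$ on $\max_{a_i\in A}\vert be_i(M)\vert$ for every matching $M$ under any capacity function, and a matching produced by {\tt NearFeasible$_\text{+}$} that meets it.

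\textbf{Lower bound.} The lower bound is only nontrivial when $\vert\mathcal{O}^{\geq 3}_I\vert\geq 1$, i.e., when $I$ is unsolvable; this uses the equivalence between solvability and non-emptiness of odd cycles of length at least 3 implicit in Theorem \ref{thm:gsp}. As observed after the definition of blocking entries, an instance is solvable if and only if some matching admits no blocking entries. So if $I$ is unsolvable, every matching $M$ of every $I'=(A,\succ,c')$ has $be(M)\neq\varnothing$, hence $\max_i\vert be_i(M)\vert\geq 1$.

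The one point to check is that this remark is valid when $M$ is a matching of $I'$ with $c'\neq c$. Suppose $be(M)=\varnothing$. Condition (i) never fires, so no agent $a_i$ has a partner ranked below $c_i$ better partners; hence $\vert M(a_i)\vert\leq c_i$ and $M$ is a matching of $I$. Condition (ii) never fires, so $M$ has no blocking pair in $I$. Thus $M$ is stable in $I$. If $I$ is solvable, any stable matching of $I$ with $c'=c$ gives value $0$, so the bound is exactly $0=\min(0,1)$.

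\textbf{Upper bound.} Assume $I$ is unsolvable and let $c'$ and $M$ be the output of {\tt NearFeasible$_\text{+}$}. Then $M$ is stable in $I'$, and exactly one agent $a_r$ per cycle in $\mathcal{O}^{\geq 3}_I$ has $c_r'=c_r+1$; these agents are distinct since no agent lies in two such cycles. We claim each agent has at most one blocking entry w.r.t.\ $c$.

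\begin{itemize}
\item \emph{Unmodified agent $a_i$ ($c_i'=c_i$).} Condition (i) cannot hold, since $\vert M(a_i)\vert\leq c_i$. If $(a_i,a_j)$ held via (ii), then $a_j$ would also satisfy the blocking test for $a_i$ with capacity $c_j'\geq c_j$, so $\{a_i,a_j\}$ would block $M$ in $I'$, contradicting stability. Hence $be_i(M)=\varnothing$.
\item \emph{Modified agent $a_r$ ($c_r'=c_r+1$).} Condition (ii) for $(a_r,a_j)$ would again give a blocking pair in $I'$, since the thresholds $c_r<c_r'$ and $c_j\leq c_j'$ are only harder to meet. So only condition (i) can apply. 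It requires $\{a_r,a_j\}\in M$ with at least $c_r$ partners better than $a_j$, and $\vert M(a_r)\vert\leq c_r+1$. This is possible only for the worst partner of $a_r$, and only when $a_r$ is fully matched at $c_r+1$. Hence $\vert be_r(M)\vert\leq 1$.
\end{itemize}

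Therefore $\max_i\vert be_i(M)\vert\leq 1$, which matches the lower bound, and the running time is that of {\tt NearFeasible$_\text{+}$}.

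The main obstacle is the monotonicity step used in both cases: a blocking-entry condition under $c$ implies a blocking pair under $c'$ whenever $c'\geq c$ pointwise. The blocking-pair definition uses ``$\vert M(a_i)\vert<c_i'$ or prefers to worst'', whereas (ii) counts the partners strictly better than $a_j$. These need to be reconciled by a short case split: either $a_i$ has free capacity in $I'$, or $a_i$ is full and then fewer than $c_i'$ better partners forces a worse partner to exist.
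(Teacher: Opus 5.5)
Your proposal is correct and takes the same approach as the paper. The upper bound comes from {\tt NearFeasible$_\text{+}$}: stability in $I'$ with only increased capacities rules out type (ii) entries, and a $+1$ increase permits at most one type (i) entry. The lower bound comes from the fact that, if $I$ is unsolvable, no matching under any capacity function is free of blocking entries. You also check two steps the paper takes for granted: that a matching under an arbitrary $c'$ with no blocking entries is a stable matching of $I$, and that the blocking conditions under $c$ imply a blocking pair under $c'\geq c$.
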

\begin{proof}
    Consider the matching $M$ computed using {\tt NearFeasible$_+$}. $M$ does not admit any blocking pairs, as $M$ is stable with respect to the near-feasible instance and any modified capacities are increased. Hence, $M$ admits no blocking entries of type (ii). Furthermore, no agent's capacity is increased by more than 1 during the computation of $M$; therefore no agent has more than one blocking entry of type (i). Thus $\max_{a_i\in A}\vert be_i(M)\vert\leq 1$. Optimality of {\tt NearFeasible$_+$} follows easily: $I$ is solvable if and only if $\vert \mathcal{O}^{\geq 3}_I\vert=0$. Thus, if $I$ is solvable then {\tt NearFeasible$_+$} returns a stable matching in $I$ so $be(M)=\varnothing$ and hence $\max_{a_i\in A}\vert be_i(M)\vert=0=\vert \mathcal{O}^{\geq 3}_I\vert$. If $I$ is unsolvable instead, then any MIDI$_{\infty}$ matching $M'$ will have $\max_{a_i\in A}\vert be_i(M')\vert\geq 1\geq \max_{a_i\in A}\vert be_i(M)\vert$ as required.
\end{proof}

This positive tractability result is promising, given the intractability results for the computation of MIDI$_{\{c\}}$ and MADI$_{\{c\}}$ and various other almost-stability problems that we highlighted previously. Even more surprisingly, though, we can also give a positive tractability result and tight characterisation of the optimal value for the computation of MADI$_{\infty}$ matchings, as the following statement shows.

\begin{theorem}
\label{thm:madigeneral}
    Let $I=(A,\succ,c)$ be an {\sc sf} instance. Then any \emph{MADI}$_{\infty}$ matching $M$ has $\vert be(M)\vert=\vert \mathcal{O}^{\geq 3}_I\vert$ and one can be computed efficiently using {\tt NearFeasible$_\text{+}$}.
\end{theorem}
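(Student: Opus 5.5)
The proof splits into an upper bound achieved by {\tt NearFeasible$_\text{+}$} and a matching lower bound valid for every matching under every capacity function. For the upper bound, let $M$ be the output of {\tt NearFeasible$_\text{+}$}, with capacity function $c'$. As in the proof of Proposition \ref{prop:midigeneral}, $M$ is stable in $I'=(A,\succ,c')$ and $c'\geq c$. I will show that $M$ admits no blocking entry of type (ii). Suppose $\{a_i,a_j\}\notin M$ and fewer than $c_i$ partners of $a_i$ in $M$ are preferred by $a_i$ to $a_j$, and likewise for $a_j$. Since $c_i\leq c_i'$, either $\vert M(a_i)\vert<c_i'$ or $a_j\succ_i$ worst$_i(M(a_i))$, and symmetrically for $a_j$. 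Hence $\{a_i,a_j\}$ would block $M$ in $I'$, which contradicts stability.

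Type (i) entries arise only at agents whose capacity was raised, and there are exactly $\vert\mathcal{O}^{\geq3}_I\vert$ such agents. Each has at most $c_i+1$ partners, so only its worst partner is counted. Therefore $\vert be(M)\vert\leq\vert\mathcal{O}^{\geq3}_I\vert$.

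For the lower bound, take any $c''$ and any matching $M'$ of $I''=(A,\succ,c'')$. The idea is to convert blocking entries into capacity changes, so that $M'$ becomes a stable matching of some instance $\hat I=(A,\succ,\hat c)$ with $\sum_i\vert\hat c_i-c_i\vert\leq\vert be(M')\vert$. Theorem \ref{thm:mam} states that every solvable modification has total change at least $\vert\mathcal{O}^{\geq3}_I\vert$. Combining the two gives $\vert be(M')\vert\geq\vert\mathcal{O}^{\geq3}_I\vert$, which proves both optimality and the exact value.

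The construction of $\hat c$ is the main step. The plan is to set $\hat c_i=\max(\vert M'(a_i)\vert,c_i)$ and then correct agents that have type (ii) entries. First, an agent with $\vert M'(a_i)\vert>c_i$ has exactly $\vert M'(a_i)\vert-c_i$ type (i) entries, namely its partners ranked below its top $c_i$. Raising its capacity to $\vert M'(a_i)\vert$ therefore costs exactly its number of type (i) entries. Such an agent is full under $\hat c$, and any pair $\{a_i,a_j\}$ that blocks $M'$ under $\hat c$ has $a_j$ preferred to worst$_i(M'(a_i))$. Second, every remaining blocking pair $\{a_i,a_j\}$ of $M'$ in $(A,\succ,\hat c)$ contributes two type (ii) entries, $(a_i,a_j)$ and $(a_j,a_i)$.

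To kill such a pair cheaply, I would lower capacities to $\vert M'(a_i)\vert$ at agents with free capacity. An agent $a_i$ with $\vert M'(a_i)\vert<c_i$ and no entries blocks with nobody, so the change is needed only where $a_i$ has at least one type (ii) entry. Lowering $\hat c_i$ by $c_i-\vert M'(a_i)\vert$ costs at most that amount. I then need to check that $a_i$ has at least that many type (ii) entries, or otherwise charge each unit of decrease to a distinct entry.

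This charging argument is the delicate part of the proof, and I expect it to be the main obstacle. A cleaner fallback would be to reuse the unit-step argument from the proof of Theorem \ref{thm:mam}. In that version, I would induct on $\vert be(M')\vert$, showing that removing one blocking entry, by adjusting one capacity by one or by editing $M'$, lowers the odd-cycle count by at most one. Then $\vert be(M')\vert=0$ yields a stable matching, which forces the number of odd cycles of length at least 3 to be zero and so gives the bound.
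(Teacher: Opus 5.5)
\paragraph{Verdict.} Your upper bound is correct and is the paper's argument via Proposition~\ref{prop:midigeneral}. Your overall lower-bound strategy is also the paper's: turn the entries of $M'$ into at most $\vert be(M')\vert$ units of capacity change that yield a solvable instance, then invoke Theorem~\ref{thm:mam}. The gap is that you try to make $M'$ \emph{itself} stable by changing only capacities. That cannot work in general, and both of your steps fail.

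\paragraph{Step 1 fails: raising an over-capacity agent creates new blocking pairs.} Take $c_1=1$ and $M'(a_1)=\{a_2,a_3\}$ with $a_2\succ_1 a_4\succ_1 a_3$. Let $a_2,a_3$ have capacity $1$ and rank $a_1$ first, and let $a_4$ be unmatched with $c_4=1$. Then $be(M')=\{(a_1,a_3)\}$. With $\hat c_1=2$, however, $\{a_1,a_4\}$ blocks, although neither $(a_1,a_4)$ nor $(a_4,a_1)$ is an entry with respect to $c$. Keeping $M'$ forces $\hat c_1\geq 2$, so every capacity-only repair costs at least $2>\vert be(M')\vert$.

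\paragraph{Step 2 fails: the charging argument for free agents is false.} Let $a_1,a_2$ be unmatched with $c_1=c_2=k$, and pair off all remaining agents with mutual first choices at capacity $1$. Then $be(M')=\{(a_1,a_2),(a_2,a_1)\}$. Yet with $M'$ fixed you must set $\hat c_1=0$ or $\hat c_2=0$, at cost $k$, and $k$ can be as large as $n-1$.

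\paragraph{The missing idea.} You must edit the matching as well, which is what the paper does. Start from $M_s=M'$ and $c'=c$.
\begin{enumerate}
\item Add every blocking pair of $M'$ (with respect to $c$) to $M_s$, and raise both endpoints' capacities by one. This is charged to the pair's two type (ii) entries.
\item For every type (i) entry $(a_i,a_j)$, delete the edge $\{a_i,a_j\}$ and leave $c_i$ unchanged, so that $a_i$ keeps exactly its top $c_i$ partners.
\item In that deletion, lower $c'_j$ by one only if $(a_j,a_i)$ is not itself an entry. This is charged to $(a_i,a_j)$.
\end{enumerate}
The key invariants are these. An agent is free in $(M_s,c')$ exactly when it was free in $(M',c)$. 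A full agent is willing to take $a_q$ in $M_s$ only if it was already willing in $M'$ with respect to $c$. Hence any blocking pair of $M_s$ in the new instance was already a blocking pair of $M'$, and all of those were added to $M_s$.

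\paragraph{Your fallback.} The induction can be made to work with exactly these unit moves. Adding a blocking pair costs $2$ and removes exactly $2$ entries. Deleting an over-capacity edge costs at most $1$ and removes at least $1$. You must then check that neither move ever creates an entry.

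As written, though, the fallback neither specifies the moves nor checks this. The move you do name, adjusting one capacity by one, can go wrong: raising $c_1$ in the first example removes one entry and creates two. Also, editing $M'$ cannot change the odd-cycle count, which depends only on the capacities. What the induction actually needs is that each step's capacity cost is at most the net number of entries it removes.
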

\begin{proof}
    We have already shown in Proposition \ref{prop:midigeneral} that {\tt NearFeasible$_+$} computes a matching in which every one of the $\vert \mathcal{O}^{\geq 3}_I\vert$ agents whose capacity is modified has exactly one blocking entry and no other agent has any blocking entry. Thus the upper bound $\vert be(M)\vert\leq \vert \mathcal{O}^{\geq 3}_I\vert$ is immediate. It remains to show that also $\vert be(M)\vert\geq \vert \mathcal{O}^{\geq 3}_I\vert$. We will first prove the following claim: for every matching $M_g$ of $I$ that, however, might not respect all capacities $c$, there exists a solvable {\sc sf} instance $I'=(A,\succ,c')$ such that $\sum_{a_i\in A}\vert c_i'-c_i\vert\leq \vert be(M_g)\vert$.

    We will establish this claim by constructing such an instance $I'$ and a stable matching $M_s$ of $I'$ explicitly. To start, let $c'=c$ and let $M_s=M_g$. Now consider each blocking entry $e=(a_i,a_j)\in be^I(M_g)$ (i.e., each blocking entry that $M_g$ admits with respect to $I$) iteratively, starting with blocking entries of type (ii), i.e., the blocking pairs. For every $(a_i,a_j)\in be^I(M_g)$ of type (ii), increase $c_i'$ by 1 and, if not yet present, add $\{a_i,a_j\}$ to $M_s$. Now we are only left with blocking entries of type (i), i.e., those that correspond to matches in $M_g$ (and hence in $M_s$ for now) but violate the capacity constraint of the respective agent. For every such $(a_i,a_j)\in be^I(M_g)$, if also $(a_j,a_i)\in be^I(M_g)$ then remove $\{a_i,a_j\}$ from $M_s$ if not yet removed and leave $c'$ unchanged. If instead $(a_i,a_j)\in be^I(M_g)$ but $(a_j,a_i)\notin be^I(M_g)$, then remove $\{a_i,a_j\}$ from $M_s$ and decrease $c_j'$ by 1 (the intuition here is that when the match with $a_j$ violates $a_i$'s capacity but not vice versa, then by simply removing the match and decreasing $a_j$'s capacity we do not impact the relative blocking potential of $a_j$).
    
    We claim that $be^{I'}(M_s)=\varnothing$ (i.e., $M_s$ admits no blocking entries with respect to $I'$). Clearly, by discarding all worst matches that exceed an agent's capacity, increasing capacities with every new match, and decreasing capacities only with the removal of a match that did not violate an agent's capacity, $M_s$ respects all capacities $c'$ by construction. Hence, $M_s$ does not admit any type (i) blocking entries with respect to $I'$. Now suppose, for the sake of contradiction, that there exists a type (ii) blocking entry $(a_p,a_q)\in be(M_s)$ with respect to $I'$. Then, by definition, $\{a_p,a_q\}\notin M_s$, $\vert\{a_r\in M_s(a_p)\;\vert\;a_r\succ_p a_q\}\vert < c_p'$ and $\vert\{a_r\in M_s(a_q)\;\vert\;a_r\succ_q a_p\}\vert < c_q'$, i.e., $\{a_p,a_q\}$ is a blocking pair of $M_s$, and indeed we will show the contradiction that $\{a_p,a_q\}$ is a blocking pair of $M_g$. 
    
    Let us start by establishing that an agent has free capacity in $M_g$ (with respect to $c$) if and only if they have free capacity in $M_s$ (with respect to $c'$). To see this, consider the following. Suppose first that $\vert M_g(a_i)\vert<c_i$. We know that $\vert M_s(a_i)\vert = \vert M_g(a_i)\vert -t_1+ t_2$, where $t_1$ indicates the number of removed matches due to type (i) blocking entries and $t_2$ indicates the number of new matches due to type (ii) blocking entries. However, if $\vert M_g(a_i)\vert<c_i$ then $a_i$ cannot have any type (i) blocking entries, so any removal of matches due to some other agent's type (i) blocking entry comes with a capacity decrease for $a_i$ of 1. Hence, $c_i'= c_i -t_1 + t_2$ by construction. By rearranging, we have that $\vert M_s(a_i)\vert +t_1-t_2 = \vert M_g(a_i)\vert <c_i=c_i'+t_1-t_2$, so $\vert M_s(a_i)\vert < c_i'$ as required. 
    
    To show the converse direction, suppose that $\vert M_g(a_i)\vert\geq c_i$. Again, clearly $\vert M_s(a_i)\vert = \vert M_g(a_i)\vert -t_1+ t_2$. To be more specific, we can separate $t_1=t_1^a+t_1^b$ into two parts, where $t_1^a$ is the number of removed matches due to type (i) blocking entries of the form $(a_i,a_j)$ and $t_1^b$ is the number of removed matches due to type (i) blocking entries of agents of the form $(a_j,a_i)$ (such that $(a_i,a_j)$ is not a blocking entry). Clearly $\vert M_g(a_i)\vert= c_i+t_1^a$, so $\vert M_s(a_i)\vert = \vert M_g(a_i)\vert -t_1+ t_2=c_i+t_1^a-t_1+t_2=c_i+t_1^a-t_1^a-t_1^b+ t_2=c_i-t_1^b+ t_2$. Furthermore, by construction, $c_i'=c_i-t_1^b+t_2$. Rearranging, we get $c_i=c_i'+t_1^b-t_2$, and therefore $\vert M_s(a_i)\vert =c_i'+t_1^b-t_2-t_1^b+ t_2=c_i'$ as required.

    Now, let us return to the blocking pair $\{a_p,a_q\}$ admitted by $M_s$. We note that $\{a_p,a_q\}\notin M_g$. To see this, suppose that $\{a_p,a_q\}\in M_g$, then, by construction of $M_s$, because $\{a_p,a_q\}\notin M_s$ (by properties of blocking pairs), it must be the case that $\{a_p,a_q\}$ was removed because it was a type (i) blocking entry either of $a_p$ or of $a_q$, or both. However, then either $\vert\{a_r\in M_g(a_p)\;\vert\;a_r\succ_p a_q\}\vert \geq c_p$ or $\vert\{a_r\in M_g(a_q)\;\vert\;a_r\succ_q a_p\}\vert \geq c_q$ (or both). Without loss of generality, suppose that $\vert\{a_r\in M_g(a_p)\;\vert\;a_r\succ_p a_q\}\vert \geq c_p$. 

    Note that, by construction of $M_s$, for every agent $a_i$ that admits a type (i) blocking entry in $M_g$, it holds that $w_i(M_s)\succeq_i a_j$, where $w_i(M_s)$ is the worst agent matched to $a_i$ in $M_s$ (according to $\succ_i$) and $a_j$ is the agent such that $\vert \{a_r\in M_g(a_i)\;\vert\;a_r\succeq_i a_j\}\vert=c_i$. This is because if $a_i$ admits a type (i) blocking entry then $\vert M_g(a_i)\vert>c_i$ (this also justifies the existence of $a_j$); therefore any match in $M_g(a_i)$ worse than $a_j$ (according to $\succ_i$) is discarded and any new match in $M_s(a_i)\setminus M_g(a_i)$ must stem from a blocking pair $\{a_i,a_k\}$ with respect to $M_g$, in which case it is true that $a_k\succ_i w_i(M_s)$ as required.

    Thus, by $\vert\{a_r\in M_g(a_p)\;\vert\;a_r\succ_p a_q\}\vert \geq c_p$, there exists such an agent $a_j$ such that $\vert \{a_r\in M_g(a_p)\;\vert\;a_r\succeq_p a_j\}\vert=c_p$ and $a_j\succ_i a_p$. Therefore, by the fact we established above, $w_p(M_s)\succeq_p a_q$, contradicting that $\{a_p,a_q\}\in bp(M_s)$. Hence, $\{a_p,a_q\}\notin M_g$ as required.
    
    Now, 
    \begin{itemize}
        \item if both $a_p$ and $a_q$ have free capacity in $M_s$ (with respect to $c'$), then they also have free capacity in $M_g$ (with respect to $c$), so $\vert\{a_r\in M_g(a_p)\;\vert\;a_r\succ_p a_q\}\vert \leq \vert M_g(a_p)\vert < c_p$ and $\vert\{a_r\in M_g(a_q)\;\vert\;a_r\succ_q a_p\}\vert \leq \vert M_g(a_q)\vert < c_q$, in which case $\{a_p,a_q\}$ must be a blocking pair of $M_g$, contradicting the fact that all such pairs are matches in $M_s$ by construction.
        
        \item if neither $a_p$ nor $a_q$ has free capacity in $M_s$ (with respect to $c'$), then the agents must prefer each other to their worst partners in $M_s$. Let $w_p(M_s)$ and $w_q(M_s)$ denote these worst partners, respectively. Then, by construction of $M_s$, either $\{a_p,w_p(M_s)\}\in M_g$ or $\{a_p,w_p(M_s)\}\in bp(M_g)$. In the former case, it must be true that $\vert\{a_r\in M_g(a_p)\;\vert\;a_r\succ_p w_p(M_s)\}\vert < c_p$ (otherwise $(a_p,w_p(M_g))$ would be a type (i) blocking entry and so, by construction, $\{a_p,w_p(M_g)\}\notin M_s$, a contradiction) and, in the latter case, it must be true that $\vert\{a_r\in M_g(a_p)\;\vert\;a_r\succ_p w_p(M_s)\}\vert < c_p$. Hence, by $a_q\succ_p w_p(M_s)$, in either case it is true that $\vert\{a_r\in M_g(a_p)\;\vert\;a_r\succ_p a_q\}\vert < c_p$. By the same reasoning, it must be the case that $\vert\{a_r\in M_g(a_q)\;\vert\;a_r\succ_q a_p\}\vert < c_q$. However, then $\{a_p,a_q\}$ is a blocking pair of $M_g$, again leading to the same contradiction that all such pairs are matches in $M_s$ by construction.
        
        \item if (without loss of generality) $a_p$ has free capacity in $M_s$ (with respect to $c'$) but $a_q$ does not, then $a_p$ also has free capacity in $M_g$ (with respect to $c$). Hence, $\vert\{a_r\in M_g(a_p)\;\vert\;a_r\succ_p a_q\}\vert \leq \vert M_g(a_p)\vert < c_p$. Furthermore, we already argued in the point above why, in this situation, it must be the case that $\vert\{a_r\in M_g(a_q)\;\vert\;a_r\succ_q a_p\}\vert < c_q$. Then, again, $\{a_p,a_q\}$ must be a blocking pair of $M_g$, leading to the same contradiction that all such pairs are matches in $M_s$ by construction.
    \end{itemize}    
    Therefore, $be^{I'}(M_s)=\varnothing$ and clearly $\sum_{a_i\in A}\vert c_i'-c_i\vert\leq \vert be^I(M_g)\vert$ as required.

    Now to return to the lower bound $\vert be(M)\vert\geq \vert \mathcal{O}^{\geq 3}_I\vert$ that we set out to prove, suppose for the sake of contradiction that $\vert be(M)\vert< \vert \mathcal{O}^{\geq 3}_I\vert$. Then by our claim above, there exists a solvable {\sc sf} instance $I'=(A,\succ,c')$ such that $\sum_{a_i\in A}\vert c_i'-c_i\vert\leq \vert be(M)\vert<\vert\mathcal{O}^{\geq 3}_I\vert$, which contradicts the fact that any capacity modification that results in a solvable instance requires at least $\vert\mathcal{O}_I^{\geq 3}\vert$ capacity changes, as we previously established in Theorem \ref{thm:mam}. This completes the proof.
\end{proof}

Our positive results above show that optimality on the individual and aggregate levels with regard to minimum deviation incentive is compatible and -- surprisingly -- a matching that satisfies both criteria simultaneously can be computed in polynomial time using the rather simple algorithm {\tt NearFeasible$_\text{+}$}.

\subsection{Compatibility of Minimal Modifications and Minimal Deviation Incentives}
\label{sec:compatibility}

In Section \ref{sec:midimadic}, we considered the computation of MIDI and MADI matchings when all agent capacities need to be respected. In Section \ref{sec:midimadiall}, we considered the computation of such a matching when none of the capacities need to be respected. However, while the former was too restrictive to get polynomial-time algorithms, the latter turned out to be accommodating enough to allow for efficient optimal algorithms. On the other hand, allowing arbitrary capacity violations in favor of minimum instability can have a major impact on participating individuals whose capacity is strongly modified. Therefore, we now return to minimal capacity modifications from Section \ref{sec:midimadic}. How do they perform with respect to our new measures of instability and deviation incentives? Restricting the set of alternative capacities to minimal modifications, we obtain the following result for MIDI matchings.

\begin{prop}
\label{prop:nearfeasiblemidi}
    {\tt NearFeasible$_\text{+}$} computes a matching that has minimal individual deviation incentive while simultaneously requiring minimal individual and aggregate capacity modifications, regardless of whether capacities can only be increased or also decreased. Formally, {\tt NearFeasible$_\text{+}$} computes a matching that is \emph{MIDI}$_{X}$ for $X\in\{\text{\emph{MIM}}_{\pm},\text{\emph{MIM}}_\text{+},\text{\emph{MAM}}_{\pm},\text{\emph{MAM}}_\text{+}\}$. However, {\tt NearFeasible$_-$} does not necessarily return a \emph{MIDI$_{\text{MIM}_{-}}$} or a \emph{MIDI$_{\text{MAM}_{-}}$} matching.
\end{prop}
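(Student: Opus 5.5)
The positive part reduces to Proposition \ref{prop:midigeneral}, combined with the membership facts from Proposition \ref{prop:mim} and Theorem \ref{thm:mam}. Let $M$ be the matching returned by {\tt NearFeasible$_\text{+}$} and $c^+$ the associated capacity function.

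First I would check that $c^+$ lies in each of the four sets. By Proposition \ref{prop:mim}, $c^+$ is MIM$_\text{+}$ and hence MIM$_{\pm}$. By Theorem \ref{thm:mam}, it is MAM$_\text{+}$ and hence MAM$_{\pm}$. So $M$ is a feasible candidate for MIDI$_X$ for every $X$ listed.

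Next I would establish optimality. Each set $X$ is a subset of the set of all capacity functions, so the optimal MIDI$_X$ value is at least the optimal MIDI$_\infty$ value. By Proposition \ref{prop:midigeneral}, the latter equals $\min(\vert\mathcal{O}^{\geq3}_I\vert,1)$, and $M$ attains it. The definitions of MIM and MAM presuppose that $I$ is unsolvable, so $I$ being solvable is degenerate. Hence $\max_i\vert be_i(M)\vert=1$, and no matching under any $c''\in X$ can achieve $0$: that would mean $c''$ admits a matching with no blocking entries with respect to $I$, which is a stable matching of $I$. This proves $M$ is MIDI$_X$.

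For the negative part, I would construct a small unsolvable instance and appeal to the mechanism in the motivating discussion. Take an agent $a_i$ with $c_i=2$, lying in a transposition and in an odd cycle $C$, where {\tt NearFeasible$_-$} may (by its arbitrary choice of agent from $C$) decrease $a_i$'s capacity. After the decrease, $a_i$ has free capacity relative to $c_i$, so every agent $a_j$ that $a_i$ prefers to its worst remaining partner, or every agent if $a_i$ is simply under-matched, and that also has room or a worse partner, forms a type (ii) blocking entry $(a_i,a_j)$. The concrete choice would be an instance in the style of the Table~\ref{table:unsolvablesf} example, or a single 3-cycle of capacity-1 agents padded with extra agents that have free capacity and rank $a_i$ highly. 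The output of {\tt NearFeasible$_-$} then gives $a_i$ at least two blocking entries.

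To finish the negative part, I would exhibit a different MIM$_-$/MAM$_-$ function, obtained by decreasing another agent of $C$, whose stable matching leaves every agent with at most one blocking entry. The main obstacle is designing this instance so that the bad choice is genuinely available to the arbitrary selection rule while a better choice exists. I would try a 3-cycle $(a_1\;a_2\;a_3)$ of capacity-1 agents plus two extra agents with capacity 1. These are arranged so that dropping $a_1$ leaves $a_1$ unmatched with two willing partners, while dropping $a_3$ leaves $a_3$ ranked last by everyone, so it blocks with at most one agent. Since MAM$_-$ and MIM$_-$ coincide here (one change of size 1), both claims follow from this single example.
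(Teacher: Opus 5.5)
Your proposal is correct and essentially follows the paper: the positive part is exactly the paper's argument (the capacity function of {\tt NearFeasible$_\text{+}$} lies in all four sets by Proposition~\ref{prop:mim} and Theorem~\ref{thm:mam}, and MIDI$_\infty$-optimality from Proposition~\ref{prop:midigeneral} passes to any subset containing it), and the negative part uses the same mechanism as the paper's sketch (the arbitrary choice of which odd-cycle agent is decreased), which you usefully make concrete with a 3-cycle plus a transposition $(a_4\;a_5)$. Two harmless slips in that instance: $a_3$ cannot be ranked last by everyone, since F1 forces $a_3\succ_2 a_1$ and so $\{a_2,a_3\}$ always blocks once $a_3$ is decreased (you only need $a_4$ and $a_5$ to rank $a_3$ below each other, which F2 permits); and MIM$_{-}$ need not coincide with MAM$_{-}$, because it also contains solvable functions lowering several capacities by 1, which does not matter since the witness that decreases only $a_3$ lies in both sets.
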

\begin{proof}
    In Proposition \ref{prop:midigeneral} we showed that {\tt NearFeasible$_+$} returns a MIDI$_\infty$ matching and we know by Proposition 3.5 that {\tt NearFeasible$_+$} returns a matching that respects some MIM$_{+}$ capacity function. Again by Proposition 3.5, any MIM$_{+}$ capacity function is a MIM$_{\pm}$ capacity function. We also know by Theorem \ref{thm:mam} that {\tt NearFeasible$_+$} returns a matching that respects some MAM$_{+}$ capacity function. Again by Theorem \ref{thm:mam}, any MAM$_{+}$ capacity function is a MAM$_{\pm}$ capacity function. Hence, optimality of {\tt NearFeasible$_+$} for MIM$_{\pm}$, MIM$_{+}$, MAM$_{\pm}$ and MAM$_{+}$ follows immediately.
    
    To show that {\tt NearFeasible$_-$} does not necessarily return a MIDI$_{\text{MIM}_{-}}$ or a MIDI$_{\text{MAM}_{-}}$ matching, consider an instance containing agents $a_i,a_j$ with preferences such that they are contained in the same cycle of odd length at least 3, but one has their predecessor of this odd cycle as their second choice, while the other has their predecessor of this odd cycle as their last choice. Then it is entirely possible that {\tt NearFeasible$_-$} decreases either $c_i$ or $c_j$ by 1, but while in the former case this might cause linearly many blocking entries for $a_i$ in the number of agents, the latter case causes just one blocking entry for $a_j$. Recall that we showed in Proposition 3.5 that {\tt NearFeasible$_-$} always computes a MIM$_{-}$ capacity function, but now we have shown that, contrary to the procedure of {\tt NearFeasible$_-$}, the choice of which agents' capacities to decrease cannot always be made arbitrarily when aiming to compute a MIDI matching with respect to the MIM$_{-}$ or the MAM$_{-}$ capacity functions.
\end{proof}

The result above shows that among all alternative capacity functions that change the agent capacities by a minimal amount (either only upwards or in either direction), and among all matchings for instances with these alternative capacity functions, a matching that induces minimal individual incentive to deviate with respect to the original instance can be found efficiently using {\tt NearFeasible$_\text{+}$}. However, it also notes that when restricting ourselves to downwards capacity modifications only, {\tt NearFeasible$_-$} does not necessarily find a MIDI matching and does not even give any desirable approximation guarantees -- something we could already expect from the intractability results in Theorems \ref{thm:midichard}-\ref{thm:madichard}. We note that MIDI$_{\text{MIM}_{-}}$ and MIDI$_{\text{MAM}_{-}}$ are rather unnatural notions, as the motivation for them was to never exceed reported agent capacities -- but this is already captured by MIDI$_{\{c\}}$.

Now, let us state similar observations for minimal capacity modifications and minimum deviation incentive at the aggregate level.

\begin{prop}
\label{prop:nearfeasiblemadi}
    {\tt NearFeasible$_\text{+}$} computes a matching that has minimal aggregate deviation incentive while simultaneously requiring minimal individual and aggregate capacity modifications, regardless of whether capacities can only be increased or also decreased. Formally, {\tt NearFeasible$_\text{+}$} computes a matching that is \emph{MADI}$_{X}$ for $X\in\{\text{\emph{MIM}}_{\pm},\text{\emph{MIM}}_\text{+},\text{\emph{MAM}}_{\pm},\text{\emph{MAM}}_\text{+}\}$. However, {\tt NearFeasible$_-$} does not necessarily return a \emph{MADI$_{\text{MIM}_{-}}$} or a \emph{MADI$_{\text{MAM}_{-}}$} matching.
\end{prop}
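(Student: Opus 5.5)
The proof mirrors that of Proposition \ref{prop:nearfeasiblemidi}, with Theorem \ref{thm:madigeneral} playing the role of Proposition \ref{prop:midigeneral}. For the positive part, Theorem \ref{thm:madigeneral} shows that the matching $M$ returned by {\tt NearFeasible$_\text{+}$} satisfies $\vert be(M)\vert=\vert\mathcal{O}^{\geq 3}_I\vert$. It also shows that every matching $M'$ under \emph{any} capacity function, in particular any $c''\in X$, satisfies $\vert be(M')\vert\geq\vert\mathcal{O}^{\geq 3}_I\vert$. So $M$ is MADI$_X$ for every $X$ that contains the capacity function $c'$ used by {\tt NearFeasible$_\text{+}$}, because minimality over a superset of $X$ implies minimality over $X$ once $M$ is feasible for $X$.

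It remains to check that $c'\in X$ for each of the four choices of $X$. Proposition \ref{prop:mim} gives $c'$ as MIM$_\text{+}$, hence MIM$_{\pm}$. Theorem \ref{thm:mam} gives $c'$ as MAM$_\text{+}$, hence MAM$_{\pm}$. These are the same facts used in Proposition \ref{prop:nearfeasiblemidi}, so this step is routine. One small point needs care. Definitions of MIM/MAM are stated for unsolvable instances. If $I$ is solvable, I would treat $c$ itself as the trivial member: {\tt NearFeasible$_\text{+}$} then returns a stable matching with $be(M)=\varnothing$, which is trivially optimal.

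For the negative part, I would adapt the counterexample of Proposition \ref{prop:nearfeasiblemidi} to show a strict gap in the total $\vert be\vert$. Take an odd cycle of length at least 3 containing agents $a_i$ and $a_j$. Arrange that $a_j$'s cycle predecessor is last in $a_j$'s list. Arrange that $a_i$'s cycle predecessor is near the top of $a_i$'s list, with many agents ranked below it who have free capacity or rank $a_i$ highly. Decreasing $c_j$ yields a near-feasible stable matching with only $O(1)$ blocking entries (with respect to $I$). Decreasing $c_i$ lets $a_i$ form $\Theta(n)$ blocking pairs, hence at least $2\cdot\Theta(n)$ blocking entries by Proposition \ref{prop:madicclaim2}. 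Since {\tt NearFeasible$_-$} picks the agent arbitrarily, it may choose $a_i$. Both choices yield MIM$_-$ and MAM$_-$ functions by Proposition \ref{prop:mim} and Theorem \ref{thm:mam}, so the output is not MADI$_{\text{MIM}_-}$ or MADI$_{\text{MAM}_-}$.

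The main obstacle is making this example concrete. A cleaner route is to take $a_i$ to be the agent in the $\Omega(n^2)$ construction from the blocking-pair proposition. There, decreasing capacities of $n/3$ agents in the 3-cycles leaves them all unmatched and mutually blocking. In each 3-cycle one could instead pick agents whose removal creates few blocking pairs, by making the appended tails differ. I would first attempt a small explicit instance, for example 5 agents with one 3-cycle, and verify blocking-entry counts by hand.
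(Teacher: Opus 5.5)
Your proposal is correct and takes the same route as the paper, whose proof reruns Proposition \ref{prop:nearfeasiblemidi} with Theorem \ref{thm:madigeneral} in place of Proposition \ref{prop:midigeneral}: the bound $\vert be(M')\vert\geq\vert\mathcal{O}^{\geq 3}_I\vert$ for every matching under every capacity function, plus the {\tt NearFeasible$_\text{+}$} capacity function being MIM$_\text{+}$ and MAM$_\text{+}$ (hence MIM$_{\pm}$ and MAM$_{\pm}$), and for the negative part the same two-agents-in-one-odd-cycle example. One caution about your suggested ``cleaner route'': in the all-capacity-1 construction with $n/3$ three-cycles, every agent whose capacity is decreased becomes unmatched, so these agents block pairwise with respect to $c$ whichever agent is picked in each cycle (in fact under every MAM$_{-}$ function), so your small single-odd-cycle instance is the better way to exhibit the gap.
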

\begin{proof}
    The proof is identical to that of Proposition \ref{prop:nearfeasiblemidi}, but invoking Theorem \ref{thm:madigeneral} in the places where Proposition \ref{prop:midigeneral} was invoked previously.
\end{proof}

This result again implies compatibility between minimal capacity modifications and minimal deviation incentives in all settings except when only decreasing capacities. Again, though, MADI$_{\text{MIM}_{-}}$ and MADI$_{\text{MAM}_{-}}$ are rather unnatural and  MADI$_{\{c\}}$ would be a more natural aim in this context.

\section{Capacity-Respecting Exponential-Time Algorithms}
\label{sec:exact}

Although most problems of interest turned out to be efficiently solvable in the previous section, the two problems that remained intractable due to direct connections to almost-stable matching problems are those aiming to find a MIDI$_{\{c\}}$ and a MADI$_{\{c\}}$ matching. We will consider two approaches of solving these problems to optimality despite the computational boundaries we identified previously.

\subsection{An {\sf XP} Algorithm for MADI$_{\{c\}}$}
\label{sec:xp}

While we noted in Section \ref{sec:midimadic} why the problem of finding a MIDI$_{\{c\}}$ matching is not in {\sf XP} with respect tot he optimum value (i.e., the minimum maximum number of blocking entries per agent) unless {\sf P}$=${\sf NP}, we will now show that the problem of finding a MADI$_{\{c\}}$ matching, instead, is in {\sf XP} with respect to the optimum value (i.e., the minimum number of blocking entries across all agents). Algorithm \ref{alg:minbp} is a modified brute-force style algorithm and proceeds as follows: it enumerates all possible subsets (of increasing size) of pairs of agents that could be blocking, then, for each candidate subset, makes these potentially blocking pairs of agents mutually unacceptable, and checks whether the resulting instance admits a stable matching using the algorithm by \citet{Irving2007}. The first stable matching encountered is returned as the solution. Note that from hereon, we denote the set of all matchings of an instance by $\mathcal M$.

\begin{algorithm}[!htb]
\renewcommand{\algorithmicrequire}{\textbf{Input:}}
\renewcommand{\algorithmicensure}{\textbf{Output:}}

\begin{algorithmic}[1]
\Require{$I=(A,\succ,c)$ : an {\sc sf} instance}
\Ensure{$M$ : a matching}

\For{$k\in\{0,1,\dots,\vert A\vert(\vert A\vert-1)\}$ in increasing order}
    \For{set $B$ of $k$ pairs of distinct agents in $A$}
        \State make copy $I'$ of $I$
        \State make all pairs in $B$ mutually unacceptable in $I'$
        \State $M\gets$ {\sf IrvingScott}$(I')$
        \If{$M$ is not None}
            \State\Return{$M$}
        \EndIf
    \EndFor
\EndFor
\end{algorithmic}
\caption{Algorithm for finding a matching with the minimum number of blocking pairs}
\label{alg:minbp}
\end{algorithm}

\begin{theorem}
\label{thm:xp}
    Given an {\sc sf} instance with $n$ agents, Algorithm \ref{alg:minbp} returns a \emph{MADI}$_{\{c\}}$ matching in $O(n^{b+2})$ time, where $b=\min_{M\in\mathcal M}\vert be(M)\vert$.
\end{theorem}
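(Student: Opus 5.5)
The proof has two parts: correctness (the returned matching is MADI$_{\{c\}}$) and the running-time bound in terms of $b$.

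\emph{Correctness.} By Theorem \ref{thm:madichard} and Proposition \ref{prop:madicclaim2}, a matching of $I$ is MADI$_{\{c\}}$ exactly when it is almost-stable, and $\vert be(M)\vert = 2\vert bp(M)\vert$ for every $M\in\mathcal M$. So it suffices to show that the algorithm returns a matching of $I$ with the minimum number $b/2$ of blocking pairs.

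\begin{itemize}
    \item \emph{Soundness.} Suppose the algorithm stops at a set $B$ of size $k$ and returns $M$. Then $M$ is stable in $I'$ and uses no pair of $B$, so it respects $c$ and is a matching of $I$. Any blocking pair of $M$ in $I$ that is not in $B$ would still be acceptable in $I'$ and would block there, because the matches and the worst partners are the same in $I$ and $I'$. This contradicts stability in $I'$, so $bp(M)\subseteq B$ and $\vert bp(M)\vert\le k$.
    \item \emph{Completeness.} Let $M^*$ be almost-stable and set $B^*=bp(M^*)$. In the instance where $B^*$ is made mutually unacceptable, $M^*$ is still a matching, since no blocking pair is matched. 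It is also stable there: every pair that blocked it in $I$ is now unacceptable, and deleting acceptable pairs creates no new blocking pairs. Hence the Irving–Scott algorithm returns some stable matching at the latest when $B=B^*$, which happens in round $k=\vert B^*\vert=b/2$.
\end{itemize}

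Since every earlier round has $k<b/2$, soundness means a success there would give a matching with fewer than $b/2$ blocking pairs, which is impossible. So the algorithm terminates exactly in round $b/2$ and returns a matching with $b/2$ blocking pairs, i.e.\ a MADI$_{\{c\}}$ matching.

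\emph{Running time.} Rounds $k=0,\dots,b/2$ enumerate at most $\sum_{k\le b/2}\binom{n(n-1)/2}{k}=O(n^{b})$ sets $B$, using $\binom{N}{k}\le N^k$ with $N\le n^2$. For each set, copying the instance, editing $B$ and running {\sf IrvingScott} take time linear in the instance size, which is $O(n^2)$. This gives $O(n^{b}\cdot n^2)=O(n^{b+2})$ in total.

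The only point needing care is the soundness step: blocking in $I$ versus $I'$ must be compared with identical capacities and identical matches, and a pair declared unacceptable can never appear in the returned matching.
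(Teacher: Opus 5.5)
Your proposal is correct and follows essentially the same route as the paper: it reduces MADI$_{\{c\}}$ to almost-stability via Theorem~\ref{thm:madichard} and Proposition~\ref{prop:madicclaim2}, shows $bp(M)\subseteq B$ for the returned matching and that making $bp(M^*)$ mutually unacceptable leaves $M^*$ stable, and then bounds the enumeration by $\binom{N}{k}\le N^k$ with $O(n^2)$ work per set. Your direct soundness/completeness phrasing is just a reorganisation of the paper's contradiction argument, and the only detail you leave implicit is the geometric-series step $\sum_{k\le b/2}N^k=O(N^{b/2})$, which the paper writes out.
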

\begin{proof}
    We showed in the proof of Theorem \ref{thm:madichard} that a matching is MADI$_{\{c\}}$ if and only if, for all matchings $M'$ of $I$, it is true that $\vert bp(M)\vert\leq \vert bp(M')\vert$. Thus, it suffices to prove that the matching $M$ returned by Algorithm \ref{alg:minbp} satisfies $\vert bp(M)\vert\leq \vert bp(M')\vert$. Clearly, the algorithm terminates, because when $k=\vert A\vert(\vert A\vert-1)$ we make all pairs of agents unacceptable, so a stable matching exists trivially. Now, let $k^*$ be the value for $k$ and let $B^*$ be the set $B$ when the algorithm terminates. Notice that, clearly, $bp(M)\subseteq B^*$, i.e., $\vert bp(M)\vert \leq \vert B^*\vert=k^*$. Now suppose that there exists some matching $M'$ of $I$ such that $\vert bp(M')\vert<k^*$. Then $M'$ is stable in the instance $I'$ in which all pairs of agents in $bp(M')$ are unacceptable. Hence, the algorithm would have terminated at $k'=\vert bp(M')\vert<k^*$ instead, a contradiction. Thus, $k^*$ is minimal, i.e., $k^*=\min_{M\in\mathcal M}\vert bp(M)\vert$ and therefore $M$ is MADI$_{\{c\}}$ as required.

    For the time complexity, let $k^*$ be the value of $k$ when the algorithm terminates. It is easy to see that for $k^*=0$, the main loop is executed exactly once as $B=\varnothing$. For all other values of $k^*$, the total number of iterations of the main loop, taken over the entire algorithm's execution, is at most $\sum_{1\leq r\leq k^*}\binom{n(n-1)}{r}$ many times. We can apply the loose upper bound $\binom{n(n-1)}{r}\leq (n(n-1))^r$ and conclude that $\sum_{1\leq r\leq k^*} (n(n-1))^r=\frac{(n(n-1))^{k^*+1}-1}{n(n-1)-1}$ by the geometric series. Furthermore, the bound $\frac{(n(n-1))^{k^*+1}-1}{n(n-1)-1}\leq \frac{(n(n-1))^{k^*+1}}{n(n-1)-1}\leq 2(n(n-1))^{k^*}=O((n(n-1))^{k^*})=O(n^{2k^*})$ applies. Now, in each execution of the inner for loop, making a copy $I'$ of $I$ and making some agents unacceptable requires $O(n^2)$ time overall, and subsequently executing the algorithm by \citet{Irving2007} also requires $O(n^2)$ time. Thus, the algorithm terminates after at most $O(n^{2k^*+2})$ steps. Given that $k^*=\min_{M\in\mathcal M}\vert bp(M)\vert=\frac{1}{2}\min_{M\in\mathcal M}\vert be(M)\vert=\frac{1}{2}b$, the stated time complexity follows.
\end{proof}

\subsection{Integer Linear Programs}
\label{sec:ilp}

Our final technical contribution consists of two compact integer linear programming (ILP) formulations that can be used to find MIDI$_{\{c\}}$ and MADI$_{\{c\}}$ matchings. We start by providing an ILP for the computation of a MIDI$_{\{c\}}$ matching, leveraging the property from Proposition \ref{prop:madicclaim2} that states that it suffices to minimise the maximum number of blocking pairs that any agent is contained in among all agents. Let $I=(A,\succ,c)$ be a problem instance and consider the ILP below, where $x_{ij},b_{ij}$ and $w_{ij}$ are binary indicator variables that indicate for the pair of agents $a_i-a_j$ whether they are matched (in which case $x_{ij}=1$), they are blocking (in which case $b_{ij}=1$), and whether $a_i$ is matched up to capacity to agents at least least as good as $a_j$ (in which case $w_{ij}=1$).

\begin{align} 
\min &\; r \\
\text{s.t.} \sum_{a_j\in A\setminus\{a_i\}}x_{ij} &\leq c_i & \forall a_i\in A \\ 
x_{ij}&=x_{ji}   & \forall a_i,a_j\in A \\
b_{ij}&=b_{ji}   & \forall a_i,a_j\in A \\
\sum_{a_k:a_k\succeq_ia_j}x_{ik} &\geq c_iw_{ij} &\forall a_i,a_j\in A\\
w_{ij}+w_{ji} + b_{ij} &\geq 1 &\forall a_i,a_j\in A\\
\sum_{a_j\in A\setminus\{a_i\}}b_{ij} &\leq r & \forall a_i\in A\\
x_{ij}, b_{ij}, w_{ij} &\in\{0,1\} &\forall a_i,a_j\in A\\
r &\in\mathbb Z^{\geq 0}
\end{align}

\begin{theorem}
    A matching is \emph{MIDI}$_{\{c\}}$ if and only if it corresponds to an optimal solution to the ILP given by $(1)$-$(9)$.
\end{theorem}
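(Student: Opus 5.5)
The plan is to set up a correspondence between matchings of $I$ and feasible solutions of (1)--(9), and then use Proposition \ref{prop:madicclaim2}. By that proposition, for matchings respecting $c$ we have $\vert be_i(M)\vert = 2\vert bp_i(M)\vert$. So, exactly as in the proof of Theorem \ref{thm:midichard}, a matching is MIDI$_{\{c\}}$ if and only if it minimises $\max_{a_i\in A}\vert bp_i(M)\vert$. It therefore suffices to show two things:
\begin{itemize}
\item every feasible solution $(x,b,w,r)$ yields a matching $M_x=\{\{a_i,a_j\} : x_{ij}=1\}$ with $\max_i \vert bp_i(M_x)\vert\le r$;
\item every matching $M$ of $I$ extends to a feasible solution with $r=\max_i\vert bp_i(M)\vert$.
\end{itemize}
Taking minima on both sides then shows that the optimal $r$ equals $\min_M \max_i\vert bp_i(M)\vert$. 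Optimal solutions map to MIDI$_{\{c\}}$ matchings, and conversely every MIDI$_{\{c\}}$ matching extends to an optimal solution.

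\textbf{From solutions to matchings.} Take a feasible solution $(x,b,w,r)$. Constraints (2)--(3) make $M_x$ a symmetric set of pairs respecting $c$, so $M_x$ is a matching of $I$. Now let $\{a_i,a_j\}\in bp(M_x)$. The blocking condition says that $a_i$ has fewer than $c_i$ partners strictly better than $a_j$, and $a_j\notin M_x(a_i)$. Hence $\sum_{a_k\succeq_i a_j}x_{ik}<c_i$, and constraint (5) forces $w_{ij}=0$. Symmetrically $w_{ji}=0$, so constraint (6) forces $b_{ij}=1$. Then constraint (7) gives $\vert bp_i(M_x)\vert\le\sum_j b_{ij}\le r$.

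\textbf{From matchings to solutions.} Given a matching $M$, set $x$ to be the incidence vector of $M$ and $b_{ij}=1$ exactly for $\{a_i,a_j\}\in bp(M)$. Set $w_{ij}=1$ exactly when $\sum_{a_k\succeq_i a_j}x_{ik}\ge c_i$; this makes (5) hold by construction. Put $r=\max_i\vert bp_i(M)\vert$. For (6), suppose $w_{ij}=w_{ji}=0$. Then $a_i$ has fewer than $c_i$ partners at least as good as $a_j$. Since $a_i$'s partners are at most $c_i$ in number, this should mean that either $a_i$ has free capacity or $a_j$ is better than $a_i$'s worst partner, and likewise for $a_j$. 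If moreover $\{a_i,a_j\}\notin M$, this is exactly the blocking condition, so $b_{ij}=1$.

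\textbf{Main obstacle.} The delicate step is the interplay between (5)--(6) and pairs that are already matched, together with the treatment of the diagonal $i=j$. If $\{a_i,a_j\}\in M$, then $a_j$ itself is counted in $\sum_{a_k\succeq_i a_j}x_{ik}$. It is possible that $a_i$ has free capacity, so $w_{ij}=0$, and $a_j$ also has free capacity, so $w_{ji}=0$. Then (6) forces $b_{ij}=1$ even though the pair is matched and not blocking, which would break the second direction. I would handle this by reading (6) as applying only to unmatched pairs, i.e.\ $w_{ij}+w_{ji}+b_{ij}+x_{ij}\ge1$, or by showing that the intended reading of $w$ covers matched pairs, and I would check carefully which reading makes the stated equivalence exact. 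Likewise I would make explicit that constraints range over distinct $a_i,a_j$. With that settled, the remaining verification is routine and follows the template of Theorem \ref{thm:midichard}.
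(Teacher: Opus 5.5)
You follow the paper's route: identify feasible solutions with matchings through $x$, then use Proposition~\ref{prop:madicclaim2} and Theorem~\ref{thm:midichard} to reduce MIDI$_{\{c\}}$ to minimising $\max_i|bp_i(M)|$. Your solutions-to-matchings direction is correct for the ILP as written.

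The obstacle you flag, however, is real and decisive. The paper's proof passes over exactly this point when it calls its assignment ($w_{ij}=1$ iff $a_i$ has $c_i$ partners ranked at least as high as $a_j$, $b$ the blocking-pair indicator) ``clearly'' feasible. The problem is also wider than the free-capacity case you describe. For $\{a_i,a_j\}\in M$, constraint (5) gives $c_iw_{ij}\le\sum_{a_k\succeq_i a_j}x_{ik}$. This forces $w_{ij}=0$ whenever $a_i$ is below capacity \emph{or} $a_j$ is not $a_i$'s worst partner. If the same holds at $a_j$, then (6) forces $b_{ij}=1$.

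Table~\ref{table:solvablesf} (all capacities $2$) gives a counterexample. Agents $a_1$ and $a_2$ are mutual first choices, so $x_{12}\le 1<2$ gives $w_{12}=w_{21}=0$, hence $b_{12}=1$ in \emph{every} feasible solution. So every feasible solution has $r\ge 1$, even though a stable matching exists. The value $r=1$ is attained by the stable matching $\{\{a_1,a_2\},\{a_1,a_3\},\{a_2,a_3\},\{a_4,a_5\}\}$, with $b_{12}=b_{45}=1$ (the second from your free-capacity case). It is equally attained by $M'=\{\{a_1,a_2\},\{a_1,a_3\},\{a_2,a_3\}\}$, which admits the blocking pair $\{a_4,a_5\}$ and so is not MIDI$_{\{c\}}$. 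The statement is therefore false as written once capacities exceed $1$. Your second option, a reading of $w$ that covers matched pairs, cannot rescue it, since (5) pins $w_{ij}$ to $0$ under any interpretation.

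So you must change the ILP and state the theorem for the amended version. Your first fix works: replace (6) by $w_{ij}+w_{ji}+b_{ij}+x_{ij}\ge 1$ for distinct $a_i,a_j$ (alternatively, replace (5) by $\sum_{a_k\succ_i a_j}x_{ik}+c_ix_{ij}\ge c_iw_{ij}$). Then:
\begin{itemize}
\item a blocking pair still has $x_{ij}=0$ and forces $b_{ij}=1$;
\item matched pairs satisfy the amended constraint automatically;
\item an unmatched non-blocking pair has an endpoint $a_i$ with at least $c_i$ partners strictly better than $a_j$, so $w_{ij}=1$ is admissible.
\end{itemize}
Your two directions then show that the optimal $r$ equals $\min_M\max_i|bp_i(M)|$, which yields both implications. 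This two-sided comparison is also what the paper's ``clearly optimal'' needs.

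The original formulation is fine when $c\equiv 1$, since then $x_{ij}=1$ alone yields $\sum_{a_k\succeq_i a_j}x_{ik}\ge c_i$. The diagonal is harmless because (7) sums over $a_j\neq a_i$. The MADI$_{\{c\}}$ ILP reuses (5)--(6) and inherits the same defect: in the example above, both the stable matching and $M'$ force exactly $b_{12}=b_{21}=b_{45}=b_{54}=1$. It needs the same repair.
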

\begin{proof}
    Let $M$ be a MIDI$_{\{c\}}$ matching. For any $\{a_i,a_j\}\in M$, set $x_{ij}=x_{ji}=1$ and, for any $\{a_i,a_j\}\notin M$, set $x_{ij}=x_{ji}=0$. Furthermore, for any $\{a_i,a_j\}\in bp(M)$, set $b_{ij}=b_{ji}=1$ and, for any $\{a_i,a_j\}\notin bp(M)$, set $b_{ij}=b_{ji}=0$. Also, let $w_{ij}=1$ if and only if $a_i$ has $c_i$ many matches who they rank at least as highly as $a_j$. Finally, let $r=\max_{a_i\in A}\vert bp_i(M) \vert$. Then clearly our assignment represents an optimal solution to the above ILP. 

    Now, conversely, let $\langle r,\mathbf{x},\mathbf{b},\mathbf{w}\rangle$ be an optimal solution to the ILP. Let $M$ consist of all pairs of agents $\{a_i,a_j\}$ such that $x_{ij}=1$. Then it follows easily from the constraints that $M$ respects all agent capacities and minimises the maximum number of blocking pairs per agent. We showed in the proof of Theorem \ref{thm:midichard} that a matching is MIDI$_{\{c\}}$ if and only if, for all matchings $M'$ of $I$, it is true that $\max_{a_i\in A}\vert bp_i(M)\vert\leq \max_{a_i\in A}\vert bp_i(M')\vert$. Thus, the correctness of the ILP follows.
\end{proof}

Now let us turn to the computation of a MADI$_{\{c\}}$ matching. We can keep constraints $(2)$-$(6)$ as well as constraint $(8)$ from the previous ILP, discard the variable $r$, and use the following objective function instead:
\begin{align} 
\min &\; \sum_{a_i\in A}\sum_{a_j\in A\setminus\{a_i\}}b_{ij} 
\end{align}

\begin{theorem}
    A matching is \emph{MADI}$_{\{c\}}$ if and only if it corresponds to an optimal solution to the ILP given by $(2)$-$(6)$, $(8)$ and $(10)$.
\end{theorem}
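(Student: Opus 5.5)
The plan is to mirror the proof of the preceding theorem for the MIDI$_{\{c\}}$ ILP, with the key reduction supplied by Theorem~\ref{thm:madichard}. That theorem shows a matching $M$ of $I$ is MADI$_{\{c\}}$ if and only if it minimises $\vert bp(M)\vert$ over all capacity-respecting matchings. So it suffices to prove two things. First, feasible ILP solutions correspond to matchings $M$ of $I$ whose set of pairs with $b_{ij}=1$ contains $bp(M)$. Second, at an optimum this containment is an equality. Since $b_{ij}=b_{ji}$ by constraint (4), the objective (10) equals $2\vert\{\{a_i,a_j\}:b_{ij}=1\}\vert$, which is $2\vert bp(M)\vert=\vert be(M)\vert$ at an optimum by Proposition~\ref{prop:madicclaim2}.

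For the direction from matchings to solutions, take a MADI$_{\{c\}}$ matching $M$. Set $x_{ij}=1$ exactly for pairs in $M$ and $b_{ij}=1$ exactly for pairs in $bp(M)$. Set $w_{ij}=1$ exactly when $a_i$ has at least $c_i$ partners in $M$ ranked at least as high as $a_j$. Constraints (2)--(5) and (8) then hold by construction; note that (5) is satisfied precisely by this choice of $w$. For (6), consider a pair with $w_{ij}=w_{ji}=0$. Then each of $a_i,a_j$ has fewer than $c$ partners at least as good as the other. If $\{a_i,a_j\}\notin M$, this is exactly the blocking-pair condition, so $b_{ij}=1$. If $\{a_i,a_j\}\in M$, then $a_i$ has fewer than $c_i$ partners ranked $\succeq_i a_j$. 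Because $a_j$ itself is among them and $M$ respects capacities, all of $a_i$'s partners rank at least $a_j$ and $a_i$ has free capacity; here I need to check how (6) behaves for matched pairs. My plan is to argue that an optimal solution may simply keep $b_{ij}=0$ there only if (6) is slack. If it is not slack, I would reinterpret $w_{ij}$ so that it also covers a matched pair in which $a_j$ lies within $a_i$'s top-$c_i$ partners. That is consistent with (5) whenever $\vert M(a_i)\vert=c_i$, but not when $a_i$ has free capacity.

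This matched-pair subtlety is the main obstacle. In writing the proof I would handle it by observing that any feasible solution can only overcount. For any feasible $\langle\mathbf{x},\mathbf{b},\mathbf{w}\rangle$ with induced matching $M$, every genuine blocking pair $\{a_i,a_j\}$ has $w_{ij}=w_{ji}=0$. This is forced by (5), since neither agent has $c$ partners at least as good as the other, so (6) forces $b_{ij}=1$. Hence the objective is at least $2\vert bp(M)\vert$ for every feasible solution. I would then show that the ILP optimum equals $2\min_{M}\vert bp(M)\vert$, with any extra forced $b_{ij}$ on matched pairs treated as an artefact to be checked against the MIDI ILP's identical constraints. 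The paper accepted those constraints there, so I would follow the same reading.

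For the converse, take an optimal ILP solution and let $M=\{\{a_i,a_j\}:x_{ij}=1\}$. Constraints (2)--(3) make $M$ a matching of $I$. By the observation above, $\sum b_{ij}\ge 2\vert bp(M)\vert$. Optimality together with the first direction gives $\sum b_{ij}\le 2\min_{M'}\vert bp(M')\vert$. Therefore $\vert bp(M)\vert$ is minimum, and $M$ is MADI$_{\{c\}}$ by Theorem~\ref{thm:madichard}.
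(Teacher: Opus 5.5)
The gap is the step you flagged and then deferred, and it cannot be closed with the ILP as written. Constraint (6) has no $x_{ij}$ term, so it also applies to matched pairs. For $\{a_i,a_j\}\in M$, constraint (5) permits $w_{ij}=1$ only if $a_i$ is full \emph{and} $a_j$ is its worst partner, because the sum in (5) counts only partners ranked $\succeq_i a_j$.

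Your ``top-$c_i$'' reinterpretation therefore violates (5) whenever $a_j$ is not $a_i$'s worst partner, even when $\vert M(a_i)\vert=c_i$. Consequently, any matched pair in which neither agent is full with the other as its worst partner forces $b_{ij}=b_{ji}=1$. Your assignment, with $b$ the indicator of $bp(M)$, is then infeasible in general. So the bound $\mathrm{OPT}\le 2\min_{M'}\vert bp(M')\vert$ that your converse relies on is unproven, and in fact false. Your other inequality is correct: every feasible solution has objective at least $2\vert bp(M)\vert$ for its induced matching $M$.

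The instance of Table~\ref{table:solvablesf} is a counterexample.
\begin{itemize}
\item Since $a_1$ and $a_2$ are each other's first choice and have capacity 2, $w_{12}=w_{21}=0$ in every feasible solution. Hence $\{a_1,a_2\}$ always costs 2.
\item Setting $b_{13}=b_{23}=0$ forces $a_1,a_2,a_3$ to form a full triangle. After that, $a_4$ and $a_5$ can have at most one partner each, so $w_{45}=w_{54}=0$ and $\{a_4,a_5\}$ costs 2 whether matched or not. Hence every feasible solution has value at least 4.
\item The stable matching $M$ attains 4. So does $M\setminus\{\{a_4,a_5\}\}$, which has the blocking pair $\{a_4,a_5\}$ and is therefore not MADI$_{\{c\}}$, since this instance is solvable.
\end{itemize}
So the ``if'' direction of the statement fails as written. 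Deferring to the MIDI ILP does not help: it has the same constraints (2)--(6) and the same defect. The paper's own assertion that ``clearly our assignment represents an optimal solution'' skips exactly this check.

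The repair is to replace (6) by $x_{ij}+w_{ij}+w_{ji}+b_{ij}\ge 1$. Then your explicit assignment is feasible with objective exactly $2\vert bp(M)\vert$, and your lower-bound argument is unchanged. The equivalence then follows from Theorem~\ref{thm:madichard} as you outline.
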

\begin{proof}
    Let $M$ be a MADI$_{\{c\}}$ matching. For any $\{a_i,a_j\}\in M$, set $x_{ij}=x_{ji}=1$ and, for any $\{a_i,a_j\}\notin M$, set $x_{ij}=x_{ji}=0$. Furthermore, for any $\{a_i,a_j\}\in bp(M)$, set $b_{ij}=b_{ji}=1$ and, for any $\{a_i,a_j\}\notin bp(M)$, set $b_{ij}=b_{ji}=0$. Also, let $w_{ij}=1$ if and only if $a_i$ has $c_i$ many matches who they rank at least as highly as $a_j$. Then clearly our assignment represents an optimal solution to the above ILP. 

    Now, conversely, let $\langle\mathbf{x},\mathbf{b},\mathbf{w}\rangle$ be an optimal solution to the ILP. Let $M$ consist of all pairs of agents $\{a_i,a_j\}$ such that $x_{ij}=1$. Then it follows easily from the constraints that $M$ respects all agent capacities and minimises the number of blocking pairs. We showed in the proof of Theorem \ref{thm:madichard} that a matching is MADI$_{\{c\}}$ if and only if, for all matchings $M'$ of $I$, it is true that $\vert bp(M)\vert\leq \vert bp(M')\vert$. Thus, the correctness of the ILP follows.
\end{proof}

\section{Experiments on Optimal Modifications and Blocking Entries}
\label{sec:exp}

In this paper, we set out to investigate the incentive and optimality landscape surrounding near-feasible stable matchings. We laid out a new framework to analyse these issues, and analysed the computational complexity of optimisation problems within this framework. Now, we will aim to answer the following questions experimentally: what is the average minimum number of capacity modifications we need to make to random {\sc sf} instances to arrive at a (modified) solvable instance? Similarly, what is the average minimum number of blocking entries we need to accept for random {\sc sf} instances?

We provide answers to these questions using synthetically generated instances. As is common practice in synthetic experiments for matching under preferences (e.g., see references \cite{delorme2019mathematical, pettersson2021improving,mertens15random,glitzner2025empirics}), we consider preferences over agents sampled uniformly at random. We characterise instances by their number of agents $n$ and a fixed capacity $c$ that applies to all agents, and average our results over 1000 instances per $(n,c)$ pair and algorithm. We chose $n\in\{10,12,14,\dots,40\}$ and $c\in\{1,3,5,7\}$. All implementations were written in Python, and all computations were performed on the {\tt fataepyc} cluster. We used four compute nodes at a time, each equipped with dual AMD EPYC 7643 CPUs and 2TB RAM. All code is publicly available; see reference \cite{experimentsCode}. The generated instances are seeded and can thus be easily replicated. All ILP models are solved using the PuLP package with the GurobiPy solver. We note that the execution of 83 out of the 192,000 total configuration and algorithm combinations did not terminate within a two-hour time limit, even after rerunning the experiment, and we excluded these executions from our analysis.

First, we report on the average number of optimal capacity modifications (in any direction) and on the average minimum amount of instability that needs to be accepted when capacity increases are permitted. We showed in Theorem \ref{thm:mam} that the minimum total number of capacity modifications is completely determined by the structural property $\mathcal{O}_I^{\geq 3}$ of a given {\sc sf} instance $I$. In Theorem \ref{thm:madigeneral}, we showed that the same is true for the minimum number of blocking entries when capacity increases are permitted. Thus, we measured $\vert \mathcal{O}_I^{\geq 3}\vert$ across our range of different sizes and capacity functions. Our results are summarised in Figure \ref{fig:sizevextensions}. The first observation to make is that the average number of capacity extensions appears to grow linearly with the number of agents. However, we also note that the absolute number of capacity changes is very small on average -- in fact, the average is below 1, which is due to the presence of solvable instances that require no capacity modifications. 

\begin{figure}[!htb]
    \centering
    \begin{tikzpicture}
        \begin{axis}[
            width=.99\textwidth,
            height=5cm,
            ylabel={Extensions},
            ymin=0.1,
            ymax=0.55,
            xmin=9,
            xlabel={Number of agents},
            xmax=41,
            grid=both,
            grid style={dashed, gray!30},
            cycle list name=color list,
            every axis plot/.append style={thick},
            title={},
            title style={
                yshift=-1.5ex
            },
            legend cell align={left},
            legend style={
                at={(1,0)},
                anchor=south east,
                column sep=1ex,
            },
        ]
        \addplot[acmDarkBlue, mark=*] table [x=n, y=c1] {data/capext.txt};
        \addplot[acmGreen, mark=square*] table [x=n, y=c2] {data/capext.txt};
        \addplot[acmPink, mark=triangle*] table [x=n, y=c3] {data/capext.txt};
        \addplot[acmOrange, mark=diamond*] table [x=n, y=c4] {data/capext.txt};
        \addlegendentry{$c=1$}
        \addlegendentry{$c=3$}
        \addlegendentry{$c=5$}
        \addlegendentry{$c=7$}
        \end{axis}
    \end{tikzpicture}
    \caption{Average number of capacity extensions}
    \label{fig:sizevextensions}
\end{figure}
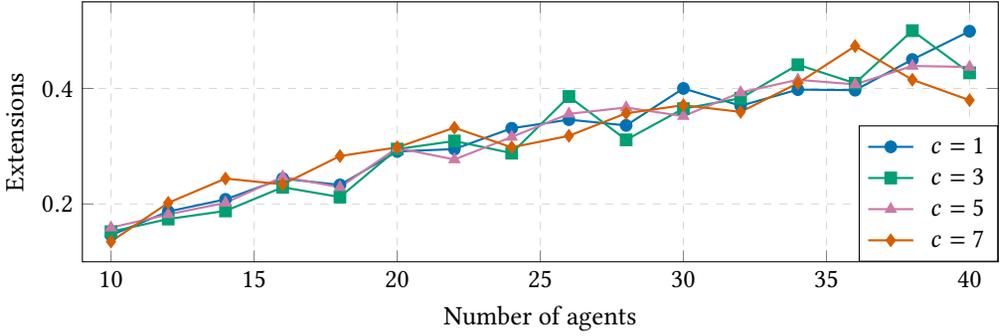

Figure \ref{fig:sizevextensions2} shows the same results but adjusted for solvable instances, i.e., removing all 0 entries from the averages. Still, the absolute number of capacity adjustments is small on average for all four capacity functions. The maximum number of capacity extensions we found among all generated instances is 4. Simultaneously, this implies that, on average, very few blocking entries need to be accepted.

\begin{figure}[!htb]
    \centering
    \begin{tikzpicture}
        \begin{axis}[
            width=.99\textwidth,
            height=5cm,
            ylabel={Extensions},
            ymin=1,
            ymax=2.1,
            xmin=9,
            xlabel={Number of agents},
            xmax=41,
            grid=both,
            grid style={dashed, gray!30},
            cycle list name=color list,
            every axis plot/.append style={thick},
            title={},
            title style={
                yshift=-1.5ex
            },
            legend cell align={left},
            legend style={
                at={(1,0)},
                anchor=south east,
                column sep=1ex,
            },
        ]
        \addplot[acmDarkBlue, mark=*] table [x=n, y=c1] {data/capext2.txt};
        \addplot[acmGreen, mark=square*] table [x=n, y=c2] {data/capext2.txt};
        \addplot[acmPink, mark=triangle*] table [x=n, y=c3] {data/capext2.txt};
        \addplot[acmOrange, mark=diamond*] table [x=n, y=c4] {data/capext2.txt};
        \addlegendentry{$c=1$}
        \addlegendentry{$c=3$}
        \addlegendentry{$c=5$}
        \addlegendentry{$c=7$}
        \end{axis}
    \end{tikzpicture}
    \caption{Average number of capacity extensions for unsolvable instances}
    \label{fig:sizevextensions2}
\end{figure}
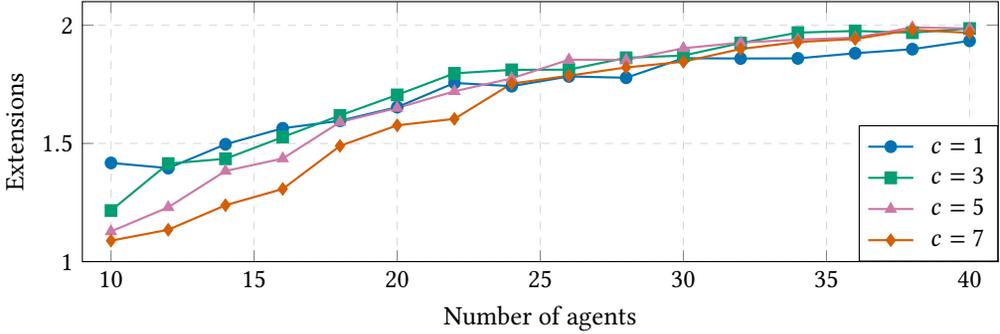

Next, we report on minimum instability when capacity increases are not permitted; these are precisely the {\sf NP-hard} cases we identified in Section \ref{sec:midimadic} and designed exponential-time algorithms for in Section \ref{sec:exact}. For our experiments, we used the ILP models outlined in Section \ref{sec:ilp}. We measured the minimum number of blocking entries in total and the minimum maximum number of blocking entries per agent across a range of different sizes and capacity functions. Surprisingly, all generated instances admitted a matching with either 0 or 1 blocking pairs for every agent, i.e., here we could guarantee a MIDI$_{\{c\}}$ solution requiring at most one blocking entry per agent. However, the aggregate guarantees of these matchings are much weaker -- Figure \ref{fig:bpind} shows that these MIDI$_{\{c\}}$ matchings admit up to 4 blocking pairs on average (when $n=40$ and $c=7$), and likely more when either $n$ or $c$ is larger.

\begin{figure}[!htb]
    \centering
    \begin{tikzpicture}
        \begin{axis}[
            width=.99\textwidth,
            height=5cm,
            ylabel={Blocking pairs},
            ymin=1,
            ymax=4,
            xmin=9,
            xlabel={Number of agents},
            xmax=41,
            grid=both,
            grid style={dashed, gray!30},
            cycle list name=color list,
            every axis plot/.append style={thick},
            title={},
            title style={
                yshift=-1.5ex
            },
            legend cell align={left},
            legend style={
                at={(1,0)},
                anchor=south east,
                column sep=1ex,
            },
        ]
        \addplot[acmDarkBlue, mark=*] table [x=n, y=c1] {data/midibp.txt};
        \addplot[acmGreen, mark=square*] table [x=n, y=c2] {data/midibp.txt};
        \addplot[acmPink, mark=triangle*] table [x=n, y=c3] {data/midibp.txt};
        \addplot[acmOrange, mark=diamond*] table [x=n, y=c4] {data/midibp.txt};
        \addlegendentry{$c=1$}
        \addlegendentry{$c=3$}
        \addlegendentry{$c=5$}
        \addlegendentry{$c=7$}
        \end{axis}
    \end{tikzpicture}
    \caption{Average aggregate number of blocking pairs of MIDI$_{\{c\}}$ matchings for unsolvable instances}
    \label{fig:bpind}
\end{figure}
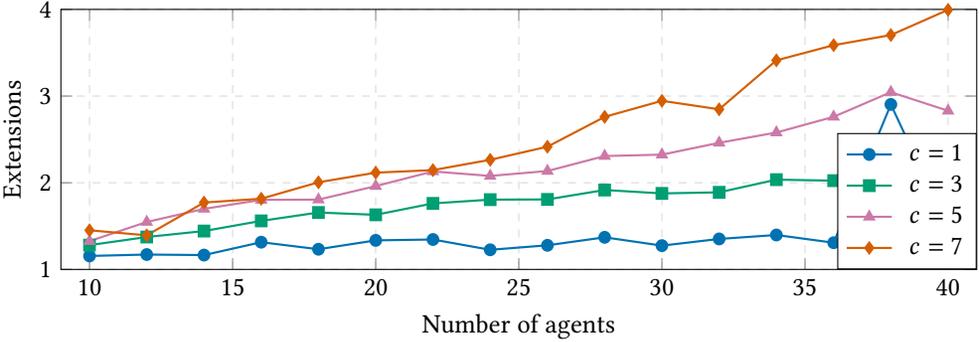

On the other hand, our results for MADI$_{\{c\}}$ matchings show that the average aggregate number of blocking pairs required is no more than 1.003 for any capacity function (averaged across all values for $n$ but restricted to unsolvable instances, i.e., requiring at least one blocking pair), so it must be possible in almost all cases to match the agents in a way such that only two agents are in one blocking pair (or vice versa one blocking entry), and all others are in none. We note that none of the instances generated required more than two blocking pairs in total.

Overall, these experiments suggest that random {\sc sf} instances require either very few capacity modifications or very few blocking entries, which is positive news for practical applications as well as the applicability of the {\sf XP} Algorithm \ref{alg:minbp}, which runs efficiently whenever the optimal value is small.

\section{Conclusion and Future Directions}
\label{sec:conclusion}

We studied the deviation incentives and complexity of near-feasible stable matchings and their computation in the non-bipartite, capacitated stable matching setting. Our main takeaways can be summarised as follows:
\begin{itemize}[leftmargin=1.5em]
    \item a well-balanced alternative capacity function leading to a solvable {\sc sf} instance that requires at most one capacity modification per agent and simultaneously minimises the total capacity modifications always exists and can be computed in polynomial time;
    \item a well-balanced matching that requires at most one blocking entry per agent and simultaneously minimises the total deviation incentive always exists and can be computed in polynomial time. Furthermore, this solution even satisfies the principles of minimal individual and minimal aggregate capacity modification;
    \item from the perspective of deviation incentives, increasing agent capacities is always weakly better than decreasing;
    \item when original agent capacities must not be violated, then minimising blocking entries is equivalent to minimising blocking pairs, which is known to be {\sf NP-hard} even in the uncapacitated agent setting;
    \item random experiments suggest that both the minimum number of capacity modifications required to arrive at a solvable instance and the minimum instability that needs to be accepted when capacities cannot be modified are small in practice.
\end{itemize}
Table \ref{table:results} gives a more concise overview of the computational complexity results we established.

\begin{table*}[!tbh]
    \centering
    \caption{Complexity of computing MIDI or MADI matchings with respect to various types of capacity restrictions or modifications. The {\sf XP} results is with respect to the optimum parameter value.}
    \small
    \begin{tabular}{c c c c c c c c c}
        \toprule
        & \multicolumn{2}{c}{} & \multicolumn{2}{c}{\textbf{MIM}} & \multicolumn{2}{c}{\textbf{MAM}} \\
        \cmidrule(lr){4-5} \cmidrule(lr){6-7}
        & \textbf{$\{c\}$} & \textbf{$\infty$} 
        & \textbf{MIM$_\text{+}$} & \textbf{MIM$_{\pm}$}
        & \textbf{MAM$_\text{+}$} & \textbf{MAM$_{\pm}$} \\
        \midrule
        \textbf{MIDI} & {\sf NP-hard} [T \ref{thm:midichard}]  & {\sf P} [P \ref{prop:nearfeasiblemidi}] & {\sf P} [P \ref{prop:nearfeasiblemidi}] & {\sf P} [P \ref{prop:nearfeasiblemidi}] & {\sf P} [P \ref{prop:nearfeasiblemidi}] & {\sf P} [P \ref{prop:nearfeasiblemidi}] \\
        \textbf{MADI} & {\sf NP-hard} [T \ref{thm:madichard}], {\sf XP} [T \ref{thm:xp}] & {\sf P} [P \ref{prop:nearfeasiblemadi}] & {\sf P} [P \ref{prop:nearfeasiblemadi}] & {\sf P} [P \ref{prop:nearfeasiblemadi}] & {\sf P} [P \ref{prop:nearfeasiblemadi}] & {\sf P} [P \ref{prop:nearfeasiblemadi}] \\
        \bottomrule
    \end{tabular}
    \label{table:results}
\end{table*}

\subsection{Limitations and More Complex Agent Incentive Models}
\label{sec:alternative}
 
We motivated the notions MIDI and MADI as objectives that minimise deviation incentives of agents with respect to matchings. 
However, one of our central assumptions was that agent incentives are uniform in essentially every way: our main tool, the notion of a blocking entry, measures an agents' incentive to deviate the same regardless of whether they had a small or large original capacity, whether they are assigned more agents than desired or they would rather be matched to more agents, and we also do not consider how high in the ranking the blocking entries are. 

To illustrate one of these issues, consider the following scenario: let $I$ be an {\sc sf} instance such that any GSP of $I$ contains a cycle $C$ of odd length at least 3. Let $a_i,a_j$ be two distinct agents that are both present in $C$ and suppose that $c_i\gg c_j$. Our previously studied versions of the \texttt{NearFeasible$_\pm$} algorithm that we showed to perform optimally with respect to computing a MIM or a MAM could pick either of $a_i,a_j$ (or any other agent present in $C$) to change their capacity. In fact, with regard to the blocking entries admitted by the computed solution, the choice between $a_i$ and $a_j$ might not make a difference. However, from an agent incentive perspective, it would seem much more natural to change the capacity $c_i$ rather than $c_j$ because the \emph{marginal} change is much lower.

To illustrate another of the mentioned issues, which is inherent to the definitions of MIDI and MADI, consider the \emph{uniform treatment} of blocking entries. In practice, an agent's incentive to deviate if they could be matched to their first choice instead of their last choice is arguably much higher than the incentive to deviate if they could be matched to someone far down their preference list instead. This is especially true when considering that a deviation can cause subsequent unravelling, so an agent might risk not just disadvantaging others by deviating, but may also risk losing other good matches that they are assigned to.

These issues motivate more complex agent incentive models. For example, one could imagine that each agent $a_i\in A$ is equipped with a cost function that indicates the cost of any given preference list entry being a blocking entry (either of type (i) or of type (ii)), and the objective could become to minimise the global or individual sum of blocking entry costs. A simpler version that does not specifically depend on which entry is blocking would be to simply weight blocking entries that stem from blocking pairs differently to blocking entries that stem from capacity violations. However, even the most natural problem of minimising this weighted cost either globally or per agent leads to strong computational intractability. In particular, let the blocking entry cost of every agent be $\alpha\in\mathbb R_+$ times the number of blocking entries that stem from a blocking pair plus $\beta\in\mathbb R_+$ times the number of blocking entries that stem from a capacity violation. Then for any $\alpha$ and $\beta$ such that $\alpha<\beta$, minimising the maximum individual blocking entry cost is {\sf NP-hard}, because it would let us decide whether there exists a matching where no agent is in more than one blocking pair (this is the case when the optimum value is at most $\alpha$) or whether every matching requires at least one agent to be in more than one blocking pair (this is the case when the optimum value is strictly more than $\alpha$). This follows from a similar simple reduction as in the proof of Theorem \ref{thm:midichard}. Similarly, minimising the aggregate blocking entry cost is {\sf NP-hard} in general, as for any given $\alpha$ we can simply pick $\beta\geq n^2\alpha$, in which case the matching that minimises the blocking entry cost is a matching without capacity violations that minimises the number of blocking pairs (because every matching admits at most $n(n-1)$ blocking pairs), which is an {\sf NP-hard} problem (see the proof of Theorem \ref{thm:madichard}).

\subsection{Future Research Directions}
Several intriguing directions for further research remain. We introduced new optimality measures for matchings in the many-to-many setting; it would be interesting to evaluate how other existing approaches in many-to-one and many-to-many models perform under these criteria, and whether certain techniques systematically minimise agents' incentives to deviate. From an economic perspective, analysing real-world matching markets through the lens of blocking entries and deviation incentives could yield further insights. On the algorithmic side, identifying special cases where MIDI$_{\{c\}}$ and MADI$_{\{c\}}$ matchings can be computed efficiently remains an open question, though strong intractability results from almost-stable matchings suggest fundamental limitations. On the experimental side, considering more diverse statistical distributions in the preference generation (such as those outlined by Boehmer et al. \cite{boehmer2024map}), or using real-world data, would be interesting. Finally, identifying more complex agent incentive models where general computational tractability can be achieved is a very interesting avenue for future research.

\paragraph{Funding}
Frederik Glitzner is supported by a Minerva Scholarship from the School of Computing Science, University of Glasgow. 


\paragraph{Acknowledgements}
The author would like to thank David Manlove for very helpful comments and discussions, and the anonymous AAMAS and MATCH-UP reviewers for helpful comments and suggestions.


\bibliographystyle{reference_format}
\bibliography{papers}

\clearpage
\appendix 

\section{Algorithms}
\label{sec:appendixAlgos}

In this section, we provide the algorithms {\sf NearFeasible}$_\pm$, {\sf NearFeasible}$_+$ and {\sf NearFeasible}$_-$ in full, but they can be derived from \cite{glitzner2025unsolvabilitymanytomanynonbipartitestable} as described in the main body of the paper.

\subsection{Algorithm {\sf NearFeasible}$_\pm$}

Pseudocode for {\sf NearFeasible}$_\pm$ is provided in Algorithm \ref{alg:constructnearfeasibleplusminus}. The procedure works as follows: Given an {\sc sf} instance $I=(A,\succ,c)$ and a GSP $\Pi$ of $I$, we decompose all cycles of length longer than 2 into transpositions, turn all transpositions into matches, and alternatively increasing and decreasing the capacity of one agent in each cycle of odd length at least 3 (we denote the set of such cycles by $\mathcal{O}_I^{\geq 3}$ and denote the set of modified agents by $O$). This ensures that we increase the capacity of $\left\lceil\frac{\vert O\vert}{2}\right\rceil$ agents by 1 and decrease the capacity of the remaining agents of $O$ by 1 to arrive at $I'=(A,\succ,c')$.

\begin{algorithm}[hbt]
\renewcommand{\algorithmicrequire}{\textbf{Input:}}
\renewcommand{\algorithmicensure}{\textbf{Output:}}

    \begin{algorithmic}[1]

    \Require{$I=(A,\succ,c)$ : an unsolvable {\sc sf} instance; $\Pi$ : a GSP1 of $I$}
    \Ensure{$c'$ : a new capacity function; $M$ : a stable matching in $I'=(A,\succ,c)$; $O$ : the set of agents with modified capacities in $c'$}

    \State $M \gets \varnothing$
    \State $i \gets 0$
    \State $c'\gets c$
    \State $O\gets \varnothing$

    \For{cycle $\Pi_k\in \Pi$}
        \If{$\vert A_k\vert=2$} \Comment{Add the transpositions as matches}
            \State $M$.add$(\{A_k[0], A_k[1]\})$
        \EndIf

        \State $m\gets\vert A_k\vert$
        
        \If{$m>2$} 
            \If{$m$ is even} \Comment{Decompose longer even-length cycles}
                \State $M$.add$(\{A_k[0], A_k[1]\},\{A_k[2], A_k[3]\}\dots\{A_k[m-2], A_k[m-1]\})$
            \Else 
                \If{$i$ is even} \Comment{Decompose longer odd-length cycles by increasing capacity}
                    \State $M$.add$(\{A_k[0], A_k[1]\},\{A_k[2], A_k[3]\}\dots\{A_k[m-1], A_k[0]\})$
                    \State $c'_{A_k[0]}=c_{A_k[0]}+1$
                    \State $O$.add$(A_k[0])$
                \Else \Comment{Decompose longer odd-length cycles by decreasing capacity}
                    \State $M$.add$(\{A_k[1], A_k[2]\},\{A_k[3], A_k[4]\}\dots\{A_k[m-2], A_k[m-1]\})$
                    \State $c'_{A_k[0]}=c_{A_k[0]}-1$
                    \State $O$.add$(A_k[0])$
                \EndIf
                \State $i \gets i+1$
            \EndIf
        \EndIf
    \EndFor
    \State\Return{$c', M,O$}

    \end{algorithmic}
    \caption{Constructs a near-feasible solvable instance $I'$ and a corresponding stable matching $M$ from a GSP $\Pi$ of $I$}
    \label{alg:constructnearfeasibleplusminus}
\end{algorithm}

Glitzner and Manlove \cite{glitzner25sagt} proved the following correctness result (in Theorem 15 of their paper).

\begin{theorem}
    Let $I=(A,\succ,c)$ be an unsolvable {\sc sf} instance with $n\geq 3$ agents and let $\Pi$ be a GSP of $I$. Then Algorithm \ref{alg:constructnearfeasibleplusminus} finds a modified instance $I'=(A,\succ,c')$ and a stable matching $M$ of $I'$ in $O(n^2)$ time ($O(n^3$ if $\Pi$ is initially unknown) such that for all $a_i\in A$, we have that $c_i'\in\{c_i-1, c_i,c_i+1\}$, and $\sum_{a_i\in A}(c_i'-c_i)=\vert\mathcal{O}_I^{\geq3}\vert\mod 2\,\leq 1$. Furthermore, $\sum_{a_i\in A}\vert c_i'-c_i\vert =  \vert\mathcal{O}_I^{\geq3}\vert\leq \frac{n}{3}$, where $\mathcal{O}_I^{\geq3}$ are the cycles of $\Pi$ with odd length at least 3.
\end{theorem}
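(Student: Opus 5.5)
The plan is to build the output of Algorithm \ref{alg:constructnearfeasibleplusminus} directly from the GSP $\Pi$ and then check each claimed property. Start with the reduced GSP $\Pi$ of $I$ from Theorem \ref{thm:gsp}, which takes $O(n^3)$ time if it is not given. The algorithm then decomposes every cycle of length at least 3 into transpositions. Even cycles are split into consecutive pairs. In each odd cycle $(a_0\,a_1\dots a_{m-1})$, a single designated agent $a_0$ absorbs the parity defect: either $a_0$ is paired with both neighbours, and $c_{a_0}$ increases by 1, or $a_0$ is dropped entirely, and $c_{a_0}$ decreases by 1. These two choices alternate across the odd cycles.

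The bookkeeping claims come first, since they are routine. Each odd cycle of length at least 3 contributes exactly one modified agent. By the fact already cited from \cite{glitzner2025unsolvabilitymanytomanynonbipartitestable}, no agent lies in two such cycles, so modified agents are distinct and $c'_i\in\{c_i-1,c_i,c_i+1\}$. It follows that $\sum_i\vert c'_i-c_i\vert=\vert\mathcal{O}_I^{\geq3}\vert$. The alternation gives $\lceil\vert O\vert/2\rceil$ increases and $\lfloor\vert O\vert/2\rfloor$ decreases, so the signed sum is $\vert\mathcal{O}_I^{\geq3}\vert \bmod 2$. Each such cycle uses at least 3 distinct agents, and these agent sets are disjoint, so $\vert\mathcal{O}_I^{\geq3}\vert\le n/3$.

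The runtime also follows from counting. The total length of all cycles is $\sum_i c_i\le n(n-1)$ by (F3), so the single pass over $\Pi$ takes $O(n^2)$ time.

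The substantive step is showing that the transposition collection $\Pi'$ is a GSP of $I'=(A,\succ,c')$. Once that holds, the fourth item of Theorem \ref{thm:gsp} makes $M$ stable in $I'$. For this I would invoke Lemma 3 of \cite{glitzner25sagt}, exactly as in case 1 of the proof of Theorem \ref{thm:mam}, applying it one odd cycle at a time. As a sanity check, I would still verify the four conditions directly.

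(F1) is trivial for transpositions.

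(F3) holds by counting. Each non-designated agent of a decomposed cycle appears in exactly one new transposition, replacing one cycle occurrence. The designated agent appears in two transpositions (one more occurrence) or in none (one fewer), matching $c'$.

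(F4) holds because each pair is adjacent in at most one cycle of $\Pi$.

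(F2) is the main obstacle. Decomposition replaces an agent's predecessor by either its predecessor or its successor in the cycle. By (F1) the successor is at least as good, so the worst "predecessor" of every agent weakly improves, and any violating pair for $\Pi'$ would also violate (F2) for $\Pi$.

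The delicate cases are these:
\begin{itemize}
\item a designated agent whose capacity dropped, and which now loses a cycle occurrence altogether;
\item pairs that were adjacent in a long cycle of $\Pi$ but are no longer a transposition in $\Pi'$.
\end{itemize}
For these cases I would rely on the cited lemma rather than re-derive them.
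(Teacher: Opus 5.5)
Your proposal is correct and follows essentially the same route as the paper. The paper does not reprove this result but attributes it to Theorem 15 of \cite{glitzner25sagt}, which rests on the same Lemma 3 you invoke to show that the decomposed collection is a GSP of $I'$. Your counting for the capacity bounds, the signed and absolute sums, the $n/3$ bound and the $O(n^2)$ running time is sound. Also, the two cases you flag as delicate are already covered by your own monotonicity argument for (F2). Dropping a cycle occurrence only shrinks an agent's set of predecessors. Pairs adjacent in a long cycle of $\Pi$ were not transpositions of $\Pi$, so (F2) for $\Pi$ already constrained them.
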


\subsection{Algorithm {\sf NearFeasible}$_+$}

Pseudocode for {\sf NearFeasible}$_+$ is provided in Algorithm \ref{alg:constructnearfeasibleplus}. The procedure differs from {\sf NearFeasible}$_\pm$ as follows: We no longer keep track of $i$ and, instead, 
always decompose longer odd-length cycles by increasing capacity. In other words, we always apply the ``if'' case in line 14 of Algorithm \ref{alg:constructnearfeasibleplusminus}.

\begin{algorithm}[hbt]
\renewcommand{\algorithmicrequire}{\textbf{Input:}}
\renewcommand{\algorithmicensure}{\textbf{Output:}}

    \begin{algorithmic}[1]

    \Require{$I=(A,\succ,c)$ : an unsolvable {\sc sf} instance; $\Pi$ : a GSP1 of $I$}
    \Ensure{$c'$ : a new capacity function; $M$ : a stable matching in $I'=(A,\succ,c)$; $O$ : the set of agents with modified capacities in $c'$}

    \State $M \gets \varnothing$
    \State $c'\gets c$
    \State $O\gets \varnothing$

    \For{cycle $\Pi_k\in \Pi$}
        \If{$\vert A_k\vert=2$} \Comment{Add the transpositions as matches}
            \State $M$.add$(\{A_k[0], A_k[1]\})$
        \EndIf

        \State $m\gets\vert A_k\vert$
        
        \If{$m>2$} 
            \If{$m$ is even} \Comment{Decompose longer even-length cycles}
                \State $M$.add$(\{A_k[0], A_k[1]\},\{A_k[2], A_k[3]\}\dots\{A_k[m-2], A_k[m-1]\})$
            \Else \Comment{Decompose longer odd-length cycles by increasing capacity}
                \State $M$.add$(\{A_k[0], A_k[1]\},\{A_k[2], A_k[3]\}\dots\{A_k[m-1], A_k[0]\})$
                \State $c'_{A_k[0]}=c_{A_k[0]}+1$
                \State $O$.add$(A_k[0])$
            \EndIf
        \EndIf
    \EndFor
    \State\Return{$c', M,O$}

    \end{algorithmic}
    \caption{Constructs a near-feasible solvable instance $I'$ and a corresponding stable matching $M$ from a GSP $\Pi$ of $I$}
    \label{alg:constructnearfeasibleplus}
\end{algorithm}

The following result follows easily from the proof of Theorem 15 in \cite{glitzner25sagt} using Lemma 3 of \cite{glitzner25sagt}.

\begin{theorem}
    Let $I=(A,\succ,c)$ be an unsolvable {\sc sf} instance with $n\geq 3$ agents and let $\Pi$ be a GSP of $I$. Then Algorithm \ref{alg:constructnearfeasibleplus} finds a modified instance $I'=(A,\succ,c')$ and a stable matching $M$ of $I'$ in $O(n^2)$ time ($O(n^3)$ if $\Pi$ is initially unknown) such that for all $a_i\in A$, we have that $c_i'\in\{c_i,c_i+1\}$, and $\sum_{a_i\in A}(c_i'-c_i)=\vert\mathcal{O}_I^{\geq3}\vert$. Furthermore, $\sum_{a_i\in A}\vert c_i'-c_i\vert =  \vert\mathcal{O}_I^{\geq3}\vert\leq \frac{n}{3}$, where $\mathcal{O}_I^{\geq3}$ are the cycles of $\Pi$ with odd length at least 3.
\end{theorem}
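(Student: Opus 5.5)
The statement to prove is the final appendix theorem on \texttt{NearFeasible$_+$} (Algorithm \ref{alg:constructnearfeasibleplus}). I would follow the proof of Theorem 15 of \cite{glitzner25sagt} for \texttt{NearFeasible$_\pm$}, changing only the treatment of odd cycles. The argument has three parts: a capacity bookkeeping claim, a stability claim, and the counting and runtime bounds.

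\textbf{Capacities.} I would first show that $c_i'\in\{c_i,c_i+1\}$ for every agent and that $\sum_{a_i\in A}(c_i'-c_i)=\vert\mathcal{O}_I^{\geq3}\vert$. The algorithm increases the capacity of exactly one agent $A_k[0]$ per cycle in $\mathcal{O}_I^{\geq3}$, so it suffices that no agent lies in two such cycles. This was established in \cite{glitzner2025unsolvabilitymanytomanynonbipartitestable} and is quoted in the proof of Theorem \ref{thm:mam}. Then $\sum_{a_i\in A}\vert c_i'-c_i\vert=\vert\mathcal{O}_I^{\geq3}\vert$. Since the cycles in $\mathcal{O}_I^{\geq3}$ are pairwise agent-disjoint and each has at least $3$ agents, $\vert\mathcal{O}_I^{\geq3}\vert\leq n/3$.

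\textbf{Stability (the main step).} I would show that $M$ is a stable matching of $I'$ by exhibiting a GSP $\Pi'$ of $I'$ with no cycles longer than $2$ whose transpositions are exactly $M$; the fourth item of Theorem \ref{thm:gsp} then gives stability. To build $\Pi'$:
\begin{itemize}
\item Each even cycle of length at least $4$ is replaced by the transpositions $(A_k[0]\;A_k[1])(A_k[2]\;A_k[3])\dots$.
\item Each odd cycle $(a_1\;a_2\dots a_m)$ with $m\geq 3$ is replaced by $(a_1\;a_2)(a_3\;a_4)\dots(a_m\;a_1)$, so that $a_1$ appears one extra time.
\end{itemize}
Lemma 3 of \cite{glitzner25sagt}, as used in case 1 of the proof of Theorem \ref{thm:mam}, says that such a single-cycle replacement together with a $+1$ capacity change at $a_1$ yields a GSP of the modified instance. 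I would apply it iteratively, one odd cycle at a time. The point to check is that after processing some cycles, the remaining odd cycles are still cycles of the current GSP of the current instance. This holds because each replacement touches only its own cycle and capacities of agents outside that cycle are unchanged.

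The even-cycle decomposition also preserves (F1)--(F4), by the standard reduction used to obtain reduced GSPs. For (F2) in particular, an agent's worst predecessor can only improve or stay the same. The hardest verification is (F2) under the decomposition: in $(a_1\;a_2)\dots(a_m\;a_1)$, agent $a_1$'s worst predecessor becomes $a_2$ or $a_m$, whichever is worse. I would argue that this is exactly the content of Lemma 3 and rely on it rather than reprove it.

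\textbf{Runtime.} Given $\Pi$, the algorithm performs a single pass over the cycles. The GSP has $O(n^2)$ total size by (F3) and (F4), so the pass takes $O(n^2)$ time. Computing $\Pi$ when it is not given costs $O(n^3)$ by Theorem \ref{thm:gsp}.
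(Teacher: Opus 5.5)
Your proposal is correct and takes the same route as the paper. The paper simply defers to the proof of Theorem 15 in \cite{glitzner25sagt} together with its Lemma 3, which is what you do: capacity bookkeeping via agent-disjointness of the cycles in $\mathcal{O}_I^{\geq3}$, stability via iterated single-cycle replacements yielding a GSP of $I'$ with only transpositions and fixed points, and the running time from the $O(n^2)$ total size of a GSP.

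As a minor remark, the (F2) check you flag as hardest is immediate: (F1) gives $a_2\succeq_1 a_m$, so $a_1$'s worst predecessor stays $a_m$, and every other agent's predecessor only weakly improves.
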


\subsection{Algorithm {\sf NearFeasible}$_-$}

Pseudocode for {\sf NearFeasible}$_-$ is provided in Algorithm \ref{alg:constructnearfeasibleminus}. The procedure differs from {\sf NearFeasible}$_+$ as follows: Instead of always decomposing longer odd-length cycles by increasing capacity, we always decompose longer odd-length cycles by decreasing capacity. In other words, we always apply the ``if'' case in line 18 of Algorithm \ref{alg:constructnearfeasibleplusminus}.

\begin{algorithm}[hbt]
\renewcommand{\algorithmicrequire}{\textbf{Input:}}
\renewcommand{\algorithmicensure}{\textbf{Output:}}

    \begin{algorithmic}[1]

    \Require{$I=(A,\succ,c)$ : an unsolvable {\sc sf} instance; $\Pi$ : a GSP1 of $I$}
    \Ensure{$c'$ : a new capacity function; $M$ : a stable matching in $I'=(A,\succ,c)$; $O$ : the set of agents with modified capacities in $c'$}

    \State $M \gets \varnothing$
    \State $c'\gets c$
    \State $O\gets \varnothing$

    \For{cycle $\Pi_k\in \Pi$}
        \If{$\vert A_k\vert=2$} \Comment{Add the transpositions as matches}
            \State $M$.add$(\{A_k[0], A_k[1]\})$
        \EndIf

        \State $m\gets\vert A_k\vert$
        
        \If{$m>2$} 
            \If{$m$ is even} \Comment{Decompose longer even-length cycles}
                \State $M$.add$(\{A_k[0], A_k[1]\},\{A_k[2], A_k[3]\}\dots\{A_k[m-2], A_k[m-1]\})$
            \Else \Comment{Decompose longer odd-length cycles by decreasing capacity}
                \State $M$.add$(\{A_k[1], A_k[2]\},\{A_k[3], A_k[4]\}\dots\{A_k[m-2], A_k[m-1]\})$
                \State $c'_{A_k[0]}=c_{A_k[0]}-1$
                \State $O$.add$(A_k[0])$
            \EndIf
        \EndIf
    \EndFor
    \State\Return{$c', M,O$}

    \end{algorithmic}
    \caption{Constructs a near-feasible solvable instance $I'$ and a corresponding stable matching $M$ from a GSP $\Pi$ of $I$}
    \label{alg:constructnearfeasibleminus}
\end{algorithm}

The following result again follows easily from the proof of Theorem 15 in \cite{glitzner25sagt} using Lemma 3 of \cite{glitzner25sagt}.

\begin{theorem}
    Let $I=(A,\succ,c)$ be an unsolvable {\sc sf} instance with $n\geq 3$ agents and let $\Pi$ be a GSP of $I$. Then Algorithm \ref{alg:constructnearfeasibleminus} finds a modified instance $I'=(A,\succ,c')$ and a stable matching $M$ of $I'$ in $O(n^2)$ time ($O(n^3)$ if $\Pi$ is initially unknown) such that for all $a_i\in A$, we have that $c_i'\in\{c_i,c_i-1\}$, and $\sum_{a_i\in A}(c_i'-c_i)=-\vert\mathcal{O}_I^{\geq3}\vert$. Furthermore, $\sum_{a_i\in A}\vert c_i'-c_i\vert =  \vert\mathcal{O}_I^{\geq3}\vert\leq \frac{n}{3}$, where $\mathcal{O}_I^{\geq3}$ are the cycles of $\Pi$ with odd length at least 3.
\end{theorem}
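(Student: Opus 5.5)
Since the theorem only asserts the structural guarantees of Algorithm \ref{alg:constructnearfeasibleminus}, I would mirror the proof of Theorem 15 in \cite{glitzner25sagt}, changing only the treatment of odd cycles. Fix a reduced GSP $\Pi$ of $I$ (computable in $O(n^3)$ time by Theorem \ref{thm:gsp}). The running time is immediate: $\Pi$ has $O(n^2)$ cycle-slots in total by (F3) and the bound $c_i\le n-1$, and each cycle is handled in time linear in its length. For the capacity claims, I would use the fact cited from \cite{glitzner2025unsolvabilitymanytomanynonbipartitestable} that no agent lies in more than one cycle of $\mathcal{O}_I^{\geq 3}$. Hence the modified agents $A_k[0]$ are pairwise distinct, each is decreased exactly once, and $c_i'\in\{c_i-1,c_i\}$ with $\sum_{a_i}(c_i'-c_i)=-\vert\mathcal{O}_I^{\geq3}\vert$. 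Each such cycle has at least $3$ agents, and distinct odd cycles are disjoint in agents, so $\vert\mathcal{O}_I^{\geq3}\vert\le n/3$.

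The main step is stability of $M$ in $I'$. I would argue via GSPs. Let $\Pi'$ be obtained from $\Pi$ by replacing each odd cycle $C=(A_k[0]\;A_k[1]\dots A_k[m-1])$ with the transpositions $(A_k[1]\;A_k[2])\dots(A_k[m-2]\;A_k[m-1])$, and by splitting each even cycle of length greater than $2$ into transpositions. (If $\Pi$ is reduced, the latter case does not occur, but the general case follows from the standard splitting of even cycles in GSPs.) By Lemma 3 of \cite{glitzner25sagt}, this is exactly the situation in which removing one agent from an odd cycle and decreasing its capacity by $1$ yields a GSP of the modified instance. Applying the lemma once per odd cycle is legitimate because the cycles are agent-disjoint, so the modifications do not interact. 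Then $\Pi'$ is a GSP of $I'$ consisting only of transpositions. By the fourth item of Theorem \ref{thm:gsp}, its transpositions, which are exactly the pairs in $M$, form a stable matching of $I'$.

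The delicate point is checking that the iterated application of Lemma 3 preserves (F2) and (F3). For (F3), each $A_k[0]$ loses exactly one cycle while its capacity drops by one, and every other agent of $C$ trades one cycle membership for one transposition. For (F2), removing the edges $\{A_k[m-1],A_k[0]\}$ and $\{A_k[0],A_k[1]\}$ makes predecessors weakly better for every agent except $A_k[0]$. That agent has one fewer slot, so its worst predecessor in $\Pi'$ is weakly better than its worst predecessor in $\Pi$. Consequently no new pair satisfying the (F2) condition can appear, and the proof follows the inductive case analysis of the proof of Theorem \ref{thm:mam}, case 1.
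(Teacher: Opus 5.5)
Your proposal is correct and takes the same route as the paper, which simply defers to the proof of Theorem 15 in \cite{glitzner25sagt} together with Lemma 3 there. That is exactly your per-cycle step of decreasing one agent and splitting the rest of the odd cycle into transpositions (as in case 1 of the proof of Theorem \ref{thm:mam}), followed by reading off the stable matching from a transposition-only GSP of $I'$. Your explicit (F2)/(F3) check via weakly improving worst predecessors, and your use of agent-disjointness of the odd cycles for the capacity and $n/3$ bounds, only spell out what the paper leaves implicit; a fully self-contained check would also need, for (F4), the cited fact that two agents are adjacent in at most one cycle of a GSP.
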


\end{document}